\theoremstyle{plain}
\newtheorem{proposition}{Proposition}[section]
\newtheorem{theorem}[proposition]{Theorem}
\newtheorem{corollary}[proposition]{Corollary}
\newtheorem{lemma}[proposition]{Lemma}
\theoremstyle{definition}
\newtheorem{definition}[proposition]{Definition}
\newtheorem{remark}[proposition]{Remark}
\newtheorem{assumption}[proposition]{Assumption}
\title{\bf A data-based notion of quantiles on Hadamard spaces
	\medskip
}
\author[1]{Ha-Young Shin}
\author[2]{Hee-Seok Oh}
\affil[1]{
	Department of Statistics and Actuarial Science, Soongsil University
}
\affil[2]{
	Department of Statistics, Seoul National University
}
{
    \makeatletter
    \renewcommand\AB@affilsepx{: \protect\Affilfont}
    \makeatother

    \affil[ ]{Email}

    \makeatletter
    \renewcommand\AB@affilsepx{, \protect\Affilfont}
    \makeatother

    \affil[1]{hayoung.shin@gmail.com}
    \affil[2]{heeseok@stats.snu.ac.kr}
}
\begin{document}
	\maketitle
	
	\begin{abstract}
This paper defines an alternative notion, described as data-based, of geometric quantiles on Hadamard spaces, in contrast to the existing methodology, described as parameter-based. In addition to having the same desirable properties as parameter-based quantiles, these data-based quantiles are shown to have several theoretical advantages related to large-sample properties like strong consistency and asymptotic normality, breakdown points, extreme quantiles and the gradient of the loss function. Using simulations, we explore some other advantages of the data-based framework, including simpler computation and better adherence to the shape of the distribution, before performing experiments with real diffusion tensor imaging data lying on a manifold of symmetric positive definite matrices. These experiments illustrate some of the uses of these quantiles by testing the equivalence of the generating distributions of different data sets and measuring distributional characteristics.

		\noindent
		
		\vspace{\baselineskip}
		
		\noindent
		\textbf{Keywords}: Geometric quantile; Geometric statistics; Hadamard space; Manifold statistics.
		
	\end{abstract}
	
	\pagenumbering{arabic}

 \section{Introduction}

The mean and median of a random variable $X$ can be defined as the minimizer $p\in \mathbb{R}$ of the expected values of $\lvert X-p\rvert^2$ and $\lvert X-p\rvert$, respectively. Similarly, though quantiles were originally conceived of in relation to the ordering of data, they can also be defined as the minimizers of expected values of loss functions: $2|X-p|\{(1-\tau)I(X\leq p)+\tau I(X>p)\}$ for $\tau\in(0,1)$. When $\tau=1/2$, the loss is equal to the median loss. 

The natural generalizations of the mean and median loss functions to the multivariate case are $\lVert x-p\rVert^2$ and $\lVert x-p\rVert$, respectively. The generalization of the quantile loss function is less clear, but \cite{Chaudhuri1996} suggested
\begin{equation}\begin{aligned}\label{dbmulti}
    \lVert x-p\rVert+\langle u,x-p\rangle,
\end{aligned}\end{equation}
where $u\in B^n(1):=\{v|v\in\mathbb{R}^n,\lVert v\rVert<1\}$. This gives rise to a geometric notion of a quantile. When $u=0$, we get the geometric median loss function. By letting $u=2\tau-1$ when $M=\mathbb{R}$, this becomes $|x-p|+u(x-p)=2|x-p|\{(1-\tau)I(x\leq p)+\tau I(x>p)\}$, the univariate quantile loss function. Note that $(-1,1)=B^1(1)$.

Hadamard spaces, also called global non-positive curvature spaces or complete CAT(0) spaces, are a class of metric spaces that have recently been receiving interest among statisticians.%; see for example \cite{Kostenberger2023}, \cite{Sturm2002}, \cite{Sturm2003}, \cite{Yun2023} and \cite{Zhang2016}. 
\cite{Sturm2002} developed a theory of non-linear martingales and \cite{Sturm2003} studied probability theory on such spaces. \cite{Yun2023} generalized the notion of the median-of-means to these spaces and derived concentration inequalities for the median-of-means as an estimator of the population Fr\'echet mean, while \cite{Kostenberger2023} proved a strong law of large numbers for random elements with independent but not necessarily identically distributed values in Hadamard space under very weak conditions and applied this to the problem of robust signal recovery. \cite{Zhang2016} analyzed first-order algorithms for convex optimization on Hadamard manifolds.

Examples of such spaces include inner product spaces, Euclidean buildings and tree spaces. Hadamard spaces that are also Riemannian manifolds are called Hadamard manifolds, and can equivalently be characterized as complete, simply connected Riemannian manifolds of non-positive sectional curvature. Common examples of Hadamard manifolds include the spaces of symmetric positive definite matrices, which are useful in covariance matrix analysis and diffusion tensor imaging (see for example \cite{Zhu2009}), and hyperbolic spaces. Hyperbolic spaces in particular are becoming increasingly popular among machine learning researchers as they are natural homes for hierarchical data; see for example \cite{Weber2020}. An Hadamard space $M$ is uniquely geodesic (that is, there exists a unique unit-speed minimal geodesic between any two points in $M$), and crucially for our purposes, it's so-called boundary at infinity $\partial M$ can be used to define a canonical notion of direction on the space. In brief, $\partial M$ is a set of equivalence classes of asymptotic unit-speed geodesic rays, and each equivalence class $\xi\in\partial M$ contains exactly one ray $\gamma_p:[0,\infty)\rightarrow M$ issuing from each point $p$ in $M$; then $\gamma_p$ can be thought of as pointing in the direction of $\xi$ from $p$. Also, angles, called Alexandrov angles, can be defined between geodesics in $M$. Refer to Section 2 of \cite{Shin2023} for the relevant geometric background on metric spaces, geodesics and Hadamard spaces and manifolds needed for the current paper. 

\cite{Shin2023} further generalized geometric quantiles to Hadamard spaces by defining loss functions indexed by $(\beta,\xi)\in[0,1)\times\partial M$: $d(p,x)+\beta d(p,x)\cos(\angle_p(x,\xi))$ when $p\neq x$ and 0 when $p=x$. Here $\angle_p(x,\xi)$ is the Alexandrov angle between the unique geodesic ray issuing from $p$ in the direction of $\xi$ and the unique unit-speed geodesic joining $p$ to $x$; the $x=p$ case needs to be defined separately because this angle is not defined in that case. On Hadamard manifolds, this loss function becomes 
\begin{equation}\begin{aligned}\label{dboriginal}
\lVert \log_p(x)\rVert+\langle\beta\xi_p,\log_p(x)\rangle,
\end{aligned}\end{equation}
where $\xi_p$ is the unique tangent vector in $T_pM$ in the direction of $\xi$. Making the natural identification between $\partial\mathbb{R}^n$ and $S^{n-1}$ and letting $u=\beta\xi$, or equivalently $\beta=\lVert u\rVert$ and $\xi=u/\lVert u\rVert$, this loss function is identical to (\ref{dbmulti}) when $M=\mathbb{R}^n$.

The loss functions in (\ref{dbmulti}) and (\ref{dboriginal}) can be thought of as being calculated using two tangent vectors at $p$: $\beta\xi_p$ and $\log_p(x)$, which becomes $u$ and $x-p$, respectively, in the Euclidean case. In the general Hadamard space case, the angle in the loss function is measured at $p$. Because the loss function is calculated at the potential quantile $p$ in this way, we may call this the \textit{parameter-based} perspective. But notice that (\ref{dbmulti}) equals
\begin{equation}\begin{aligned}\label{dbdata}
    \lVert p-x\rVert-\langle u,p-x\rangle,
\end{aligned}\end{equation}
and thus can be calculated using tangent vectors $u, p-x$ in $T_xM$ instead of $T_pM$. Then there are natural extensions to Hadamard spaces and manifolds, which will be explored throughout this paper.
  Because the tangent vectors in (\ref{dbdata}) are located at the potential data point $x$, we may call this the \textit{data-based} perspective. Though the two perspectives are of course equivalent in the Euclidean case, this is not so in general on Hadamard spaces.

Conceptually, the parameter-based approach feels natural. Indeed, the original formulation of the multivariate quantile loss function by \cite{Chaudhuri1996} uses this approach, and so did the original paper on quantiles on Hadamard spaces by \cite{Shin2023}. However, as will be seen, we are able to do much more with data-based quantiles theoretically and practically than parameter-based quantiles, resulting in a paper that goes far beyond \cite{Shin2023}.

The uses of geometric quantiles in the multivariate setting can be generalized to our data-based quantiles. Quantiles reveal the overall structure of a distribution at a granular level beyond what the mean and median, which only indicate the center, can offer. This logic can be extended to quantile regression, dealt with by \cite{Chakraborty2003} in the multivariate context, which reveals the structure of the relationship between variables in finer detail than ordinary least squares regression can. Quantiles can also be used to test whether different data sets are generated by the same distribution, for example using permutation tests, in cases where they are indistinguishable using the mean or median, as they offer more points of comparison, and in fact \cite{Koltchinskii1997} showed that multivariate distributions are uniquely defined by their geometric quantiles. One can also define a notion of ranks, closely related to quantiles, which take values in tangent spaces, and use them for tests of location. Isoquantile contours can be used for outlier detection after a transformation--retransformation procedure, as demonstrated in \cite{Chaouch2010} for multivariate geometric quantiles and \cite{Shin2023} for parameter-based quantiles on Hadamard manifolds. \cite{ShinOh2024} defined geometric quantile-based measures of skewness, dispersion, kurtosis and spherical asymmetry for multivariate distributions. See Section 6 of \cite{Shin2023} for more details of these applications as applied to parameter-based quantiles.

Section \ref{dbdef} generalizes (\ref{dbdata}) to define data-based quantiles on Hadamard spaces and gives some basic properties, including some of the benefits of the data-based perspective. In Section \ref{dbasymp}, we demonstrate the advantages of data-based quantiles in terms of the broader applicability of large-sample properties like strong consistency, and use a new method to prove a stronger theorem of joint asymptotic normality. Notwithstanding the significant differences thus far, the rest of the paper represents a complete break from \cite{Shin2023}. Section \ref{dbrobustness} and Section \ref{dbextreme} show some intuitive properties of breakdown points and extreme data-based quantiles, respectively. Properties related to the gradient of the data-based quantile loss function on Hadamard manifolds, including an explicit expression for the gradient on locally symmetric Hadamard spaces, are explored in Section \ref{dbhyp}. Section \ref{dbsims} contains experiments with simulated data in 2-dimensional hyperbolic space, in which we illustrate a method for simpler computation of data-based quantiles and demonstrate one of the applications mentioned in the previous paragraph---testing the equivalence distributions underlying different data sets with quantiles and permutations tests. Section \ref{dbreal} contains experiments with real diffusion tensor imaging data on the space of $3\times 3$ symmetric positive definite matrices, with which we demonstrate another of the applications mentioned in the previous paragraph---measuring distributional characteristics. Finally, we close the paper with a discussion in Section \ref{dbdiscussion}.

%We follow the convention in \cite{Bridson1999} of narrowly defining a geodesic to be a unit-speed minimal geodesic, unless otherwise stated. In particular, in Section \ref{dbhyp} and its associated proof in the appendices, the term will be used in its more traditional, differential geometric sense to refer to all geodesics regardless of speed.

\section{Definition and basic properties} \label{dbdef}
	
	Given an Hadamard space $(M,d)$ equipped with its metric topology and Borel $\sigma$-algebra $\mathcal{B}$, let $X$ be a random element in $M$, that is, a measurable map from some probability space $(\Omega,\mathcal{F},P)$ into $(M,\mathcal{B})$. For $\beta\in[0,\infty)$ and $\xi\in\partial M$, define
 \begin{equation} \label{dbexpect}
		G^{\beta,\xi}(p)=E[\rho(X,p;\beta,\xi)],
		\end{equation}
		where
		\begin{equation} \label{dbphi}
		\rho(x,p;\beta,\xi)=d(p,x)-\beta d(p,x)\cos(\angle_x(p,\xi))
  \end{equation}
		and $\angle_x(p,\xi)$ is the Alexandrov angle at $x$ between the unique geodesic from $x$ to $p$ and the geodesic ray that is the unique member of $\xi$ issuing from $x$. We adopt the convention throughout this paper that $f(y)g(z)=0$ if a function $f$ is 0 at $y$ and a function $g$ is infinite or undefined at $z$, so that when $x=p$, $\rho(x,p;\beta,\xi)=d(p,x)=0$ even though $\angle_p(x,\xi)$ is not defined. When $M$ is an Hadamard manifold, (\ref{dbphi}) becomes
	\begin{equation}\begin{aligned} \label{dbriemquantile}
	\rho(x,p;\beta,\xi)=\lVert \log_x(p)\rVert-\langle \beta\xi_x,\log_x(p)\rangle, 
	\end{aligned}\end{equation}
	where $\xi_p\in T_pM$ is the unit vector in the tangent space at $p$ that is the velocity of the unique geodesic ray in $\xi$ issuing from $p$.
 
	\begin{definition} \label{dbquantile}
		For $\beta\in[0,1)$ and $\xi\in\partial M$, the \textit{$(\beta,\xi)$-quantile set} of $X$ is defined to be $q(\beta,\xi)=\arg\min_{p\in M}G^{\beta,\xi}(p)$. Any element of this set is called a \textit{$(\beta,\xi)$-quantile} of $X$. %In this case, in a slight abuse of notation, we use $q(\beta,\xi)$ to refer to the $(\beta,\xi)$-quantile, that is, the unique member of $q(\beta,\xi)$. More loosely, we may refer to the elements of a quantile set as quantiles.
	\end{definition}

\begin{definition}
		Given data points $X_1,\ldots,X_N\in M$, the \textit{sample $(\beta,\xi)$-quantile (set)} is defined to be $\hat{q}_N(\beta,\xi)=\arg\min_{p\in M}\hat{G}^{\beta,\xi}_N(p)$ where $\hat{G}^{\beta,\xi}_N(p)=\frac{1}{N}\sum_{i=1}^N\rho(X_i,p;\beta,\xi)$.
	\end{definition}

As with parameter-based quantiles, the measurability of $\rho$ as a function of its second argument, and the equivariance of quantile sets with respect to scaled isometries is guaranteed. Compare the next two propositions to Propositions 3.1 and 3.2, respectively, in \cite{Shin2023}. The proofs of all propositions in this section are in Appendix \ref{dbproof_measurable}.

\begin{proposition} \label{dbmeasurable}
		For fixed $\beta\in[0,\infty)$, $\xi\in\partial M$, and $p\in M$, the map $x\mapsto\rho(x,p;\beta,\xi)$ is measurable. %Therefore, $\rho(X,p;\beta,\xi)$ is a non-negative random variable, and it makes sense to take its expected value.
	\end{proposition}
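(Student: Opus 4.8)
The plan is to strip off the obviously measurable pieces of $\rho(\cdot,p;\beta,\xi)$ and reduce the statement to one measurability claim about the Alexandrov angle. Write
\[
\rho(x,p;\beta,\xi)=d(p,x)-\beta\,h(x),\qquad h(x):=d(p,x)\cos\!\big(\angle_x(p,\xi)\big)\text{ for }x\neq p,\quad h(p):=0 .
\]
Since $x\mapsto d(p,x)$ is continuous, it suffices to prove that $h$ is Borel measurable; since $\{p\}$ is closed and $h$ is assigned the value $0$ there, it suffices to prove $h$ is Borel on $M\setminus\{p\}$; and since $d(p,\cdot)$ is continuous and strictly positive there, it suffices to prove that $x\mapsto\angle_x(p,\xi)$, equivalently $x\mapsto\cos\angle_x(p,\xi)$, is Borel on $M\setminus\{p\}$. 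This is the step that must be handled differently from the parameter-based loss of Proposition~3.1 in \cite{Shin2023}: there the angle is taken at the fixed point $p$, so the ray representing $\xi$ at the vertex never moves, whereas here the vertex is the moving point $x$ and one must control how the ray $\gamma_x\in\xi$ issuing from $x$ depends on $x$.

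Let $c_{x,p}\colon[0,d(x,p)]\to M$ be the unit-speed geodesic from $x$ to $p$, let $\gamma_x\colon[0,\infty)\to M$ be the unique ray in $\xi$ issuing from $x$, and for points $a,b$ at positive distance from a vertex $v$ write $\widetilde{\angle}(a,v,b)=\arccos\!\big[(d(v,a)^2+d(v,b)^2-d(a,b)^2)/(2\,d(v,a)d(v,b))\big]\in[0,\pi]$ for the Euclidean comparison angle (well defined by the triangle inequality). I would invoke two standard facts about CAT(0) spaces. First, the comparison angle $\widetilde{\angle}(c_{x,p}(s),x,\gamma_x(t))$ is non-decreasing in each of $s$ and $t$, so for $x\neq p$,
\[
\angle_x(p,\xi)=\inf\big\{\,\widetilde{\angle}\big(c_{x,p}(s),x,\gamma_x(t)\big)\ :\ s,t\in\mathbb{Q}\cap(0,\infty),\ s<d(x,p)\,\big\}.
\]
Second, $x\mapsto\gamma_x(t)$ is $1$-Lipschitz (hence continuous) for each fixed $t$: two rays in the same class $\xi$ are asymptotic, so $r\mapsto d(\gamma_x(r),\gamma_{x'}(r))$ is a bounded convex function on $[0,\infty)$ and therefore non-increasing, whence $d(\gamma_x(t),\gamma_{x'}(t))\le d(\gamma_x(0),\gamma_{x'}(0))=d(x,x')$. (Continuous dependence of geodesic segments on their endpoints likewise makes $x\mapsto c_{x,p}(s)$ continuous on $\{x:d(p,x)>s\}$.)

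Putting these together: for each fixed rational pair $(s,t)$ with $s,t>0$, the function $x\mapsto\widetilde{\angle}(c_{x,p}(s),x,\gamma_x(t))$ is continuous on the open set $\{x:d(p,x)>s\}$, since there it equals $\arccos\!\big[(s^2+t^2-d(c_{x,p}(s),\gamma_x(t))^2)/(2st)\big]$, a continuous function of the continuous map $x\mapsto d(c_{x,p}(s),\gamma_x(t))$; extending it by $+\infty$ off that set gives a Borel function $g_{s,t}$ on $M\setminus\{p\}$. By the displayed formula, $x\mapsto\angle_x(p,\xi)=\inf_{(s,t)}g_{s,t}(x)$ on $M\setminus\{p\}$ --- a countable infimum of Borel functions, the $+\infty$ values being harmless since for each $x\neq p$ at least one admissible rational $s<d(x,p)$ occurs --- hence Borel and $[0,\pi]$-valued; composing with $\cos$, multiplying by the continuous function $d(p,\cdot)$, and inserting the value $0$ at $p$ shows that $h$, and hence $\rho(\cdot,p;\beta,\xi)$, is Borel measurable. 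The only genuinely non-routine ingredient is the handling of $x\mapsto\gamma_x(t)$ --- controlling how the ray representing $\xi$ moves as its base point moves, which is exactly where this argument departs from the parameter-based case; the remainder is routine bookkeeping with comparison triangles and countable infima.
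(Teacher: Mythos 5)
Your proof is correct, but it takes a more hands-on route than the paper's. The paper's argument is a one-liner: it cites Proposition II.9.2(1) of \cite{Bridson1999} for the upper semi-continuity of $x\mapsto\angle_x(p,\xi)$ on $M\setminus\{p\}$, from which measurability of $\rho(\cdot,p;\beta,\xi)$ follows immediately (combined with the separate assignment of value $0$ at $\{p\}$). You instead reconstruct the needed regularity from first principles: you use the CAT(0) monotonicity of comparison angles to write $\angle_x(p,\xi)$ as a countable infimum over rational $(s,t)$, the convexity and boundedness of $r\mapsto d(\gamma_x(r),\gamma_{x'}(r))$ for asymptotic rays to get $1$-Lipschitz dependence of $\gamma_x(t)$ on $x$, and continuous dependence of geodesic segments on endpoints for $c_{x,p}(s)$. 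Your $g_{s,t}$ (continuous on the open set $\{d(p,x)>s\}$, $+\infty$ on its closed complement) is in fact upper semi-continuous, so your countable infimum actually recovers the very upper semi-continuity the paper cites, not just Borel measurability --- you might as well note that, since it is exactly the content of the Bridson--Haefliger result being re-derived. The trade-off is clear: the paper's proof is shorter and leans on a textbook proposition; yours is self-contained and makes explicit precisely which CAT(0) ingredients (comparison-angle monotonicity, asymptotic-ray Lipschitzness, geodesic continuity) drive the result, which, as you observe, is where the data-based case genuinely differs from the parameter-based one whose angle vertex is fixed.
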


We define a $\sigma$-scaled isometry for some $\sigma>0$ to be a bijection $g:M\rightarrow M'$  between metric spaces $(M,d)$ and $(M',d')$ such that $d'(g(x),g(y))=\sigma d(x,y)$ for all $x,y\in M$. For any geodesic $\gamma$ into $M$, define $g\gamma$ by $g\gamma(t)=g(\gamma(t/\sigma))$, its domain being the set of all $t$ such that $t/\sigma$ is in the domain of $\gamma$. 

Throughout this paper, for a set $A$ and a function $f$, denote $\{f(a)|a\in A\}$ by $f(A)$.

	\begin{proposition} \label{dbequivariance}
		Let $M$ and $M'$ be Hadamard spaces and $g:M\rightarrow M'$ a $\sigma$-scaled isometry. Then the $(\beta,g\xi)$-quantile set of $g(X)$, where $g\xi:=[g\gamma|\gamma\in\xi=[\gamma]]\in\partial M'$, is $g(q(\beta,\xi))$.
	\end{proposition}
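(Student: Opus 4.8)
The plan is to show that the $\sigma$-scaled isometry $g$ turns the objective function on $M$ into a positive multiple of the corresponding objective function for $g(X)$ on $M'$, so that their $\arg\min$ sets correspond under $g$. The heart of the matter is the pointwise loss identity $\rho\big(g(x),g(p);\beta,g\xi\big)=\sigma\,\rho(x,p;\beta,\xi)$ for all $x,p\in M$. To prove it I would check three properties of $g$. First, $d'(g(p),g(x))=\sigma d(p,x)$, by the definition of a $\sigma$-scaled isometry. Second, $g$ carries unit-speed geodesics to unit-speed geodesics: for a geodesic $\gamma$ in $M$, the curve $g\gamma$ defined just before the proposition is a unit-speed geodesic in $M'$; in particular $g$ sends the geodesic from $x$ to $p$ to the geodesic from $g(x)$ to $g(p)$, and the unique ray of $\xi$ issuing from $x$ to a ray issuing from $g(x)$. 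Third, $g$ preserves Alexandrov angles: these are defined through limits of comparison angles formed from ratios of pairwise distances, and rescaling every distance by the constant $\sigma$ leaves such ratios, hence the limiting angle, unchanged; so $\angle_{g(x)}(g(p),g\xi)=\angle_x(p,\xi)$. Substituting these three facts into (\ref{dbphi}) gives the loss identity, the degenerate case $x=p$ being covered by the paper's convention since both sides are then $0$.

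Before invoking the above I would confirm that $g\xi$ is a well-defined point of $\partial M'$ and that the unique ray in $g\xi$ issuing from $g(x)$ is exactly $g\gamma_x$, where $\gamma_x$ is the unique ray of $\xi$ issuing from $x$; this is what makes the angle $\angle_{g(x)}(g(p),g\xi)$ appearing in the identity refer to the image ray. Since $g$ is a bijection that scales distances and maps geodesics to geodesics, it sends asymptotic rays to asymptotic rays and non-asymptotic rays to non-asymptotic rays, hence descends to a well-defined bijection $\partial M\to\partial M'$; and because $g\gamma_x$ is a ray of $g\xi$ starting at $g(x)$, uniqueness of such a ray in the Hadamard space $M'$ identifies it as the representative used in evaluating the angle. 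All of this follows from the definitions recalled in Section 2 of \cite{Shin2023}; the only bookkeeping is the speed rescaling, which is exactly why $g\gamma$ was defined with the argument $t/\sigma$.

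It remains to pass to expectations and minimize. Since $g$ is a homeomorphism it is Borel measurable, so $g(X)$ is a random element of $M'$ and, by Proposition \ref{dbmeasurable} applied in $M'$, $E[\rho(g(X),p;\beta,g\xi)]$ is well-defined for each $p$; taking expectations in the loss identity yields $E[\rho(g(X),g(p);\beta,g\xi)]=\sigma\,G^{\beta,\xi}(p)$ for every $p\in M$. As $g:M\rightarrow M'$ is onto, every point of $M'$ has the form $g(p)$, and as $\sigma>0$, the point $g(p)$ minimizes the objective of $g(X)$ over $M'$ if and only if $p$ minimizes $G^{\beta,\xi}$ over $M$. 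Therefore the $(\beta,g\xi)$-quantile set of $g(X)$ equals $g\big(\arg\min_{p\in M}G^{\beta,\xi}(p)\big)=g(q(\beta,\xi))$, as claimed.

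I expect the main obstacle to be the second paragraph: verifying that $g$ acts correctly on the boundary at infinity and that this action is compatible with how the angles in the loss function are evaluated, all while correctly tracking the rescaling of geodesic speeds by $\sigma$. The remaining steps are short and essentially formal, paralleling the proof of Proposition 3.2 in \cite{Shin2023}, with scale-invariance of Alexandrov angles under a homothety being the one genuinely new ingredient.
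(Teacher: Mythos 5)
Your proof is correct and follows essentially the same route as the paper's: establish the pointwise loss identity $\rho(g(x),g(p);\beta,g\xi)=\sigma\,\rho(x,p;\beta,\xi)$ using the fact that $\sigma$-scaled isometries preserve Alexandrov angles, then conclude via bijectivity of $g$ and positivity of $\sigma$. The only difference is that you verify angle-preservation and the action of $g$ on $\partial M$ directly, whereas the paper delegates these checks to the proof of Proposition 3.2 in \cite{Shin2023}.
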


This property is worth mentioning because besides the quantiles of \cite{Chaudhuri1996}, other non-geometric notions of quantiles for multivariate data have been attempted but they generally lack this equivariance; for example, coordinate-wise medians are not equivariant to rotations, which are isometries. This is an advantage of geometric quantiles.

Proposition \ref{dbnew} provides a simple example of something whose corresponding result is not true in the case of parameter-based quantiles, for which only lower semi-continuity can be guaranteed on general Hadamard spaces.

 \begin{proposition} \label{dbnew}
The map $p\mapsto\rho(x,p;\beta,\xi)$ is continuous.
 \end{proposition}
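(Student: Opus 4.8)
The plan is to establish continuity of $p\mapsto\rho(x,p;\beta,\xi)$ at an arbitrary point $p_0\in M$, treating $p_0=x$ and $p_0\ne x$ separately. The case $p_0=x$ follows at once from the defining formula: since $\lvert\cos(\angle_x(p,\xi))\rvert\le 1$ whenever $p\ne x$, we have $\lvert\rho(x,p;\beta,\xi)\rvert\le(1+\beta)\,d(p,x)$, so with the convention $\rho(x,x;\beta,\xi)=0$ we get $\rho(x,p;\beta,\xi)\to 0=\rho(x,x;\beta,\xi)$ as $p\to x$.

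Now fix $p_0\ne x$. For $p$ in a neighborhood of $p_0$ small enough that $p\ne x$ throughout, write $\rho(x,p;\beta,\xi)=d(p,x)\bigl(1-\beta\cos(\angle_x(p,\xi))\bigr)$; since $p\mapsto d(p,x)$ is continuous and $\cos$ is continuous, it suffices to show that $p\mapsto\angle_x(p,\xi)$ is continuous at $p_0$. Let $\gamma_x$ be the unique geodesic ray in $\xi$ issuing from $x$, so that $\angle_x(p,\xi)$ is the Alexandrov angle at $x$ between $[x,p]$ and $\gamma_x$. By the triangle inequality for Alexandrov angles at a common vertex (a standard property, valid in any Hadamard space; see Section~2 of \cite{Shin2023}),
\[
\lvert\angle_x(p,\xi)-\angle_x(p_0,\xi)\rvert\le\angle_x\bigl([x,p],[x,p_0]\bigr),
\]
so it is enough to show $\angle_x([x,p],[x,p_0])\to 0$ as $p\to p_0$.

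This final step is where non-positive curvature enters. The CAT(0) inequality bounds this Alexandrov angle above by the Euclidean comparison angle,
\[
\angle_x\bigl([x,p],[x,p_0]\bigr)\le\tilde\angle_x(p,p_0)=\arccos\frac{d(x,p)^2+d(x,p_0)^2-d(p,p_0)^2}{2\,d(x,p)\,d(x,p_0)},
\]
and as $p\to p_0$ we have $d(p,p_0)\to 0$ while $d(x,p)\to d(x,p_0)>0$---this positivity being exactly where the hypothesis $p_0\ne x$ is used---so the argument of $\arccos$ tends to $1$ and hence $\tilde\angle_x(p,p_0)\to 0$. Therefore $\angle_x([x,p],[x,p_0])\to 0$, which closes the argument.

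I do not anticipate a real obstacle: the substance of the proof is the reduction, not any delicate estimate. The conceptual point---and the reason only lower semi-continuity, rather than continuity, can be guaranteed for parameter-based quantiles---is that in the data-based loss the Alexandrov angle is evaluated at the \emph{fixed} base point $x$, so varying $p$ only moves the geodesic segment $[x,p]$, whose displacement is controlled exactly by the single comparison triangle on $\{x,p,p_0\}$; in the parameter-based loss the angle lives at the \emph{moving} point $p$, where the space of directions can behave discontinuously, and no comparable comparison-triangle estimate is available.
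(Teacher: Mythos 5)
Your proof is correct, and the $p_0=x$ case is handled essentially as in the paper (both squeeze $\rho$ against a multiple of $d(p,x)$). The difference lies in the $p_0\ne x$ case: the paper simply cites Proposition II.9.2(1) of Bridson--Haefliger for the continuity of $p\mapsto\cos(\angle_x(p,\xi))$, whereas you prove it from scratch. Your reduction runs through two facts: first the triangle inequality for Alexandrov angles at the fixed vertex $x$, which gives $\lvert\angle_x(p,\xi)-\angle_x(p_0,\xi)\rvert\le\angle_x([x,p],[x,p_0])$ and is valid in any metric space; and second the CAT(0) comparison inequality $\angle_x([x,p],[x,p_0])\le\tilde\angle_x(p,p_0)$, whose right-hand side visibly tends to zero as $p\to p_0$ because $d(x,p_0)>0$. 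This is a more elementary and self-contained argument than the paper's one-line citation, and it also makes transparent exactly where the $p_0\ne x$ hypothesis and the non-positive curvature are used, something the citation hides. Your closing remark about why continuity holds here but only lower semi-continuity holds in the parameter-based case is accurate and captures the structural reason: the angle in the data-based loss lives at a fixed vertex, so varying $p$ only changes one side of a single comparison triangle, while in the parameter-based loss the vertex itself moves and no such local estimate is available.
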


The following proposition provides another example of the advantages of the data-based perspective; the corresponding result in \cite{Shin2023}, Proposition 3.4, applies only to Hadamard manifolds. 

\begin{proposition} \label{dbbasic}
		Let $M$ be an Hadamard space. Assume that $G^{\beta^*,\xi^*}(p^*)$ is finite for some $\beta^*\in[0,1),\xi^*\in\partial M, p^*\in M$.
		\begin{itemize}
			\item[(a)] For every $(\beta,\xi)\in[0,\infty)\times \partial M$, $G^{\beta,\xi}$ is finite and continuous on all of $M$, and
			
			\item[(b)] If in addition $M$ is locally compact, for every $(\beta,\xi)\in[0,1)\times \partial M$, the $(\beta,\xi)$-quantile set is nonempty and compact.
		\end{itemize}
	\end{proposition}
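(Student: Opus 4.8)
The plan is to exploit the elementary two-sided bound that, since $\lvert\cos(\angle_x(p,\xi))\rvert\le 1$, we have $\lvert\rho(x,p;\beta,\xi)\rvert\le(1+\beta)\,d(p,x)$ for all $\beta\in[0,\infty)$, and $\rho(x,p;\beta,\xi)\ge(1-\beta)\,d(p,x)\ge 0$ when $\beta\in[0,1)$, combined with the pointwise continuity supplied by Proposition~\ref{dbnew} and the measurability from Proposition~\ref{dbmeasurable}.

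For part (a), I would first note that the hypothesis forces $E[d(p^*,X)]<\infty$: because $\beta^*<1$, the integrand $\rho(X,p^*;\beta^*,\xi^*)$ is nonnegative and at least $(1-\beta^*)\,d(p^*,X)$, so $(1-\beta^*)\,E[d(p^*,X)]\le G^{\beta^*,\xi^*}(p^*)<\infty$. The triangle inequality $d(p,X)\le d(p,p^*)+d(p^*,X)$ then yields $E[d(p,X)]<\infty$ for \emph{every} $p\in M$, and hence, via $\lvert\rho(X,p;\beta,\xi)\rvert\le(1+\beta)\,d(p,X)$, that $G^{\beta,\xi}(p)$ is finite for every $(\beta,\xi)\in[0,\infty)\times\partial M$ and every $p$. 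For continuity, I would take $p_n\to p$, note $\rho(X,p_n;\beta,\xi)\to\rho(X,p;\beta,\xi)$ pointwise on $\Omega$ by Proposition~\ref{dbnew}, dominate by the integrable function $(1+\beta)\bigl(\sup_n d(p_n,p)+d(p,X)\bigr)$, and invoke dominated convergence; sequential continuity suffices in a metric space.

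For part (b), the crux is coercivity. The reverse triangle inequality $d(p,X)\ge d(p,p^*)-d(p^*,X)$ together with the lower bound gives
\[
G^{\beta,\xi}(p)\ \ge\ (1-\beta)\,E[d(p,X)]\ \ge\ (1-\beta)\bigl(d(p,p^*)-E[d(p^*,X)]\bigr),
\]
which tends to $+\infty$ as $d(p,p^*)\to\infty$, since $1-\beta>0$ and $E[d(p^*,X)]<\infty$. Consequently the sublevel set $S=\{p:G^{\beta,\xi}(p)\le G^{\beta,\xi}(p^*)\}$ is bounded, and it is closed and nonempty by part (a); since $M$ is complete, geodesic, and locally compact it is proper (closed bounded sets are compact, by the Hopf--Rinow theorem for length spaces), so $S$ is compact and the continuous $G^{\beta,\xi}$ attains its minimum over $S$, which is then its global minimum. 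Thus $q(\beta,\xi)$ is nonempty; and $q(\beta,\xi)$, being the preimage of a point under the continuous $G^{\beta,\xi}$, is closed, and it is contained in $S$, hence compact.

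I expect the main obstacle to be expository rather than mathematical: being careful about the step from local compactness to properness (needed both for compactness of the sublevel set and for the infimum to be attained), and making sure the conventions about $\rho$ at $x=p$ and about infinite expectations do not create edge cases. Everything else is a routine assembly of the $\cos$-bound, the (reverse) triangle inequality, and dominated convergence.
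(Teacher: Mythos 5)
Your proof is correct and, at the level of ideas, matches what the paper does: the cosine bound $\lvert\rho\rvert\le(1+\beta)d$, the triangle inequality, dominated convergence using Proposition~\ref{dbnew}, coercivity from the lower bound $(1-\beta)d$, and Hopf--Rinow for locally compact length spaces. The paper's written proof is terse because it delegates the substance of (a) and the nonempty/closed/bounded portion of (b) to Proposition~3.4 of \cite{Shin2023}; those steps are the same ones you have spelled out. The one place you diverge slightly is how $\beta\in[1,\infty)$ is handled in (a): the paper first proves (a) for $\beta\in[0,1)$ and then extends to all $\beta\ge 0$ via the linear identity $G^{\beta,\xi}=G^{0,\xi}+2\beta(G^{0.5,\xi}-G^{0,\xi})$, whereas you observe directly that the two-sided bound $\lvert\rho(X,p;\beta,\xi)\rvert\le(1+\beta)d(p,X)$ already gives finiteness and domination for every $\beta\ge 0$ at once. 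Both are valid; your version is a bit more self-contained and avoids the detour, while the paper's affine decomposition has the minor virtue of making the extension a one-line corollary of the $\beta\in[0,1)$ case it inherits from the earlier paper.
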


  In the rest of this paper, the following condition is assumed to be true.

  \begin{assumption} \label{dbassump}
For some $\beta^*\in[0,1),\xi^*\in\partial M$ and $p^*\in M$, $G^{\beta^*,\xi^*}(p^*)<\infty$.
  \end{assumption}

   \section{Large-sample properties} \label{dbasymp}

 This section, which in large part adapts the results of Section 4 of \cite{Shin2023} to the data-based case, demonstrates some of the powerful theoretical advantages of the data-based approach. Most of the remarks in Section 4 of \cite{Shin2023} also apply here, but we avoid repeating them for the sake of brevity. Throughout this section, $X,X_1,X_2,\ldots$ are independent and identically distributed $M$-valued random elements.

\subsection{Strong consistency} \label{dbsc}

Though \cite{Shin2023} defines quantiles on Hadamard spaces, with the exception of some basic properties it focuses entirely on Hadamard manifolds, while in this section we prove the crucial statistical property of strong consistency on a much broader class of Hadamard spaces, namely locally compact Hadamard spaces. We use the fact that Proposition \ref{dbbasic} applies to all Hadamard spaces.
	
	\begin{lemma} \label{dbsulln}
		Let $M$ be an Hadamard space. For any compact $L\subset M$, $\lim_{N\rightarrow \infty}\sup_{p\in L}\Big\lvert\hat{G}^{\beta,\xi}_N(p)-G^{\beta,\xi}(p)\Big\rvert=0$ almost surely.
	\end{lemma}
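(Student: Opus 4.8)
The plan is to establish a uniform strong law of large numbers (a "Glivenko–Cantelli"-type statement) by combining the pointwise strong law of large numbers with an equicontinuity/bracketing argument made possible by Proposition \ref{dbnew} and Proposition \ref{dbbasic}(a). First I would fix $\beta\in[0,1)$ and $\xi\in\partial M$, and observe that by Proposition \ref{dbbasic}(a) (invoking Assumption \ref{dbassump}) the function $p\mapsto G^{\beta,\xi}(p)=E[\rho(X,p;\beta,\xi)]$ is finite and continuous on all of $M$; in particular it is bounded on the compact set $L$. Then for each fixed $p\in M$, the map $x\mapsto\rho(x,p;\beta,\xi)$ is measurable (Proposition \ref{dbmeasurable}) and integrable, so the ordinary strong law of large numbers gives $\hat G^{\beta,\xi}_N(p)\to G^{\beta,\xi}(p)$ almost surely.

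The upgrade from pointwise to uniform convergence on $L$ will use an integrable envelope together with a modulus-of-continuity argument. The key geometric fact is that $|\rho(x,p;\beta,\xi)-\rho(x,p';\beta,\xi)|\leq (1+\beta)\,|d(p,x)-d(p',x)|+\beta\,|\cdots|\leq (1+\beta)\,d(p,p') + (\text{angle terms})$; more carefully, since $\rho(x,p;\beta,\xi)=d(p,x)-\beta d(p,x)\cos(\angle_x(p,\xi))$, one bounds the increment in $p$ by noting $\rho(x,\cdot;\beta,\xi)$ is continuous in $p$ for every $x$ (Proposition \ref{dbnew}) and dominated: $0\le \rho(x,p;\beta,\xi)\le (1+\beta)d(p,x)\le (1+\beta)(d(p,x_0)+\operatorname{diam}(L)+d(x_0,p_0))$ for any fixed basepoints, giving an integrable envelope on $L$ because $d(X,x_0)$ is integrable (this integrability follows from $G^{\beta^*,\xi^*}(p^*)<\infty$ and the fact that $\rho(x,p^*;\beta^*,\xi^*)\ge (1-\beta^*)d(p^*,x)$, so $E[d(X,p^*)]<\infty$). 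I would then cover $L$ by finitely many balls $B(p_j,\delta)$, $j=1,\dots,m$, and write, for $p\in B(p_j,\delta)$,
\begin{equation*}
\bigl|\hat G^{\beta,\xi}_N(p)-G^{\beta,\xi}(p)\bigr|\le \bigl|\hat G^{\beta,\xi}_N(p_j)-G^{\beta,\xi}(p_j)\bigr|+\hat W_N(p_j,\delta)+W(p_j,\delta),
\end{equation*}
where $W(p_j,\delta)=E[\sup_{p\in B(p_j,\delta)}|\rho(X,p;\beta,\xi)-\rho(X,p_j;\beta,\xi)|]$ and $\hat W_N$ is its empirical counterpart. By the dominated convergence theorem and the pointwise continuity $\rho(x,\cdot)$, $W(p_j,\delta)\to 0$ as $\delta\to 0$; the empirical version $\hat W_N(p_j,\delta)\to W(p_j,\delta)$ a.s. by the strong law applied to the (measurable, integrable, by the envelope bound) function $x\mapsto\sup_{p\in B(p_j,\delta)}|\rho(x,p;\beta,\xi)-\rho(x,p_j;\beta,\xi)|$. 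Taking $N\to\infty$ then $\delta\to0$, and using the finiteness of the cover, yields $\limsup_N\sup_{p\in L}|\hat G^{\beta,\xi}_N(p)-G^{\beta,\xi}(p)|=0$ a.s.

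The main obstacle I anticipate is the measurability of the suprema $\sup_{p\in B(p_j,\delta)}|\rho(x,p;\beta,\xi)-\rho(x,p_j;\beta,\xi)|$ as functions of $x$, which is needed to apply the strong law to them. This is handled by Proposition \ref{dbnew}: since $p\mapsto\rho(x,p;\beta,\xi)$ is continuous and $M$ is separable (being a locally compact geodesic metric space, or at any rate second countable under the standing hypotheses), the supremum over the ball equals the supremum over a countable dense subset, hence is measurable. A secondary technical point is confirming the integrable envelope via the $(1-\beta)d(p,x)\le\rho(x,p;\beta,\xi)\le(1+\beta)d(p,x)$ sandwich together with the triangle inequality to transfer integrability of $d(X,p^*)$ to integrability of $d(X,p)$ uniformly over $p\in L$; this is routine once set up. Everything else—covering $L$ by finitely many balls, the three-term split, and the interchange of limits—is standard, and crucially uses only that $\rho$ is continuous and dominated in $p$, properties that hold on all Hadamard spaces in the data-based setting (in contrast to the parameter-based case, where only lower semicontinuity is available).
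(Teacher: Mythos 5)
Your proof is correct in its essentials but follows a genuinely different route from the paper. The paper's argument is functional-analytic: it establishes that $x\mapsto\rho(x,\cdot;\beta,\xi)|_L$ is a random element of the separable Banach space $C(L)$ (using the upper semi-continuity of $x\mapsto\rho(x,p;\beta,\xi)$ from \cite{Bridson1999} and a generating-sets argument for the Borel $\sigma$-algebra on $C(L)$), and then invokes a strong law of large numbers for $C(L)$-valued random elements, following the corresponding step in Lemma 4.1 of \cite{Shin2023}, to conclude directly that $\lVert\hat G^{\beta,\xi}_N|_L-G^{\beta,\xi}|_L\rVert_{\sup}\to 0$ almost surely. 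Your argument is instead the classical covering/bracketing Glivenko--Cantelli scheme: a finite $\delta$-net of $L$, the scalar strong law at the net points, an integrable envelope furnished by Assumption~\ref{dbassump}, and a modulus-of-continuity term controlled via Proposition~\ref{dbnew} and dominated convergence. Both work; yours is more elementary and self-contained, while the paper's is shorter once the Banach-space strong law is available and concentrates all the measurability work in a single step. Two small points to tighten in your version: (i) you assert that ``$M$ is separable,'' but this is not guaranteed for a general Hadamard space --- the fix is harmless, since the suprema should be taken over $B(p_j,\delta)\cap L$ rather than $B(p_j,\delta)$, and the compact set $L$ is automatically separable; (ii) the Lipschitz estimate $\lvert\rho(x,p;\beta,\xi)-\rho(x,p';\beta,\xi)\rvert\le(1+\beta)d(p,p')$ you mention in passing is established in the paper only on Hadamard manifolds (via the Cartan--Hadamard theorem), not on general Hadamard spaces --- again harmless, since your argument only needs continuity (Proposition~\ref{dbnew}) plus the envelope, not Lipschitz continuity.
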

	A proof is provided in Appendix~\ref{dbproof_sulln}. For the next theorem and its proof, also in Appendix~\ref{dbproof_sulln}, we adopt the notation $d(p,A):=\inf_{p'\in A}d(p,p')$ for $p\in M$, $A\in M$.
	
	%We are now ready to prove a strong law of large numbers.
	\begin{theorem} \label{dbslln}
 Let $M$ be a locally compact Hadamard space.
		\begin{itemize}
			\item[(a)] For any $\epsilon>0$, there exist some $\Omega_1\in\mathcal{F}$ and $N_1(\omega)<\infty$ for all $\omega\in\Omega_1$ such that $P(\Omega_1)=1$ and the sample $(\beta,\xi)$-quantile set of $X_1,\ldots,X_N$ is contained in $C^\epsilon=\{p\in M:d(p,q(\beta,\xi))<\epsilon\}$ for all $N\geq N_1(\omega)$.
			
			\item[(b)] If $X$ has a unique $(\beta,\xi)$-quantile, then any measurable choice from the sample $(\beta,\xi)$-quantile set of $X_1,\ldots,X_N$ converges almost surely to the $(\beta,\xi)$-quantile of $X$.%If $X$ has a unique $(\beta,\xi)$-quantile, then with probability 1, any sequence of random elements where the $N$-th element is in the sample $(\beta,\xi)$-quantile set of $X_1,\ldots,X_N$ for all $N\in\mathbb{Z}_+$ converges to the $(\beta,\xi)$-quantile of $X$.
		\end{itemize}
	\end{theorem}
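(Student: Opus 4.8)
The proof follows the classical consistency-of-$M$-estimators template, the only real novelty being the care needed to keep the sample quantiles from escaping to infinity on the non-compact space $M$. The ingredients I would rely on are Lemma~\ref{dbsulln} (the uniform strong law of large numbers on compact sets, valid on any Hadamard space), the fact that a complete, locally compact geodesic space is proper, so that every closed ball in $M$ is compact, Proposition~\ref{dbbasic} for the finiteness and continuity of $G^{\beta,\xi}$ and the nonemptiness and compactness of $q(\beta,\xi)$, and Proposition~\ref{dbnew} for the continuity of $p\mapsto\hat G^{\beta,\xi}_N(p)$.

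The first step is a coercivity estimate. Since $\cos(\angle_x(p,\xi))\le 1$ and $\beta<1$, one has $\rho(x,p;\beta,\xi)=d(p,x)\big(1-\beta\cos(\angle_x(p,\xi))\big)\ge(1-\beta)\,d(p,x)\ge 0$; combined with the triangle inequality this gives, for the $p^*$ of Assumption~\ref{dbassump}, $G^{\beta,\xi}(p)\ge(1-\beta)\big(d(p,p^*)-\mu\big)$ and $\hat G^{\beta,\xi}_N(p)\ge(1-\beta)\big(d(p,p^*)-\tfrac1N\sum_{i=1}^N d(X_i,p^*)\big)$, where $\mu:=E[d(X,p^*)]$ is finite because the same inequality applied to $\beta^*,\xi^*$ yields $(1-\beta^*)\mu\le G^{\beta^*,\xi^*}(p^*)<\infty$. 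Thus $G^{\beta,\xi}$ is coercive; fix a radius $R_0$ with $q(\beta,\xi)\subset B(p^*,R_0)$, note $C^\epsilon\subset B(p^*,R_0+\epsilon)$, and set $m:=G^{\beta,\xi}(q)<\infty$ for a fixed $q\in q(\beta,\xi)$.

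For (a), fix $\epsilon>0$. By the ordinary strong law, $\tfrac1N\sum_{i=1}^N d(X_i,p^*)\to\mu$ and $\hat G^{\beta,\xi}_N(q)\to m$ almost surely; choose a deterministic $R_1\ge R_0+\epsilon$ with $(1-\beta)(R_1-\mu-1)>m+1$. Then, off a null set, for all large $N$ we have $\hat G^{\beta,\xi}_N(q)<m+1$ while $\hat G^{\beta,\xi}_N(p)>m+1$ whenever $d(p,p^*)\ge R_1$, so the infimum of $\hat G^{\beta,\xi}_N$ over $M$ equals its minimum over the closed ball $\overline{B}(p^*,R_1)$ (which is compact since $M$ is proper, and on which the minimum is attained by continuity); in particular the sample quantile set is eventually nonempty and contained in $\overline{B}(p^*,R_1)$. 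Now $K:=\overline{B}(p^*,R_1)\setminus C^\epsilon$ is compact and disjoint from $q(\beta,\xi)=\arg\min G^{\beta,\xi}$, so $m':=\min_{p\in K}G^{\beta,\xi}(p)>m$; put $\delta:=\tfrac13\min(1,m'-m)$. Applying Lemma~\ref{dbsulln} with $L=\overline{B}(p^*,R_1)$, together with $\hat G^{\beta,\xi}_N(q)\to m$, gives, almost surely for all large $N$, $\hat G^{\beta,\xi}_N(q)<m+\delta$ while $\hat G^{\beta,\xi}_N(p)>m'-\delta>m+\delta$ for every $p\in K$ and $\hat G^{\beta,\xi}_N(p)>m+1>m+\delta$ for every $p$ with $d(p,p^*)\ge R_1$. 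Hence no minimizer of $\hat G^{\beta,\xi}_N$ lies in $K$ or outside $\overline{B}(p^*,R_1)$, i.e.\ the sample quantile set is contained in $C^\epsilon$; taking $\Omega_1$ to be the intersection of the probability-one events used gives the claim. For (b), when $q(\beta,\xi)=\{q\}$, apply (a) along $\epsilon_k\downarrow 0$ and intersect the corresponding probability-one events; on that single event any measurable selection $\hat q_N$ satisfies $d(\hat q_N,q)<\epsilon_k$ for all large $N$ and every $k$, so $\hat q_N\to q$.

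The step I expect to be the crux is confining the sample quantile set to a fixed compact ball before Lemma~\ref{dbsulln} can be invoked: the uniform strong law only controls $\hat G^{\beta,\xi}_N-G^{\beta,\xi}$ on compact sets, so escape to infinity must be ruled out first, and this is exactly where the coercivity bound, the ordinary strong law for $\tfrac1N\sum_{i=1}^N d(X_i,p^*)$, and---critically---the properness of $M$ coming from local compactness all enter. Everything else is the routine argmin argument.
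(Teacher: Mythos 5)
Your proof is correct and follows the same template as the paper: coercivity to confine the sample minimizers to a fixed compact ball (which is where local compactness enters, via properness/Hopf--Rinow), then the uniform strong law (Lemma~\ref{dbsulln}) to transfer the gap between $\min_{C^\epsilon}$ and $\min_{K}$ from $G^{\beta,\xi}$ to $\hat G^{\beta,\xi}_N$. The paper's own proof simply delegates to Theorem 4.1 of \cite{Shin2023} with the noted substitutions (Proposition~\ref{dbbasic}, Lemma~\ref{dbsulln}, and the Hopf--Rinow generalization), and your write-up is a faithful expansion of that argument; the coercivity bound $\rho(x,p;\beta,\xi)\ge(1-\beta)d(p,x)$ and the ordinary SLLN for $\tfrac1N\sum d(X_i,p^*)$ are exactly what control escape to infinity before the uniform law can be applied on a compact ball.
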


  \subsection{Joint asymptotic normality}\label{dbasymp}

Throughout this paper, we adopt the notation $E[X;X\in A]:=E[XI(X\in A)]$. In this section, we will make use of several norms and inner products besides the Riemannian ones, so we denote the Riemannian norm and inner product on $M$ by $\lVert \cdot\rVert_g$ and $\langle\cdot,\cdot\rangle_g$, respectively. All proofs for this section are provided in Appendix~\ref{dbproof_qgrad}.

Recall that $\log_x:M\rightarrow T_xM$ is smooth, and in fact it is a diffeomorphism, so $p\mapsto\rho(x,p;\beta,\xi)$ is smooth on all of $M\backslash\{p\}$. Denote by $\nabla \rho(x,\phi(\mu))$ the Riemannian gradient at $p\neq x$ of this map. The following results, which are similar to Theorem 3.1 and Corollary 3.1 of \cite{Shin2023}, are necessary for our proofs of asymptotic normality, but they are also interesting in their own rights as necessary conditions for data-based sample quantiles. Unlike Theorem 3.1 in \cite{Shin2023}, the following theorem does not require the support to be bounded, which as a consequence will lead to simpler conditions for asymptotic normality.

 \begin{theorem} \label{dbqgrad}
		Let $M$ be an Hadamard manifold and $X$ be an $M$-valued random element. If $q\in q(\beta,\xi)$, then $\lVert E[\nabla \rho(X,q;\beta,\xi);X\neq q]-P(X=q)\beta\xi_{q}\rVert_g\leq P(X=q)$. Thus if $P(X=q)=0$, then
		%\begin{equation}
		$E[\nabla \rho(X,q;\beta,\xi);X\neq q]=0$. 
		%\end{equation}
	\end{theorem}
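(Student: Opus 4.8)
The plan is to work with the random function $p\mapsto\rho(X,p;\beta,\xi)$, which by the discussion preceding the theorem is smooth on $M\backslash\{X\}$, and to exploit the first-order optimality of $q$ as a minimizer of $G^{\beta,\xi}$. First I would fix a point $q\in q(\beta,\xi)$ and a unit tangent vector $v\in T_qM$, and consider the variation $t\mapsto G^{\beta,\xi}(\gamma(t))$ where $\gamma$ is the geodesic with $\gamma(0)=q$, $\dot\gamma(0)=v$. Because $q$ is a global minimizer and $G^{\beta,\xi}$ is finite and continuous everywhere (Proposition \ref{dbbasic}(a)), the one-sided derivative of this composition at $t=0^+$, if it exists, must be nonnegative. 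The key is to differentiate under the expectation: on the event $\{X\neq q\}$, the integrand $t\mapsto\rho(X,\gamma(t);\beta,\xi)$ is differentiable at $0$ with derivative $\langle\nabla\rho(X,q;\beta,\xi),v\rangle_g$; on the event $\{X=q\}$, the integrand is $t\mapsto d(q,\gamma(t))-\beta d(q,\gamma(t))\cos(\angle_{\gamma(t)}(q,\xi))=t-\beta t\cos(\angle_{\gamma(t)}(q,\xi))$ for $t>0$, whose $\limsup$ of the difference quotient as $t\to0^+$ is at most $1+\beta<2$ and at least $1-\beta>0$ in absolute value; more crudely, the one-sided derivative from the right lies in $[1-\beta,1+\beta]\subseteq[0,2)$, and in fact I only need that it is bounded in absolute value by something like $1+\beta\le 2$, or with a slightly finer analysis (using that $\angle_{\gamma(t)}(q,\xi)\to\angle_q(\gamma\text{-direction at }q\text{ reversed},\xi)$... ) I can pin it down, but the cleanest route is to bound it crudely.

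The main technical step is justifying the interchange of derivative and expectation, i.e. producing an integrable dominating function for the difference quotients $\big(\rho(X,\gamma(t);\beta,\xi)-\rho(X,q;\beta,\xi)\big)/t$ uniformly for small $t>0$. Here I would use that $\rho$ is $(1+\beta)$-Lipschitz in its second argument in the sense $\lvert\rho(x,p_1;\beta,\xi)-\rho(x,p_2;\beta,\xi)\rvert\le(1+\beta)d(p_1,p_2)$, which follows since $\lvert d(p_1,x)-d(p_2,x)\rvert\le d(p_1,p_2)$ and the cosine term is similarly controlled (this Lipschitz bound is implicit in the proof of Proposition \ref{dbbasic}(a) and I would cite or reprove it); consequently each difference quotient is bounded in absolute value by $(1+\beta)$, a constant, hence dominated. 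Dominated convergence then gives
\[
\frac{d}{dt}\Big|_{t=0^+}G^{\beta,\xi}(\gamma(t))=E\big[\langle\nabla\rho(X,q;\beta,\xi),v\rangle_g;X\neq q\big]+P(X=q)\cdot D(v),
\]
where $D(v)\in[1-\beta,1+\beta]$ is the right derivative of $t\mapsto t-\beta t\cos(\angle_{\gamma(t)}(q,\xi))$ at $0$; a short computation with the definition of the Alexandrov angle and the geodesic ray representing $\xi$ at $q$ identifies $D(v)=1-\beta\langle\xi_q,v\rangle_g=\langle v,v\rangle_g-\langle\beta\xi_q,v\rangle_g$, i.e. $D(v)=\langle\nabla\rho$-like term$\rangle$; more carefully, since $\angle_{\gamma(t)}(q,\xi)\to\angle_q(-v,\xi)$ and... — the honest statement I will use is $D(v)=1-\beta\langle\xi_q,v\rangle_g$.

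Granting that, optimality gives, for every unit $v$, $E[\langle\nabla\rho(X,q;\beta,\xi),v\rangle_g;X\neq q]+P(X=q)(1-\beta\langle\xi_q,v\rangle_g)\ge 0$, i.e.
\[
\big\langle\,E[\nabla\rho(X,q;\beta,\xi);X\neq q]-P(X=q)\beta\xi_q,\;v\,\big\rangle_g\;\ge\;-\,P(X=q)
\]
for all unit $v\in T_qM$. Applying this with $v$ and with $-v$ and taking the supremum over unit $v$ yields $\lVert E[\nabla\rho(X,q;\beta,\xi);X\neq q]-P(X=q)\beta\xi_q\rVert_g\le P(X=q)$, which is the claimed inequality; the final assertion follows by setting $P(X=q)=0$. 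The step I expect to be the main obstacle is the rigorous evaluation of the $X=q$ contribution $D(v)$: one must show the right-hand derivative of $t\mapsto\beta t\cos(\angle_{\gamma(t)}(q,\xi))$ at $t=0$ equals $\beta\langle\xi_q,v\rangle_g$, which requires the continuity/convergence of the Alexandrov angle $\angle_{\gamma(t)}(q,\xi)$ as the base point $\gamma(t)$ moves toward $q$ along $-v$ — this is where the CAT(0) comparison geometry and the behaviour of the geodesic ray in $\xi$ issuing from nearby points enter, and I would lean on the continuity results for angles and for the map $p\mapsto\xi_p$ on Hadamard manifolds (the same facts underpinning Proposition \ref{dbnew} and the smoothness assertions stated just before the theorem). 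If pinning down $D(v)$ exactly proves delicate, the inequality can still be salvaged with the crude bound $\lvert D(v)\rvert\le 1+\beta$, at the cost of replacing $P(X=q)$ on the right-hand side by $(1+\beta)P(X=q)$; but the cosine-derivative computation should go through and give the sharp constant as stated.
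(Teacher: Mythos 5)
Your overall strategy matches the paper's: differentiate $t\mapsto G^{\beta,\xi}(\exp_q(tv))$ from the right, split the expectation over $\{X\neq q\}$ and $\{X=q\}$, dominate the difference quotients by $1+\beta$ (via the $(1+\beta)$-Lipschitz bound on $p\mapsto\rho(x,p;\beta,\xi)$, which holds because $\log_x$ is distance-nonincreasing on an Hadamard manifold), invoke the bounded convergence theorem, and then use first-order optimality with $v$ and $-v$ and a supremum over unit vectors to extract the claimed inequality.

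However, there is a genuine conceptual gap in your treatment of the $\{X=q\}$ term, and it is exactly the point where data-based and parameter-based quantiles differ. You write the integrand on $\{X=q\}$ as $t-\beta t\cos(\angle_{\gamma(t)}(q,\xi))$, with the Alexandrov angle based at the moving point $\gamma(t)$, and you then worry about the delicate convergence of $\angle_{\gamma(t)}(q,\xi)$ as $t\to 0^+$. That is the \emph{parameter-based} angle. In the data-based loss (\ref{dbphi}), the angle is measured at the data point $x$, so on $\{X=q\}$ the relevant quantity is $\angle_q(\gamma(t),\xi)$, based at the fixed point $q$; since $\gamma(t)=\exp_q(tv)$, the geodesic from $q$ to $\gamma(t)$ points in the fixed direction $v$ for every $t>0$, and the identity $\rho(q,\gamma(t);\beta,\xi)=\lVert\log_q(\gamma(t))\rVert_g-\langle\beta\xi_q,\log_q(\gamma(t))\rangle_g=t-t\langle\beta\xi_q,v\rangle_g$ holds \emph{exactly}. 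No continuity argument for the radial field or for angles is needed, and the right derivative is $D(v)=1-\beta\langle\xi_q,v\rangle_g$ on the nose. By contrast, if one actually evaluated the limit of your expression, one would get $\cos(\angle_{\gamma(t)}(q,\xi))\to\langle -v,\xi_q\rangle_g$, hence $D(v)=1+\beta\langle\xi_q,v\rangle_g$ — the \emph{wrong sign}, inconsistent with the $D(v)=1-\beta\langle\xi_q,v\rangle_g$ you then assert. So you reach the correct conclusion, but the intermediate formula you wrote is not the data-based loss and does not lead there. The paper's computation of the $x=q$ case in (\ref{dbno}) makes the exactness clear and is precisely what makes the data-based gradient theorem both sharper and simpler than its parameter-based counterpart; it is worth internalizing that the $\{X=q\}$ contribution here is a closed-form linear function of $t$, not a limit.
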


	Any smooth global chart $\phi$ for an $n$-dimensional manifold $M$ is a diffeomorphism between $M$ and an open subset of $\mathbb{R}^n$. A $\phi$ always exists because $M$ is an Hadamard manifold. For example, the inverse exponential map at any $p\in M$ can define a $\phi$. In a slight abuse of notation that is standard practice, this section identifies $M$ as the image under $\phi$ so that all points $p$ in $M$ and tangent vectors in $T_pM$ are identified with their $n\times 1$ local coordinate representations in the chart $\phi$ and thus can be thought of as vectors in $\mathbb{R}^n$. This identifies the tangent bundle $TM$ with $\phi(M)\times\mathbb{R}^n$. Let $g_p$ be the $n\times n$ matrix that is the coordinate representation of the Riemannian metric $g$ at $b$ and define $\Psi(x,p;\beta,\xi)\in T_pM$ as 
	\begin{align*}
	\Psi(x,p;\beta,\xi)=(\Psi^1(x,p;\beta,\xi),\ldots,\Psi^n(x,p;\beta,\xi))^T
	=\begin{cases}
	g_p\nabla \rho(x,p;\beta,\xi)) &\text{ if $x\neq p$}, \\
	-\beta d(\log_x)_p^\dagger\xi_x &\text{ if $x=p$},
	\end{cases}
	\end{align*} 
	for $(x,p)\in M\times M$, where $d(\log_x)_p^\dagger:T_{\log_x(p)}T_xM\cong T_xM\rightarrow T_pM$ is the adjoint of the differential of $\log_x:M\rightarrow T_xM$; that is, $\langle d(\log_x)_p^\dagger u_1, u_2\rangle_g=\langle u_1, d(\log_x)_pu_2\rangle_g$ for any $u_1\in T_xM, u_2\in T_pM$. For $x\neq p$ and $v\in T_pM$, 
	\begin{equation}\begin{aligned} \label{dbequiv}
	\langle v,\Psi(x,p;\beta,\xi)\rangle_2=\langle v,\nabla \rho(x,p;\beta,\xi)\rangle_g=d(\rho(x,\cdot;\beta,\xi))_p(v),
	\end{aligned}\end{equation}
	and therefore, when $p\neq x$, $\Psi(x,p;\beta,\xi)$ is the Euclidean (not Riemannian) gradient of $\rho(x,p;\beta,\xi)$ as a function of its second argument. Again, recalling the smoothness of this function when $p\neq x$, denote the Euclidean Hessian matrix of this function when $p\neq x$ by $D\Psi(x,p;\beta,\xi)$ and, for $r,r'=1,\ldots,n$, the $(r,r')$-entry of this matrix by $D_{r'}\Psi^r(x,p;\beta,\xi)$; $D\Psi(x,p;\beta,\xi)$ is the Jacobian matrix of $\Psi(x,p;\beta,\xi)$ as a function of its second argument $p\neq x$. For convenience, let each $D\Psi(x,p;\beta,\xi)$ be 0 when $p=x$.
	
	\sloppy Given $(\beta_1,\xi_1),\ldots,(\beta_K,\xi_K)\in[0,1)\times\partial M$, denote $\rho(x,p;\beta_k,\xi_k)$, $\Psi(x,p;\beta_k,\xi_k)$, $D\Psi(x,p;\beta,\xi)$ and $D_{r'}\Psi^r(x,p;\beta_k,\xi_k)$ by the shorthands $\rho_k(x,p)$, $\Psi_k(x,p)$, $D\Psi_k(x,p)$ and $D_{r'}\Psi_k^r(x,p)$, respectively, for each $k=1,\ldots,K$. We use $q_k$ and $\hat{q}_{k,N}$ as shorthands to denote the quantile $q(\beta_k,\xi_k)$ and some measurable selection from the sample quantile set $\hat{q}_N(\beta_k,\xi_k)$, respectively, for $k=1,\ldots,K$.

 Denote the $L_2$ norm and standard Euclidean inner product on $\mathbb{R}^n$ by $\lVert \cdot\rVert_2$ and $\langle\cdot,\cdot\rangle_2$, respectively. In the proofs, we also make reference to the $\sup$, or $L_\infty$, norm of a vector with $\lVert\cdot\rVert_\infty$, and the Frobenius norm of a matrix with $\lVert\cdot\rVert_F$.

 As noted near the beginning of this section, this theorem requires simpler conditions than Theorem 4.2 in \cite{Shin2023}. We do not need to stipulate that $E[\Psi_k(X,q_k)]=0$ in (I) or that $E[\lVert\Psi(X,q_k)\rVert_2^2]<\infty$. Crucially, a $C^2$ condition on the loss functions is unnecessary, and this is carried through to Corollary \ref{dbbdd} and Proposition \ref{dbsupp}. Thus unlike the results in the corresponding section of \cite{Shin2023}, those in this section apply to all Hadamard manifolds.

	\begin{theorem} \label{dbclt1}
		Let $M$ be an $n$-dimensional Hadamard manifold, $n\geq2$. Suppose $X$ satisfies the following conditions for each $k=1,\ldots,K$:
		\begin{enumerate}[label=(\Roman*)]
			\item  the quantile $q_k$ exists uniquely and there is some neighborhood $Q_k\subset M$ around $q_k$ in which the density of $X$ exists and is bounded,
			\item there exists a positive number $d_1>0$ and neighborhood $U\subset M$ of $q_k$ that is bounded in the Riemannian metric such that $\{q:\lVert q-q_k\rVert_2\leq d_1\}\subset U$ and 
   \begin{equation*}
   E\bigg[\sup_{q:\lVert q-q_k\rVert_2\leq d_1}\lvert D_{r'}\Psi_k^r(X,q)\rvert^2;X\not\in \bar{U}\bigg]<\infty
   \end{equation*}
   for all $r,r'=1,\ldots,n$, and
			\item the $n\times n$ matrix $\Lambda^k$, defined by $\Lambda_{r,r'}^k=E[D_{r'}\Psi_k^r(X,q_k)]$ for $r,r'=1,\ldots,n$, is nonsingular.
		\end{enumerate}
		Additionally, if $n=2$, suppose the following condition is satisfied for each $k=1,\ldots,K$:
		\begin{enumerate}[label=(\Roman*)]
			\setcounter{enumi}{3}
			\item there exist $\eta_0>0$ and $\alpha>1$ such that $\sup_{w\in S^{n-1}}E[h_{r,r'}(\eta_0,w,X)^\alpha]<\infty$ for all $r,r'=1,\ldots,n$, where $h_{r,r'}$ is a map on $(0,\infty)\times S^{n-1}\times M$ defined by 
   \begin{equation*}
   h_{r,r'}(\eta,w,x)=\sup_{q:q\in q_k+[0,\eta]\times w}\lvert D_{r'}\Psi_k^r(x,q)-D_{r'}\Psi_k^r(x,q_k)\rvert I(x\not\in q_k+[0,\eta]\times w),
   \end{equation*}
   and $q_k+[0,\eta]\times w:=\{q_k+sw:0\leq s\leq\eta\}$.
		\end{enumerate}
		Then, $\sqrt{N}[(\hat{q}_{1,N}^T,\ldots,\hat{q}_{K,N}^T)^T-(q_1^T,\ldots,q_K^T)^T]\rightsquigarrow N(0,\Lambda^{-1}\Sigma\Lambda^{-1})$ as $N\rightarrow\infty$, where $\Lambda$ is the $Kn\times Kn$ block diagonal matrix whose $k$-th $n\times n$ diagonal block is $\Lambda^k$, and $\Sigma$ is the covariance matrix of $(\Psi_1(X,q_1)^T,\ldots,\Psi_K(X,q_K)^T)^T$.
	\end{theorem}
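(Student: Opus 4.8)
The plan is to treat each sample quantile $\hat q_{k,N}$ as an approximate Z-estimator and invoke the empirical-process route to asymptotic normality of Z-estimators (à la van der Vaart). Unlike the second-order Taylor expansion of the objective used in \cite{Shin2023}, this route only needs the \emph{population} criterion to be smooth near $q_k$, and thereby dispenses with the $C^2$ assumption on $\rho_k$ and with boundedness of the support.

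\medskip\noindent\emph{Estimating equations.} Since a finite-dimensional Hadamard manifold is locally compact, Proposition~\ref{dbbasic}(b) (under Assumption~\ref{dbassump}) makes each sample quantile set nonempty, and Theorem~\ref{dbslln}(b) together with the uniqueness in (I) gives $\hat q_{k,N}\to q_k$ almost surely. Write $\bar\Psi_{k,N}(p)=\frac1N\sum_{i=1}^N\Psi_k(X_i,p)$ and $\bar\Psi_k(p)=E[\Psi_k(X,p)]$. On the event of probability tending to one that $\hat q_{k,N}$ lies in the neighbourhood $Q_k$ of (I) where $X$ has a density, at most one data point equals $\hat q_{k,N}$ almost surely; first-order optimality of $\hat q_{k,N}$ for $\hat G^{\beta_k,\xi_k}_N$ — using \eqref{dbequiv} to pass between the Riemannian gradient and $\Psi_k$, and the subdifferential at a possible tie, where $\lVert\Psi_k\rVert_2\le 1+\beta_k$ — then gives $\lVert\bar\Psi_{k,N}(\hat q_{k,N})\rVert_2=O(1/N)=o_P(N^{-1/2})$. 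By Theorem~\ref{dbqgrad} and $P(X=q_k)=0$ (from (I)), $\bar\Psi_k(q_k)=0$, so $q_k$ solves the population equation.

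\medskip\noindent\emph{Analytic ingredients.} (i) \emph{Differentiability of $\bar\Psi_k$ at $q_k$.} Split $\bar\Psi_k(q)=E[\Psi_k(X,q);X\notin\bar U]+E[\Psi_k(X,q);X\in\bar U]$. On $\{X\notin\bar U\}$ the integrand is smooth in $q\in U$ with Jacobian dominated in $L_1$ by (II), so dominated convergence yields differentiability with derivative $E[D_{r'}\Psi_k^r(X,q_k);X\notin\bar U]$. On $\{X\in\bar U\}$ the density is bounded and the Euclidean Hessian $D\Psi_k(x,q)$ has a $\lVert x-q\rVert_2^{-1}$-type singularity, which is locally integrable precisely because $n\ge2$; an annulus-splitting / dominated-convergence argument shows this piece is also differentiable at $q_k$ with derivative $E[D_{r'}\Psi_k^r(X,q_k);X\in\bar U]$. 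Adding, $\bar\Psi_k$ is differentiable at $q_k$ with Jacobian $\Lambda^k$, nonsingular by (III). (ii) \emph{Stochastic equicontinuity.} I would show $\sqrt N\big[(\bar\Psi_{k,N}-\bar\Psi_k)(q)-(\bar\Psi_{k,N}-\bar\Psi_k)(q_k)\big]\to 0$ in probability uniformly over $q$ in shrinking neighbourhoods of $q_k$. The $\{X\notin\bar U\}$ part comes from a class that is Lipschitz in $q$ with square-integrable Lipschitz constant (II) and bounded envelope, hence Donsker, so asymptotically equicontinuous at $q_k$. For the $\{X\in\bar U\}$ part, the bounded density makes the $L_2$-size of the increment $\Psi_k^r(\cdot,q)-\Psi_k^r(\cdot,q_k)$ of order $\lVert q-q_k\rVert_2$ when $n\ge3$ but of order $\lVert q-q_k\rVert_2\sqrt{\log(1/\lVert q-q_k\rVert_2)}$ when $n=2$; in the latter case the logarithmic factor is absorbed by the extra $\alpha$-moment of condition (IV) through a bracketing/chaining maximal inequality taken along rays $q_k+[0,\eta]\times w$, which is exactly why (IV) is needed only when $n=2$.

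\medskip\noindent\emph{Assembly.} From consistency, (i), (ii) and $\bar\Psi_{k,N}(q_k)=O_P(N^{-1/2})$ (a normalized i.i.d.\ sum of bounded, hence square-integrable, mean-zero vectors), the standard Z-estimator argument first upgrades to $\lVert\hat q_{k,N}-q_k\rVert_2=O_P(N^{-1/2})$ and then gives $0=\bar\Psi_{k,N}(q_k)+\Lambda^k(\hat q_{k,N}-q_k)+o_P(N^{-1/2})$, i.e.\ $\sqrt N(\hat q_{k,N}-q_k)=-(\Lambda^k)^{-1}\sqrt N\,\bar\Psi_{k,N}(q_k)+o_P(1)$. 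Stacking over $k$, $\sqrt N\,(\bar\Psi_{1,N}(q_1)^T,\dots,\bar\Psi_{K,N}(q_K)^T)^T\rightsquigarrow N(0,\Sigma)$ by the multivariate CLT, and applying the block-diagonal matrix $-\Lambda^{-1}$ together with Slutsky yields the stated $N(0,\Lambda^{-1}\Sigma\Lambda^{-1})$ limit. I expect the main obstacle to be ingredients (i)–(ii) near $q_k$: rigorously differentiating through the $\lVert x-q\rVert_2^{-1}$-singular kernel against a merely bounded density, and, for $n=2$, controlling the empirical process despite the logarithmic blow-up of the $L_2$-modulus — precisely the place where condition (IV) and the empirical-process method replace the $C^2$ hypothesis of \cite{Shin2023}.
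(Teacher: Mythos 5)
Your proposal is in the same conceptual family as the paper's actual proof, but it routes through different machinery. The paper's proof simply adapts the proof of Theorem 4.2 of \cite{Shin2023}, whose core step is to verify the hypotheses of Theorem~3 and its corollary in Section~4 of \cite{Huber1967} for the estimating function $\Psi_k$; the differences the paper spells out are that Huber's (N-2) (that $q_k$ solves the population equation) now follows from Theorem~\ref{dbqgrad} plus the density in (I) rather than being assumed, and that Huber's (N-4) (finite second moment of $\Psi_k(X,q_k)$) follows from Lemma~\ref{dblipschitz}/Lemma~\ref{dbtwo} rather than being assumed. Your plan replaces Huber's explicit modulus conditions with a van-der-Vaart-style Z-estimator master theorem: population-criterion differentiability (your step (i)) plays the role of Huber's (N-2)/(N-3)(i), and stochastic equicontinuity (your step (ii)) plays the role of Huber's (N-3)(ii)–(iii). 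Since Huber's theorem is itself a Z-estimator argument, this is the same strategy executed with different abstract tools. What your version buys is a cleaner separation into ``Jacobian of the population criterion'' versus ``empirical-process fluctuation''; what it costs is that both of these must be verified from scratch, whereas the paper can point at a published verification and only edit the deltas.

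Two points to tighten. First, the inequality $\lVert\Psi_k\rVert_2\le 1+\beta_k$ is not literally true: Proposition~\ref{dbgrad}(a) bounds the Riemannian norm $\lVert\nabla\rho\rVert_g$, but $\Psi_k = g_p\nabla\rho$ is the chart-dependent Euclidean gradient, and the passage from $\lVert\cdot\rVert_g$ to $\lVert\cdot\rVert_2$ introduces a factor depending on $g_p$. This is handled in the paper by the Lipschitz constant $\kappa$ of Lemma~\ref{dblipschitz} (leading to the bound $\lVert\Psi_k(x,p)\rVert_2\le\sqrt\kappa+\lVert\Psi_k(p,p)\rVert_2$ in \eqref{dbneed}); your conclusion $\lVert\bar\Psi_{k,N}(\hat q_{k,N})\rVert_2=O(1/N)$ survives, but the constant is local, not $1+\beta_k$. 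Second, the $n=2$ case is where the real work lies, and your description (``the logarithmic factor is absorbed by the extra $\alpha$-moment of (IV) through a bracketing/chaining maximal inequality taken along rays'') is a plan rather than a proof. Condition (IV) controls the modulus of the Hessian $D\Psi_k$ along rays, and you would need to carry that through either the mean value theorem into a modulus bound on $\Psi_k$ (to feed a bracketing entropy bound) or into the second-order remainder of $\bar\Psi_k$. This is exactly the role (IV) plays in verifying Huber's (N-3) in the cited proof, so the gap is one of execution, not of conception; but as written your sketch does not yet supply that verification.
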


    The need for special consideration in the 2-dimensional case when it comes to asymptotic properties of geometric quantiles has Euclidean precedent; see, for example, Theorems 3.1.1 and 3.2.1 of \cite{Chaudhuri1996}.

	\begin{corollary} \label{dbbdd}
Let $M$ be an $n$-dimensional Hadamard manifold, $n\geq2$, and $X$ have bounded support according to the Riemannian metric. Suppose that for each $k=1,\ldots,K$, $q_k$ exists uniquely, there is some neighborhood around $q_k$ in which the density of $X$ exists and is bounded, (III) in Theorem \ref{dbclt1} holds and, if $n=2$, (IV) in Theorem \ref{dbclt1} holds. Then the conclusion of Theorem \ref{dbclt1} follows.
	\end{corollary}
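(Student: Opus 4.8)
The plan is to deduce Corollary \ref{dbbdd} directly from Theorem \ref{dbclt1} by showing that the boundedness of the support of $X$, together with the explicitly stated hypotheses, implies the full list of hypotheses (I)--(III) (and (IV) when $n=2$) of the theorem. Since (III) is assumed verbatim and (IV) is assumed verbatim when $n=2$, and since "$q_k$ exists uniquely" and "the density of $X$ exists and is bounded in a neighborhood of $q_k$" are exactly condition (I), the only real work is to verify condition (II): the existence of $d_1>0$ and a Riemannian-bounded neighborhood $U$ of $q_k$ with $\{q:\lVert q-q_k\rVert_2\leq d_1\}\subset U$ such that the truncated second-moment bound $E[\sup_{q:\lVert q-q_k\rVert_2\leq d_1}\lvert D_{r'}\Psi_k^r(X,q)\rvert^2;X\notin\bar U]<\infty$ holds for all $r,r'$.

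The key observation is that when the support $S$ of $X$ is bounded in the Riemannian metric, we may simply choose $U$ to be a bounded open neighborhood of $q_k$ large enough that $S\subset \bar U$ (for instance a sufficiently large metric ball around $q_k$; such a ball is Riemannian-bounded, and since $\phi$ is a homeomorphism we can also arrange $\{q:\lVert q-q_k\rVert_2\leq d_1\}\subset U$ for some small $d_1>0$). With this choice, the event $\{X\notin\bar U\}$ has probability zero, so $E[\,\cdot\,;X\notin\bar U]=0<\infty$ trivially, and (II) holds vacuously. Hence all hypotheses of Theorem \ref{dbclt1} are met and its conclusion---the joint asymptotic normality $\sqrt N[(\hat q_{1,N}^T,\ldots,\hat q_{K,N}^T)^T-(q_1^T,\ldots,q_K^T)^T]\rightsquigarrow N(0,\Lambda^{-1}\Sigma\Lambda^{-1})$---follows immediately.

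I would also remark on why $\Sigma$ is well-defined here without an explicit moment assumption: on bounded support, $\rho(x,\cdot;\beta_k,\xi_k)$ and hence its Euclidean gradient $\Psi_k(x,\cdot)$ are bounded on the relevant compact region of $M$ (away from the single point $q_k$, which has $P(X=q_k)=0$ by (I)), so $E[\lVert\Psi_k(X,q_k)\rVert_2^2]<\infty$ automatically and the covariance matrix $\Sigma$ of $(\Psi_1(X,q_1)^T,\ldots,\Psi_K(X,q_K)^T)^T$ exists. The main (and only) subtlety to get right is the compatibility of the two notions of boundedness used in condition (II): "$U$ bounded in the Riemannian metric" versus the Euclidean-coordinate ball $\{q:\lVert q-q_k\rVert_2\leq d_1\}$. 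Since $\phi$ is a global diffeomorphism and $q_k$ is interior, a small Euclidean-coordinate ball around $q_k$ sits inside any given Riemannian neighborhood of $q_k$, so picking $U$ first (large, Riemannian-bounded, containing $\bar S$) and then $d_1$ small resolves this without difficulty. I expect this bookkeeping to be the most delicate point, but it is routine; the substance of the corollary is the vacuous verification of (II) on bounded support.
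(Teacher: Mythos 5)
Your proof is correct and takes essentially the same approach as the paper's: choose $U$ large enough to contain the (bounded) support, so that $P(X\notin\bar U)=0$ and condition (II) holds vacuously. The extra remark about $\Sigma$ is unnecessary—Lemma \ref{dbtwo} already shows $E[\lVert\Psi_k(X,p)\rVert_2^2]<\infty$ without any support assumption—but it does no harm.
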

		
	\begin{proposition} \label{dbsupp}

Let $M$ be an $n$-dimensional Hadamard manifold, $n\geq2$, and $X$ have bounded support according to the Riemannian metric. If, for any $k=1,\ldots,K$, $q_{k}$ exists uniquely and is not in the support of $X$ and (III) in Theorem \ref{dbclt1} holds, then the conditions for that $k$ in Corollary \ref{dbbdd} are satisfied.

	\end{proposition}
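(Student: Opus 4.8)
The plan is to verify, for the given index $k$, every hypothesis of Corollary \ref{dbbdd}. Uniqueness of $q_k$ and condition (III) of Theorem \ref{dbclt1} are assumed outright, so the only things left to supply are a neighborhood of $q_k$ on which the density of $X$ exists and is bounded, and---when $n=2$---condition (IV) of Theorem \ref{dbclt1}.

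The density condition is immediate. Since $q_k\notin\mathrm{supp}(X)$, by definition of the support there is an open neighborhood $N$ of $q_k$ with $P(X\in N)=0$; hence on $N$ the density of $X$ with respect to Riemannian volume exists and equals $0$, which is certainly bounded.

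The substantive point is (IV). First, $\mathrm{supp}(X)$ is closed and, by hypothesis, bounded in the Riemannian metric, so since Hadamard manifolds are complete the Hopf--Rinow theorem makes it compact; write $S:=\mathrm{supp}(X)$ and $\delta:=d(q_k,S)>0$. Working in the global chart $\phi$, I would choose $\eta_0>0$ small enough that the closed Euclidean ball $\bar{B}:=\{q:\lVert q-q_k\rVert_2\leq\eta_0\}$ lies inside the chart image and within Riemannian distance $\delta/2$ of $q_k$, which is possible because the Euclidean and Riemannian metrics are locally equivalent near $q_k$. Then $\bar{B}\cap S=\emptyset$, so for every $w\in S^{n-1}$ the segment $q_k+[0,\eta_0]\times w\subset\bar{B}$ misses $S$, and consequently for $x\in S$ the indicator $I(x\notin q_k+[0,\eta_0]\times w)$ equals $1$. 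Because $(x,q)\mapsto D_{r'}\Psi_k^r(x,q)$ is smooth, hence continuous, on $\{x\neq q\}$ and $S\times\bar{B}$ is a compact subset of that set, there is a finite constant $C_{r,r'}$ with $\lvert D_{r'}\Psi_k^r(x,q)\rvert\leq C_{r,r'}$ on $S\times\bar{B}$; therefore $h_{r,r'}(\eta_0,w,x)\leq 2C_{r,r'}$ for all $x\in S$ and $w\in S^{n-1}$. Since $X\in S$ almost surely, taking any $\alpha>1$ (say $\alpha=2$) gives $\sup_{w\in S^{n-1}}E[h_{r,r'}(\eta_0,w,X)^\alpha]\leq(2C_{r,r'})^\alpha<\infty$ for all $r,r'=1,\ldots,n$, which is exactly (IV).

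No step presents a genuine obstacle; the only place calling for a little care is reconciling the chart-coordinate segments $q_k+[0,\eta]\times w$ appearing in (IV) with the Riemannian metric used to phrase the support hypothesis, and this is handled by the local equivalence of the two metrics near $q_k$ together with the strictly positive Riemannian distance $\delta$ from $q_k$ to the compact support $S$.
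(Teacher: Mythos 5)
Your proposal is correct and takes what is essentially the intended route. The paper itself defers this proposition to the analogous result in \cite{Shin2023}, and your argument is the natural verification of Corollary~\ref{dbbdd}'s hypotheses: the density condition follows immediately from $q_k\notin\mathrm{supp}(X)$, and condition (IV) follows because the bounded, hence compact, support $S$ lies at strictly positive distance from $q_k$, so the map $(x,q)\mapsto D_{r'}\Psi_k^r(x,q)$ is continuous and hence bounded on the compact set $S\times\bar B$, giving a uniform bound on $h_{r,r'}(\eta_0,w,\cdot)$ independent of $w$. One small clarification: you assert that $(x,q)\mapsto D_{r'}\Psi_k^r(x,q)$ is \emph{smooth} on $\{x\neq q\}$, but since the radial field $x\mapsto\xi_x$ is only guaranteed to be $C^1$ (see Section~\ref{dbhyp}), what one actually has in general is joint continuity---which is all your compactness argument requires, so the proof stands as written.
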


Theorem 2.11 of \cite{Eltzner2019}, which they call a smeary central limit theorem, proves asymptotic normality for minimizers of general loss functions under certain conditions. One may ask whether this theorem can be applied to quantiles on Hadamard manifolds. With parameter-based quantiles, it was explained in Remark 4.6 of \cite{Shin2023} that Theorem 4.2 of that paper and its corollary do not seem to follow from the smeary central limit theorem. However, for data-based quantiles, we can use the smeary central limit theorem to prove the following, which is stronger than Theorem \ref{dbclt1} when $n\geq 3$ as it weakens (II). Thus this is a new result, proven in a completely different way, without parallel in \cite{Shin2023}.

 \begin{theorem} \label{dbclt2}
		Let $M$ be an $n$-dimensional Hadamard manifold, $n\geq 3$. Suppose $X$ satisfies the following conditions for each $k=1,\ldots,K$: (I) and (III) from Theorem \ref{dbclt1}, and 
		\begin{enumerate}
			\item[(IIa)] there exists positive numbers $d_1>0$ and $\alpha>1$, and neighborhood $U\subset M$ of $q_k$ that is bounded in the Riemannian metric such that $\{q:\lVert q-q_k\rVert_2\leq d_1\}\subset U$ and 
   \begin{equation*}
   E\bigg[\sup_{q:\lVert q-q_k\rVert_2\leq d_1}\lvert D_{r'}\Psi_k^r(X,q)\rvert^\alpha;X\not\in \bar{U}\bigg]<\infty
   \end{equation*}
   for all $r,r'=1,\ldots,n$.
		\end{enumerate}
		Then the conclusion of Theorem \ref{dbclt1} follows.
	\end{theorem}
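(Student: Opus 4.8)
The plan is to derive Theorem \ref{dbclt2} as a consequence of the smeary central limit theorem (Theorem 2.11 of \cite{Eltzner2019}) applied to the map $p\mapsto G^{\beta_k,\xi_k}(p)$ at $p=q_k$, treating the $K$ quantiles one at a time and then assembling the joint statement. The smeary CLT requires, roughly: (i) a unique population minimizer; (ii) a quadratic-type Taylor expansion of the population objective near the minimizer with a nonsingular Hessian (which here will be $\Lambda^k$, via \eqref{dbequiv} and the fact that $\Psi_k(X,p)$ is the Euclidean gradient of $\rho_k(X,p)$ when $p\neq X$); (iii) a square-integrable ``gradient'' at $q_k$, i.e. $E[\lVert\Psi_k(X,q_k)\rVert_2^2]<\infty$; and (iv) an envelope/stochastic-equicontinuity condition on the empirical gradient process near $q_k$. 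Condition (I) handles (i) and, together with Theorem \ref{dbqgrad} (which gives $E[\Psi_k(X,q_k)]=0$ under $P(X=q_k)=0$, automatic here since the density exists near $q_k$), supplies the centering; condition (III) handles the nonsingular Hessian. The work is in using (IIa) to verify the integrability and equicontinuity hypotheses with the weaker exponent $\alpha>1$ rather than $\alpha=2$.

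First I would establish the local regularity of the population objective: using the boundedness of the density on $Q_k$ from (I), argue that $G^{\beta_k,\xi_k}$ is twice differentiable at $q_k$ with Hessian $\Lambda^k$. The key point, already used in the proof of Theorem \ref{dbclt1}, is that the singularity of $\rho_k(x,\cdot)$ at $x=p$ is integrable against a bounded density in dimension $n\geq 3$ (the radial Jacobian $r^{n-1}$ beats the $r^{-1}$-type blow-up of the Hessian of $\lVert\log_x(p)\rVert$), so differentiation under the expectation is justified on $Q_k$; away from $Q_k$, condition (IIa) with exponent $\alpha>1$ gives a dominating function (note $\lvert\cdot\rvert^\alpha$ with $\alpha>1$ still dominates $\lvert\cdot\rvert$ on the tails after truncation, so the Hessian entries $E[D_{r'}\Psi_k^r(X,q_k)]$ are finite). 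Next I would verify the second-moment condition on the gradient: $E[\lVert\Psi_k(X,q_k)\rVert_2^2]<\infty$. Here I would split the expectation over $X\in\bar U$ and $X\notin\bar U$; on $\bar U$ the integrand is bounded because $\lVert\Psi_k(x,q_k)\rVert_2\leq 1+\beta_k<2$ uniformly (the gradient of $\lVert\log_x(q_k)\rVert$ has norm $1$, and the linear term contributes at most $\beta_k$), so this piece is trivially finite, while on the complement the same uniform bound $\lVert\Psi_k\rVert_2<1+\beta_k$ applies — in fact $\lVert\Psi_k(X,q_k)\rVert_2$ is \emph{bounded everywhere}, so $E[\lVert\Psi_k(X,q_k)\rVert_2^2]<\infty$ is immediate and does not even need (IIa). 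The role of (IIa) is entirely in the next step.

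The main obstacle, and the step I would spend the most care on, is the stochastic equicontinuity / Donsker-type condition that the smeary CLT imposes on the centered empirical gradient process $p\mapsto \frac1N\sum_i[\Psi_k(X_i,p)-\Psi_k(X_i,q_k)]-E[\Psi_k(X,p)-\Psi_k(X,q_k)]$ near $q_k$, or equivalently a bracketing/envelope bound with the modulus governed by $\alpha$. I would handle this by a mean-value expansion: for $p$ near $q_k$ and $X\neq p$, $\Psi_k(X,p)-\Psi_k(X,q_k)$ is controlled by $\sup_{q:\lVert q-q_k\rVert_2\leq d_1}\lVert D\Psi_k(X,q)\rVert_F\cdot\lVert p-q_k\rVert_2$, whose relevant moment on $X\notin\bar U$ is exactly what (IIa) bounds (with exponent $\alpha$), and on $X\in\bar U$ one must instead use the bounded-density argument from (I) to control the contribution of $X$ near $q_k$, where $D\Psi_k$ blows up — again the $r^{n-1}$ Jacobian saves us for $n\geq 3$ and furnishes a local $L^\alpha$ (indeed $L^p$ for a range of $p$) envelope. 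The upshot is that the envelope of the increment class lies in $L^\alpha$ for some $\alpha>1$, which is precisely the integrability hypothesis of the smeary CLT (this is weaker than the $L^2$ envelope that a classical Donsker argument, as in Theorem \ref{dbclt1}, would demand — hence the improvement). Once the hypotheses of Theorem 2.11 of \cite{Eltzner2019} are checked for each $k$, it yields $\sqrt N(\hat q_{k,N}-q_k)\rightsquigarrow N(0,(\Lambda^k)^{-1}\Sigma_{kk}(\Lambda^k)^{-1})$ with an asymptotically linear (Bahadur-type) representation $\sqrt N(\hat q_{k,N}-q_k)=-(\Lambda^k)^{-1}\frac1{\sqrt N}\sum_i\Psi_k(X_i,q_k)+o_P(1)$. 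Stacking these representations over $k=1,\ldots,K$ and applying the multivariate CLT to $\frac1{\sqrt N}\sum_i(\Psi_1(X_i,q_1)^T,\ldots,\Psi_K(X_i,q_K)^T)^T$ (whose covariance is $\Sigma$, finite by the uniform boundedness of each $\Psi_k(\cdot,q_k)$) together with Slutsky gives the joint limit $N(0,\Lambda^{-1}\Sigma\Lambda^{-1})$, exactly as in the conclusion of Theorem \ref{dbclt1}. I would close by remarking that the $n\geq 3$ restriction is what makes the local radial integral $\int_0 r^{n-1}\cdot r^{-1}\,dr$ converge with room to spare (giving the $L^\alpha$, $\alpha>1$, envelope rather than merely $L^1$), which is precisely why condition (IV) of Theorem \ref{dbclt1} is not needed here and why this route is unavailable in the $n=2$ case.
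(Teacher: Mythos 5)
Your overall strategy matches the paper's: both invoke the smeary central limit theorem (Theorem~2.11 of \cite{Eltzner2019}), and your observation that $n\geq 3$ is what makes the radial $r^{n-1}$ factor tame the singularity of $D\Psi_k$ near $q_k$ (hence why (IV) is unnecessary here) is exactly right. You do, however, take a marginal-then-stack route --- applying the smeary CLT to each $k$ and then assembling the joint limit from the Bahadur representations --- whereas the paper applies the theorem once to the product manifold $M^K$ with the aggregated loss $\boldsymbol{\rho}(\boldsymbol{x},\boldsymbol{p})=\sum_k\rho_k(\pi_k(\boldsymbol{x}),\pi_k(\boldsymbol{p}))$. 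Both routes work; the product formulation avoids having to extract the asymptotic linearization from inside the proof of the smeary CLT, at the cost of a slightly heavier setup.

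That said, your account misidentifies what the smeary CLT actually requires and, as a consequence, misplaces the weight of the argument. Theorem~2.11 of \cite{Eltzner2019} does \emph{not} impose a stochastic-equicontinuity/Donsker-type condition on the centered empirical gradient process; that is the hypothesis structure of the Huber-style argument used to prove Theorem~\ref{dbclt1}, not of the smeary CLT. What the smeary CLT asks for is (Assumption~2.4(ii)) a pathwise local Lipschitz bound on the \emph{loss} $p\mapsto\rho_k(x,p)$ itself, which holds with a deterministic Lipschitz constant by Lemma~\ref{dblipschitz} and requires neither (IIa) nor any gradient-process control; and (Assumption~2.6) a second-order Taylor expansion of the \emph{population} objective $F_k$ at $q_k$ with a nonsingular leading matrix. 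The hypothesis (IIa), with its exponent $\alpha>1$, is used \emph{only} to establish that $F_k$ is twice continuously differentiable near $q_k$ with Hessian $\Lambda^k$ (Lemma~\ref{dbdiff}(b)). You treat this as a quick preliminary (``differentiation under the expectation is justified''), but it is the technical heart of the proof: the paper's argument passes through $\alpha'=1+1/j$, the elementary inequality $(a+b)^{\alpha'}\leq a^{\alpha'}+ab^{\alpha'-1}+a^{\alpha'-1}b+b^{\alpha'}$, radial integrals requiring $n>\alpha'$ and $n>\alpha'+1$ (hence $n\geq 3$), Dini's theorem for uniform convergence over $S^{n-1}$, and a mean value expansion --- none of which is a routine dominated-convergence swap.

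Two smaller but genuine errors. First, the claimed uniform bound $\lVert\Psi_k(X,q_k)\rVert_2\leq 1+\beta_k$ is incorrect: $\Psi_k$ is the \emph{chart-coordinate} (Euclidean) gradient $g_p\nabla\rho_k$, and Proposition~\ref{dbgrad}(a) bounds the \emph{Riemannian} norm $\lVert\nabla\rho_k\rVert_g$, not $\lVert\Psi_k\rVert_2$. The correct conclusion --- that $\lVert\Psi_k(x,q_k)\rVert_2$ is bounded uniformly in $x$, hence square-integrable --- holds, but comes from the local Lipschitz constant of the loss in the chart (Lemmas~\ref{dblipschitz} and~\ref{dbtwo}), not from the Riemannian norm bound. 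Second, since the ``Donsker condition'' you devote the most space to is not a hypothesis of the smeary CLT, the mean-value/envelope argument you sketch for it is not needed; a proof along these lines should instead concentrate that effort on Lemma~\ref{dbdiff}(b).
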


 \begin{remark}\label{dbmistake}
There is a significant typo in the statement of Theorem 2.11 of \cite{Eltzner2019}. Forgetting the notation used in the current paper for the duration of this remark and using instead the notation of that paper, the covariance of the limiting distribution should be $(1/r^2)T^{-1}R\text{Cov}[\text{grad}|_{x=0} \rho(x,X)]R^TT^{-1}$ and not $(1/r^2)T^{-1}\text{Cov}[\text{grad}|_{x=0} \rho(x,X)]T^{-1}$. This is because the proof involves showing that the sequence of random vectors is equal to $-(1/r)T^{-1}RG_n+o_p(1)$, where $G_n=\sqrt{n}((1/n)\sum_{j=1}^n\text{grad}|_{x=0} \rho(x,X_j)-E[\text{grad}|_{x=0} \rho(x,X)])$, from which our corrected limiting distribution immediately follows using the classical central limit theorem.
  \end{remark}

  Again, as mentioned at the beginning of this section, see the remarks of Section 4 of \cite{Shin2023} for more insights, including about the conditions of Theorems \ref{dbclt1} and \ref{dbclt2}.

 \section{Breakdown points}\label{dbrobustness}

Perhaps the most common global measure of robustness is the breakdown point, which indicates the smallest fraction of a sample that needs to be corrupted to produce arbitrary changes in the statistics. That is, given an ordered sample $\mathbf{X}:=(X_1,\ldots,X_N)$ and a statistic $\hat T:M^N\rightarrow M'$, where $(M',d')$ is some metric space, the breakdown point of $\hat T$ at $\mathbf{X}$ is
\begin{equation*}
    \min_{j\in\{1,\ldots,N\}} \bigg\{\frac{j}{N}:\sup_{Y_1,\ldots,Y_N:\sum_{i=1}^NI(X_i\neq Y_i)\leq j}d'(\hat T(\mathbf{X}),\hat T(Y_1,\ldots,Y_N))=\infty\bigg\},
\end{equation*}
so the $\sup$ is taken over all ordered samples $(Y_1,\ldots,Y_N)$ that differ from $\mathbf{X}$ by at most $j$ observations. 

\begin{definition}\label{quantilefunction}
A \textit{sample $(\beta,\xi)$-quantile function} is a function $\hat T:M^N\rightarrow M$ that maps $(Y_1,\ldots,Y_N)$ to an element of the sample $(\beta,\xi)$-quantile set defined by $(Y_1,\ldots,Y_N)$. A \textit{sample median function} is a sample $(0,\xi)$-quantile function for any $\xi\in\partial M$.
\end{definition}

\begin{theorem}\label{dbbdp}
Fix $\beta\in[0,1)$ and a positive integer $N$. Then for any Hadamard manifold $M$, sample $\mathbf{X}\in M^N$ and $\xi\in\partial M$, the breakdown point at $\mathbf{X}$ of any sample $(\beta,\xi)$-quantile function is

(a) $(\lceil(1-\beta)N/2\rceil)/N$ if $(1-\beta)N/2\not\in\mathbb{Z}$, and

(b) $(\lceil(1-\beta)N/2\rceil)/N$ or $(\lceil(1-\beta)N/2\rceil+1)/N$ if $(1-\beta)N/2\in\mathbb{Z}$. 

(c) For either potential breakdown point in (b), there exist some Hadamard manifold $M$, $\xi\in\partial M$ and sample $(\beta,\xi)$-quantile function for which that breakdown point is achieved everywhere in $M^N$ if $(1-\beta)N/2\in\mathbb{Z}$; therefore the statement in (b) is sharp.
\end{theorem}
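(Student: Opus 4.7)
The plan is to prove matching lower and upper bounds on the number $j$ of corruptions needed to force the sample quantile to diverge, based on two elementary inequalities for the loss that follow from $1-\beta\cos(\angle)\in[1-\beta,1+\beta]$: the linear lower bound $\rho(y,p;\beta,\xi)\ge(1-\beta)d(y,p)$, and the Lipschitz bound $\lvert\rho(y,p;\beta,\xi)-\rho(y,p';\beta,\xi)\rvert\le(1+\beta)d(p,p')$, the latter combining the two-sided bound on $1-\beta\cos$ with $\lvert d(y,p)-d(y,p')\rvert\le d(p,p')$.

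For the lower bound, I would show that if $j<(1-\beta)N/2$ then every sample $(\beta,\xi)$-quantile of every corrupted sample $\mathbf{Y}=(Y_1,\ldots,Y_N)$ with $\sum_iI(X_i\neq Y_i)\le j$ lies in an explicit ball around $X_1$ whose radius depends only on $\mathbf{X}$ and $\beta$. Comparing $\hat G_N^{\mathbf{Y}}(p)$ with $\hat G_N^{\mathbf{Y}}(X_1)$: each of the at least $N-j$ good terms satisfies $\rho(X_i,p;\beta,\xi)-\rho(X_i,X_1;\beta,\xi)\ge(1-\beta)d(p,X_1)-2d(X_i,X_1)$ via the linear lower bound combined with $d(X_i,p)\ge d(p,X_1)-d(X_i,X_1)$, and each of the at most $j$ bad terms is bounded below by $-(1+\beta)d(p,X_1)$ by the Lipschitz bound. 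Summing yields
\begin{equation*}
\hat G_N^{\mathbf{Y}}(p)-\hat G_N^{\mathbf{Y}}(X_1)\ge\frac{N(1-\beta)-2j}{N}\,d(p,X_1)-\frac{2}{N}\sum_{i=1}^Nd(X_i,X_1),
\end{equation*}
so when $j<(1-\beta)N/2$ the leading coefficient is strictly positive and every minimizer is trapped in the resulting ball, ruling out breakdown at such $j$.

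For the upper bound, I fix a basepoint $p_0$ and the unique ray $\gamma$ in $\xi$ from $p_0$, and for each $m\in\mathbb{N}$ replace $j$ of the $X_i$'s by $\gamma(m)$, obtaining a corrupted sample with loss $\hat G_N^{(m)}$. Two observations on Hadamard manifolds pin down the behavior of $\hat G_N^{(m)}$ along $\gamma$: uniqueness of geodesics gives $\angle_{\gamma(m)}(\gamma(t),\xi)=\pi$ for $t<m$, so $\rho(\gamma(m),\gamma(t);\beta,\xi)=(1+\beta)(m-t)$; and since the geodesic segment from each good $X_i$ to $\gamma(t)$ converges, as $t\to\infty$, to the unique ray in $\xi$ from $X_i$, we have $\angle_{X_i}(\gamma(t),\xi)\to 0$ and $\rho(X_i,\gamma(t);\beta,\xi)/t\to 1-\beta$. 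Hence the $t$-slope of $\hat G_N^{(m)}(\gamma(t))$, once $t$ has passed all good data, approaches $[N(1-\beta)-2j]/N$. When $j>(1-\beta)N/2$ this is strictly negative, and any hypothetically uniformly bounded minimizer $\hat q_m$ would force $\angle_{\gamma(m)}(\hat q_m,\xi)\to\pi$ and hence $\hat G_N^{(m)}(\hat q_m)\gtrsim j(1+\beta)m/N$, contradicting $\hat G_N^{(m)}(\hat q_m)\le\hat G_N^{(m)}(\gamma(m))=(N-j)(1-\beta)m/N+O(1)$ precisely when $2j>N(1-\beta)$. Thus every measurable selection diverges; combined with the lower bound this gives breakdown $=\lceil(1-\beta)N/2\rceil/N$ in case (a) and breakdown $\le(k+1)/N$ in case (b), where $k:=(1-\beta)N/2$.

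For the tie case (b) and sharpness (c), when $k\in\mathbb{Z}$ and $j=k$ the leading slope vanishes and lower-order behavior together with the choice of $\hat T$ decides whether the argmin extends toward $\gamma(m)$. On $M=\mathbb{R}$ with $\xi=+\infty$, a direct computation shows that after $k$ corruptions of $\mathbf{X}$ to a common $y>\max_iX_i$ the sample quantile set is the interval $[Y_{(N-k)},y]$, so $\hat T=\max(\arg\min)$ breaks down at $k/N$ everywhere in $M^N$. Conversely, the selection $\hat T(\mathbf{Y})=\arg\min_{p\in\arg\min\hat G_N^{\mathbf{Y}}}\lvert p\rvert$ (the argmin element nearest the origin) can be shown bounded under every $j=k$ corruption by case analysis on the positions of the argmin endpoints $Y_{(N-k)}$ and $Y_{(N-k+1)}$ relative to $0$, and it breaks down at $(k+1)/N$ since the upper-bound construction with $j=k+1$ yields a singleton argmin. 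I expect the main technical point to verify is the angular asymptotic $\angle_{X_i}(\gamma(t),\xi)\to 0$ on a general Hadamard manifold, which reduces to the convergence of geodesic segments toward the unique geodesic ray in a prescribed boundary direction.
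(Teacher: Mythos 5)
Your proposal is essentially the same strategy the paper uses, and it is correct. For the lower bound you trap the minimizer in a ball via the comparison inequality based on $(1-\beta)d\le\rho\le(1+\beta)d$ and the Lipschitz bound $\lvert\rho(\cdot,p)-\rho(\cdot,p')\rvert\le(1+\beta)d(p,p')$; the paper's Lemma A.5 does exactly this, just anchoring the comparison at the uncorrupted sample quantile $\hat q_\xi$ rather than at a fixed data point $X_1$, which yields the same conclusion with a slightly different explicit radius. For the upper bound you push $j$ corrupted points to infinity along a ray in the class $\xi$ and compare the loss at the tip of the ray (asymptotically $(N-j)(1-\beta)m/N$) with the loss at a hypothetically bounded minimizer (asymptotically at least $j(1+\beta)m/N$); the paper runs the identical comparison with the ray emanating from $\hat q_\xi$, and like you it uses the identity $\log_{\gamma(m)}(\text{basepoint})=-m\,\xi_{\gamma(m)}$ together with the Cartan--Hadamard bound $\lvert\langle\xi_{\gamma(m)},\log_{\gamma(m)}(p_0)-\log_{\gamma(m)}(\hat q_m)\rangle\rvert\le d(p_0,\hat q_m)$ to make the ``angle $\to\pi$'' heuristic rigorous, and Proposition 3.3(a) of Shin (2023) for the ``angle $\to 0$'' direction at the good data points. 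For (c) both proofs work on $M=\mathbb{R}$; the paper achieves the breakdown $k/N$ by choosing the minimum of the argmin interval with $\xi$ pointing toward $-\infty$ (or by translation-equivariance for $\beta=0$), and $(k+1)/N$ by choosing the minimum with $\xi$ pointing toward $+\infty$ (for $\beta>0$) or the smallest-magnitude element with a negative tiebreak (for $\beta=0$), whereas you achieve the two endpoints with $\max(\arg\min)$ and with the nearest-to-origin selection respectively; both work, and your case analysis on $[Y_{(N-k)},Y_{(N-k+1)}]$ is equivalent to the paper's. The only thing I'd flag is that your ``$t$-slope'' heuristic in the upper-bound paragraph is not itself rigorous, but you immediately follow it with the correct contradiction argument comparing the two limits, so this is a wording issue rather than a gap.
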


Proofs for this section are in Appendix \ref{dbproof_robustness}.

\begin{remark}\label{dbelab}
    Theorem \ref{dbbdp}(c) shows that, despite the ambiguity regarding the exact breakdown point, Theorem \ref{dbbdp}(b) cannot be improved upon under the given conditions. In fact, Theorem \ref{dbbdp}(c) actually shows more than this, as the sharpness of Theorem \ref{dbbdp}(b) would only require the potential breakdown point to be achieved at some sample, not everywhere in $M^N$.

    Theorem 2 of \cite{Fletcher2009} and Theorem 2.2 of \cite{Lopuhaa1991} claim to give an exact breakdown point of $(\lfloor (N+1)/2\rfloor)/N=(\lceil N/2\rceil)/N$ for the geometric median on Riemannian manifolds and $\mathbb{R}^n$, respectively, contradicting Theorem \ref{dbbdp}(c). However, reading their proofs, one sees that neither has accounted for the possible non-uniqueness of the sample median. In the case of \cite{Lopuhaa1991}, their proof assumes that the sample median is translation equivariant but there exist selections from the sample median set that are not translation equivariant. For example, consider the sample quantile function that chooses the sample quantile $q=(q^1,\ldots,q^n)$ that minimizes the norm $\lVert q\rVert$, and in the event of multiple such quantiles, uses the coordinates in the order $j=1,\ldots, N$ as tiebreakers by choosing the sample quantile among the remaining candidates that minimizes $q^j$. Such a choice exists because quantile sets are compact by Proposition \ref{dbbasic}(b), as are their intersections with linear subspaces of $\mathbb{R}^n$. Minimizing the norm is the crucial step here as it breaks translation equivariance. In the proof of Theorem \ref{dbbdp}(c), we explicity demonstrate that the breakdown point of this sample quantile function is $(\lceil N/2\rceil+1)/N$, and not $(\lceil N/2\rceil)/N$.
    
    Translation equivariance is of course guaranteed if the sample median is unique as \cite{Lopuhaa1991} seem to assume. Thus a sufficient, though not necessary, condition under which the proof would be valid is that there is no straight line containing $N-\lceil N/2\rceil=\lfloor N/2\rfloor$ or more of the data points, which guarantees the non-collinearity of the sample and hence the uniqueness of the sample median of both the original data set and any data set in which $\lceil N/2\rceil$ or fewer data points are corrupted.
    
    \cite{Konen2024} give exact breakdown points of $(\lceil(1-\beta)N/2\rceil)/N$ in $\mathbb{R}^n$, however this does not contradict Theorem \ref{dbbdp}(c) because we treat all sample quantile functions and they specifically deal with the barycentre of the sample quantile set.
\end{remark}

For any irrational $\beta$, and therefore for almost all (according to the Lebesgue measure) values of $\beta\in[0,1)$, Theorem \ref{dbbdp}(a) fully determines the breakdown point of $\hat q_N(\beta,\xi)$ for all positive integers $N$. Theorem \ref{dbbdp} also reveals that the asymptotic breakdown point of $\hat q_N(\beta,\xi)$ as $N\rightarrow\infty$ is $(1-\beta)/2$.

Although we are most commonly concerned with robustness for estimators of location such as sample quantiles, it is also important to ensure robustness for estimators of dispersion; for example, see \cite{Rousseeuw1992} and \cite{Lopuhaa1991}.

\cite{ShinOh2024} introduced geometric quantile-based measures of dispersion for multivariate data, and \cite{Shin2023} generalized these to Hadamard manifold-valued data using parameter-based quantiles. In the same way, we can use data-based quantiles to define the following measures of dispersion on Hadamard manifolds:
\begin{equation}\begin{gathered}\label{dbdispersions}
    \delta_1(\beta):=\sup_{v\in S_{m}^{n-1}}{\lVert \log_{m}(q(\beta,f_m(v)))-\log_{m}(q(\beta,f_m(-v)))\rVert}, \\
\delta_2(\beta):=\frac{1}{SA(n-1)}\int_{S_{m}^{n-1}}{\lVert \log_{m}(q(\beta,f_m(v)))-\log_{m}(q(\beta,f_m(-v)))\rVert} dv,
\end{gathered}\end{equation}
where $m$ is the geometric median, $SA(n-1)$ is the surface area of the unit $(n-1)$-sphere and for any $y\in M$, $S_{y}^{n-1}$ is the unit sphere in $T_{y}M$ and $f_y:S_{y}^{n-1}\rightarrow \partial M$ is the bijection defined by $f(\xi_{y})=\xi$. For simplicity, we have assumed the uniqueness of the geometric median and, for fixed $\beta\in (0,1)$, all $(\beta,\xi)$-quantiles.

The quantities in (\ref{dbdispersions}) can be estimated using a sample and some appropriately chosen $\xi_1,\ldots,\xi_K\in\partial M$. Given $\beta\in(0,1)$, let $\hat m_N:M^N\rightarrow M$ be a sample median function and, for each $k=1,\ldots,K$, $\hat q_{k,N}:M^N\rightarrow M$ a sample $(\beta,\xi_k)$-quantile function and $\hat p_{k,N}:M^N\rightarrow M$ a function that maps $\mathbf{X}\in M^N$ to an element of the sample $(\beta,f_{\hat m_N(\mathbf{X})}(-f_{\hat m_N(\mathbf{X})}^{-1}(\xi_k)))$-quantile set. Define $\hat \delta_{1,N}^\beta:M^N\rightarrow \mathbb{R}$ and $\hat \delta_{2,N}^\beta:M^N\rightarrow \mathbb{R}$ by
\begin{equation}\begin{gathered}\label{dbdispersionsapprox}
    \hat \delta_{1,N}^\beta(\mathbf{X})=\max_{k=1,\ldots,K}\lVert\log_{\hat m_N(\mathbf{X})}(\hat q_{k,N}(\mathbf{X}))-\log_{\hat m_N(\mathbf{X})}(\hat p_{k,N}(\mathbf{X}))\rVert \\
    \hat \delta_{2,N}^\beta(\mathbf{X})=\frac{1}{K}\sum_{k=1}^K\lVert\log_{\hat m_N(\mathbf{X})}(\hat q_{k,N}(\mathbf{X}))-\log_{\hat m_N}(\hat p_{k,N}(\mathbf{X}))\rVert.
\end{gathered}\end{equation}
The following theorem guarantees a lower bound on the breakdown points for these two estimators of dispersion. It does not require the uniqueness of the sample median or quantiles as it holds for any choice of $\hat m_N$, $\hat q_{k,N}$ and $\hat p_{k,N}$.

\begin{theorem}\label{dbbdp2}
For any $\beta\in[0,1)$, positive integer $N$, Hadamard manifold $M$, sample $(\mathbf{X})\in M^N$, $\hat m_N:M^N\rightarrow M$ and, for each $k=1,\ldots, K$, $\hat q_{k,N}:M^N\rightarrow M$ and $\hat p_{k,N}:M^N\rightarrow M$, the breakdown points of $\hat \delta_{1,N}^\beta$ and $\hat \delta_{2,N}^\beta$ are at least $(\lceil(1-\beta)N/2\rceil)/N$ at every sample.
\end{theorem}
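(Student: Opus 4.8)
The plan is to reduce the breakdown point of each dispersion estimator to the breakdown points of the location estimators $\hat m_N$, $\hat q_{k,N}$ and $\hat p_{k,N}$ that compose it, and then invoke Theorem \ref{dbbdp}(a)--(b). First I would observe that $\hat\delta_{1,N}^\beta(\mathbf{X})$ and $\hat\delta_{2,N}^\beta(\mathbf{X})$ are built entirely from the quantities $\lVert\log_{\hat m_N(\mathbf{X})}(\hat q_{k,N}(\mathbf{X}))-\log_{\hat m_N(\mathbf{X})}(\hat p_{k,N}(\mathbf{X}))\rVert$, each of which is bounded above by $d(\hat m_N(\mathbf{X}),\hat q_{k,N}(\mathbf{X}))+d(\hat m_N(\mathbf{X}),\hat p_{k,N}(\mathbf{X}))$ by the triangle inequality in $T_{\hat m_N(\mathbf{X})}M$ together with the fact that $\log$ at a point is distance-realizing to that point, i.e. $\lVert\log_y(z)\rVert_g=d(y,z)$. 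Hence to make either dispersion statistic blow up one must make at least one of these pairwise distances blow up, which in turn requires $\hat m_N$, or $\hat q_{k,N}$, or $\hat p_{k,N}$ (for some $k$) to break down.

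The key point is then a lower bound on how many corruptions are needed before \emph{any} of these constituent location estimators breaks down. By Theorem \ref{dbbdp}, the breakdown point of any sample $(\beta,\xi)$-quantile function is at least $(\lceil(1-\beta)N/2\rceil)/N$; since the sample median function $\hat m_N$ is by definition a sample $(0,\xi)$-quantile function, its breakdown point is at least $(\lceil N/2\rceil)/N\geq(\lceil(1-\beta)N/2\rceil)/N$. The function $\hat p_{k,N}$ is slightly subtler because the boundary-at-infinity index $f_{\hat m_N(\mathbf{X})}(-f_{\hat m_N(\mathbf{X})}^{-1}(\xi_k))$ used to define it depends on the (possibly corrupted) data through $\hat m_N(\mathbf{X})$; however, for \emph{any} fixed choice of $\xi\in\partial M$, the corresponding sample $(\beta,\xi)$-quantile function has breakdown point at least $(\lceil(1-\beta)N/2\rceil)/N$ by Theorem \ref{dbbdp}, and this bound is uniform over $\xi$, so the data-dependence of the index does not hurt: so long as fewer than $\lceil(1-\beta)N/2\rceil$ points are corrupted, $\hat p_{k,N}$ stays bounded regardless of which $\xi$ it ends up using. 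Thus with $j<\lceil(1-\beta)N/2\rceil$ corruptions, $\hat m_N$, every $\hat q_{k,N}$ and every $\hat p_{k,N}$ remain within bounded distance of their uncorrupted values — more precisely, $\sup$ over all such corrupted samples of $d(\cdot)$ is finite for each — and therefore so do all the terms $\lVert\log_{\hat m_N}(\hat q_{k,N})-\log_{\hat m_N}(\hat p_{k,N})\rVert$, whence $\hat\delta_{1,N}^\beta$ and $\hat\delta_{2,N}^\beta$ stay bounded. This shows the breakdown point is at least $(\lceil(1-\beta)N/2\rceil)/N$.

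I would carry out the steps in this order: (1) record the triangle-inequality bound $\lVert\log_y(z)-\log_y(z')\rVert_g\leq d(y,z)+d(y,z')$ and note it applies termwise to both $\hat\delta_{i,N}^\beta$; (2) quote from Theorem \ref{dbbdp} that each sample $(\beta',\xi)$-quantile function, $\beta'\in[0,1)$, has breakdown point $\geq(\lceil(1-\beta')N/2\rceil)/N$, with the bound uniform in $\xi$, and specialize to $\beta'=0$ for $\hat m_N$ and to $\beta'=\beta$ for $\hat q_{k,N}$ and $\hat p_{k,N}$; (3) combine: if at most $j<\lceil(1-\beta)N/2\rceil$ observations are altered, each of the finitely many constituent estimators has a finite supremum of displacement over all such corrupted samples, hence so does each dispersion statistic, proving the claimed lower bound. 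The main obstacle I anticipate is step (2) for $\hat p_{k,N}$: one must argue carefully that although the direction $\xi$ selected by $\hat p_{k,N}$ changes with the corrupted data, the breakdown-point bound of Theorem \ref{dbbdp} holds simultaneously for the whole family $\{$sample $(\beta,\xi)$-quantile functions$\}_{\xi\in\partial M}$, so that no adversarial coupling of "which $\xi$" with "which $j$ points to corrupt" can force a breakdown with fewer than $\lceil(1-\beta)N/2\rceil$ corruptions; phrasing this cleanly — essentially, that the worst case over $\xi$ of a quantity that is already a worst case over corruptions is still governed by the same threshold — is the one place where care is required.
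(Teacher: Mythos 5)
Your overall strategy — reduce the breakdown of $\hat\delta_{i,N}^\beta$ to the breakdown of the constituent location estimators — matches the paper, but there is a genuine gap at exactly the point you flag as ``the one place where care is required.'' Theorem~\ref{dbbdp} says only that, for each \emph{fixed} $\xi$, the supremum over corruptions of fewer than $\lceil(1-\beta)N/2\rceil$ points of $d(\hat q_\xi(\mathbf{X}),\hat q_\xi(\mathbf{Y}))$ is finite. It does \emph{not} say that this finite quantity is bounded uniformly in $\xi$, and the statement ``the breakdown-point bound holds simultaneously for the whole family'' is not a consequence of Theorem~\ref{dbbdp} as stated: a finite number for each $\xi\in\partial M$ can still have infinite supremum over $\xi$. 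This matters precisely for $\hat p_{k,N}$, whose direction index depends on the corrupted sample, so the adversary does get to couple the choice of corruption with the $\xi$ that ends up being used, and you need an actual uniform bound, not just pointwise finiteness.

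The paper's proof supplies exactly this missing uniformity rather than appealing to Theorem~\ref{dbbdp}. It fixes a base point $y$ and shows that \emph{every} sample $(\beta,\xi)$-quantile of the uncorrupted data lies in a closed ball of radius $r = r_1/(1-\beta)+r_2$ about $y$, where $r_1$ and $r_2$ are computed from the data alone and do not depend on $\xi$. It then revisits the explicit bound (\ref{dbupper}) from Lemma~\ref{dbinfbdp}, $d(\hat q_\xi,\hat q'_\xi)\leq R_\xi\max\{2,(1-\beta^2)N/(2\alpha)+(1-\beta)\}$, and shows $R_\xi\leq R:=(1+\beta)(r+r_2)/(1-\beta)$, again independent of $\xi$ and of the specific corruption. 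So the supremum over $\xi$ of the supremum over corruptions is controlled by a single explicit constant. A second, smaller difference: the paper bounds $\lVert\log_{\hat m_N}(\hat q_{k,N})-\log_{\hat m_N}(\hat p_{k,N})\rVert$ by $d(\hat q_{k,N},\hat p_{k,N})$ using the Cartan--Hadamard non-expansiveness of $\log_y$, which eliminates the base point entirely and means the median $\hat m_N$ never needs to be controlled, whereas your triangle-inequality decomposition $\leq d(\hat m_N,\hat q_{k,N})+d(\hat m_N,\hat p_{k,N})$ would also force you to bound the displacement of $\hat m_N$. To repair your argument, replace the appeal to Theorem~\ref{dbbdp} with the explicit, $\xi$-uniform ball and $R$ bounds above.
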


	\section{Extreme quantiles} \label{dbextreme}

The following result seems desirable as it has been shown to be true in the Euclidean case by \cite{Girard2017}. In particular, Theorem \ref{dbext}(c), which roughly states that from the perspective of the data cloud, quantiles end up lying in the direction of $\xi$ as $\beta\rightarrow 1$, seems intuitive.

For any $p\in M$ and $\xi\in \partial M$, define the geodesic $\gamma_p^\xi:\mathbb{R}\rightarrow M$ by $\gamma_p^\xi(t)=\exp_p(t\xi_p)$. The proof of the following theorem is in Appendix \ref{dbproof_ext}.
 
	\begin{theorem} \label{dbext}
        Let $M$ be an $n$-dimensional Hadamard manifold and fix $\xi\in\partial M$. Assume $X$ is an $M$-valued random element whose support is not contained in the image $\gamma_p^\xi(\mathbb{R})$ for any $p\in M$. Take a sequence $\beta_1,\beta_2,\ldots$ in $[0,1)$ that converges to 1, and a sequence $q_1,q_2,\ldots$ in $M$ for which $q_m$ is an element of $q(\beta_m,\xi)$.
        
        (a) The set of minimizers of (\ref{dbexpect}) is non-empty if and only if $\beta<1$.

        %(b) The map $t\mapsto\rho(x,\gamma_p^\xi(t);1,\xi)$ is non-increasing for all $x$, and strictly decreasing if $x\not\in\gamma_p^\xi(\mathbb{R})$; thus $t\mapsto G^{1,\xi}(\gamma(t))$ is strictly decreasing.

        (b) As $m\rightarrow\infty$, $d(q_m,x)\rightarrow\infty$ for any $x\in M$,

        (c) As $m\rightarrow\infty$, $\lVert \log_X(q_m)/d(q_m,X)-\xi_X\rVert\xrightarrow{L^2} 0$.
        \end{theorem}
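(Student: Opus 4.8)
The plan is to reduce everything to estimates on $\rho$ along the geodesic ray $\gamma:=\gamma_{x_0}^\xi$ toward $\xi$ issued from a fixed basepoint $x_0$, using the Busemann function $b(y):=\lim_{t\to\infty}(d(y,\gamma(t))-t)$ of that ray. I will use the standard facts (see Section~2 of \cite{Shin2023}) that $b$ is convex and $1$-Lipschitz, that $b(x_0)=0$ and $b(\gamma(t))=-t$, that $\nabla b(y)=-\xi_y$, and that $t\mapsto d(y,\gamma(t))-t$ is nonincreasing with limit $b(y)$. Convexity of $b$ gives $\langle\xi_x,\log_x(p)\rangle_g\ge b(x)-b(p)$ for all $x,p\in M$, so, writing $\rho_1:=\rho(\cdot,\cdot;1,\xi)$ and using $d(x,\gamma(t))\le d(x,x_0)+t$, one obtains the uniform bound $0\le\rho_1(x,\gamma(t))=d(x,\gamma(t))-\langle\xi_x,\log_x(\gamma(t))\rangle_g\le d(x,x_0)-b(x)\le 2d(x,x_0)$ for all $t\ge 0$, while $\rho_1(x,\gamma(t))\to 0$ pointwise in $x$ (since $d(x,\gamma(t))-t\to b(x)$ pins both $d(x,\gamma(t))$ and $\langle\xi_x,\log_x(\gamma(t))\rangle_g$ to $t+b(x)+o(1)$). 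Assumption~\ref{dbassump} forces $E[d(x_0,X)]<\infty$ (because $(1-\beta^*)E[d(p^*,X)]\le G^{\beta^*,\xi^*}(p^*)<\infty$), so dominated convergence is available; I also note $\rho(\cdot,\cdot;\beta,\xi)\ge0$ when $\beta\le1$, and that $\rho$ is jointly continuous in $(x,p,\beta)$ (it is affine in $\beta$ with coefficients $d(p,x)$ and $d(p,x)\cos\angle_x(p,\xi)$ that are continuous in $p$ on all of $M$ by Proposition~\ref{dbnew}).

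For (a), the ``if'' direction is immediate from Proposition~\ref{dbbasic}(b), since Hadamard manifolds are locally compact. For ``only if'': when $\beta=1$, the uniform bound above and dominated convergence give $G^{1,\xi}(\gamma(t))=E[\rho_1(X,\gamma(t))]\to0$, so $\inf_p G^{1,\xi}(p)=0$ as $G^{1,\xi}\ge0$; if this infimum were attained at $p_0$ then $\rho_1(X,p_0)=0$ a.s., which forces $p_0$ to lie on the ray toward $\xi$ from almost every $X$, i.e.\ the support of $X$ to lie in $\gamma_{p_0}^\xi(\mathbb{R})$, contradicting the hypothesis. When $\beta>1$, writing $\rho(x,\gamma(t);\beta,\xi)=\rho_1(x,\gamma(t))-(\beta-1)\langle\xi_x,\log_x(\gamma(t))\rangle_g\le 2d(x,x_0)-(\beta-1)\bigl(t-d(x,x_0)\bigr)$ and taking expectations gives $G^{\beta,\xi}(\gamma(t))\le(\beta+1)E[d(X,x_0)]-(\beta-1)t\to-\infty$, so no minimizer exists. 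Hence the minimizer set is nonempty iff $\beta<1$.

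For (b), I first show $G^{\beta_m,\xi}(q_m)\to0$. For every fixed $t\ge0$, $0\le G^{\beta_m,\xi}(q_m)\le G^{\beta_m,\xi}(\gamma(t))=E[\rho_1(X,\gamma(t))]+(1-\beta_m)E[\langle\xi_X,\log_X(\gamma(t))\rangle_g]$, and the last expectation is finite with absolute value at most $t+E[d(X,x_0)]$. Given $\epsilon>0$ I choose $t$ so that $E[\rho_1(X,\gamma(t))]<\epsilon/2$, then $m$ large enough that $(1-\beta_m)\bigl(t+E[d(X,x_0)]\bigr)<\epsilon/2$; this gives $G^{\beta_m,\xi}(q_m)<\epsilon$. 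Now, arguing by contradiction, if $d(q_m,x^*)\not\to\infty$ for some $x^*\in M$, I pass to a subsequence inside a closed ball and then, by properness of $M$ (Hopf--Rinow), to a further subsequence with $q_{m_j}\to q_\infty\in M$; since $\rho(x,q_{m_j};\beta_{m_j},\xi)\to\rho(x,q_\infty;1,\xi)$ for every $x$ and the integrands are nonnegative, Fatou's lemma yields $0=\lim_j G^{\beta_{m_j},\xi}(q_{m_j})\ge E[\rho(X,q_\infty;1,\xi)]=G^{1,\xi}(q_\infty)$, which is strictly positive by the argument in (a) since the support of $X$ is not contained in $\gamma_{q_\infty}^\xi(\mathbb{R})$. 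This contradiction proves (b).

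For (c), for $x\ne q_m$ one has $\lVert\log_x(q_m)/d(q_m,x)-\xi_x\rVert_g^2=2\bigl(1-\cos\angle_x(q_m,\xi)\bigr)=2\rho_1(x,q_m)/d(q_m,x)$, and from $\rho_1(x,q_m)=\rho(x,q_m;\beta_m,\xi)-(1-\beta_m)\langle\xi_x,\log_x(q_m)\rangle_g\le\rho(x,q_m;\beta_m,\xi)+(1-\beta_m)d(q_m,x)$ this is at most $2\rho(x,q_m;\beta_m,\xi)/d(q_m,x)+2(1-\beta_m)$. The difficulty is that $d(q_m,X)$ has no uniform positive lower bound, so I split at a ball: on $\{d(X,x_0)<R\}$, part (b) gives $d(q_m,X)\ge d(q_m,x_0)-R\to\infty$, hence $E\bigl[\,2\rho(X,q_m;\beta_m,\xi)/d(q_m,X);\,d(X,x_0)<R\,\bigr]\le 2G^{\beta_m,\xi}(q_m)/(d(q_m,x_0)-R)\to0$ using $G^{\beta_m,\xi}(q_m)\to0$ and $\rho(\cdot;\beta_m,\xi)\ge0$; on $\{d(X,x_0)\ge R\}$ the integrand is $\le4$, contributing $\le 4P(d(X,x_0)\ge R)$, which is small for large $R$ as $E[d(X,x_0)]<\infty$; and $\{X=q_m\}$ contributes at most $P(X=q_m)\to0$, since $q_m$ leaves every bounded set while only finitely many atoms of $X$ carry mass above any fixed level. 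Choosing $R$ and then $m$ yields the $L^2$ convergence. The two steps I expect to be the main obstacle are the ones where ``$\rho\ge0$ together with $G\to0$'' does not by itself suffice: pinning down exactly when $G^{1,\xi}$ attains its infimum (this is where the hypothesis that the support does not lie on a single $\xi$-geodesic is essential, and it drives both (a) and the contradiction in (b)), and, in (c), converting $G^{\beta_m,\xi}(q_m)\to0$ into control of $E[\rho(X,q_m;\beta_m,\xi)/d(q_m,X)]$ in the absence of a uniform lower bound on $d(q_m,X)$, which is what forces the ball-splitting and the use of the integrability of $d(\cdot,x_0)$.
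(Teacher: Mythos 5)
Your proof is correct, but it is routed rather differently from the paper's, so the comparison is worth spelling out. For (a), you work with the Busemann function $b$ of the ray $\gamma=\gamma_{x_0}^\xi$ (convexity, $1$-Lipschitz, $\nabla b=-\xi_{(\cdot)}$), getting the clean uniform bound $0\le\rho_1(x,\gamma(t))\le d(x,x_0)-b(x)\le 2d(x,x_0)$ and pointwise $\rho_1(x,\gamma(t))\to0$, whence $G^{1,\xi}(\gamma(t))\to0$ by dominated convergence; the paper instead uses the projection quantity $\alpha_{x,p}(t)=\langle\xi_x,\log_x(\gamma_p^\xi(t))\rangle$, a Pythagoras-type bound comparing to the asymptotic ray $\gamma_x^\xi$, and a binomial expansion of $(1+B^2/\alpha^2)^{1/2}$. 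Both routes also show that no minimizer can exist for $\beta=1$ (your identification that $\rho_1(X,p_0)=0$ a.s.\ would force the support into $\gamma_{p_0}^\xi(\mathbb{R})$ is exactly the ingredient the paper invokes), and both dispatch $\beta>1$ by sending $G^{\beta,\xi}(\gamma(t))\to-\infty$. For (b), you first establish $G^{\beta_m,\xi}(q_m)\to0$ and then apply Fatou along a hypothetical convergent subsequence $q_{m_j}\to q_\infty$, contradicting $G^{1,\xi}(q_\infty)>0$; the paper instead proves the joint continuity of $(\beta,p)\mapsto G^{\beta,\xi}(p)$ via a Lipschitz estimate and passes to the limit in the minimality inequality $G^{\beta_m,\xi}(q_m)\le G^{\beta_m,\xi}(p)$ to conclude $q_\infty\in q(1,\xi)$, again contradicting (a). For (c), both start from the identity $\lVert\log_X(q_m)/d(q_m,X)-\xi_X\rVert^2=2(1-\cos\angle_X(q_m,\xi))=2\rho_1(X,q_m)/d(q_m,X)$, but you then proceed by a direct $L^1$ estimate with a ball-splitting argument (using $\rho\ge0$ to bound the expectation over the ball by $2G^{\beta_m,\xi}(q_m)/(d(q_m,x_0)-R)$, the raw bound $4$ off the ball, and Markov with $E[d(X,x_0)]<\infty$ to control the tail), whereas the paper runs through convergence in probability (multiplying $d(q_m,X)-\beta_m\langle\xi_X,\log_X(q_m)\rangle\xrightarrow{L^1}0$ by $1/d(q_m,X)\xrightarrow{\text{sure}}0$) and then upgrades to $L^1$ via boundedness and Vitali's theorem. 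The Busemann-function route in (a) is arguably more structural and keeps (a)--(c) closer to the Euclidean proof template; the paper's approach is more self-contained relative to the background it has already cited. Both are valid, and your handling of the measure-theoretic edge cases (the atom set $\{X=q_m\}$ in (c), the use of Assumption~\ref{dbassump} to guarantee $E[d(x_0,X)]<\infty$) is careful and matches what the paper relies on implicitly.
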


        \begin{remark}
            Theorem \ref{dbext}(b) might at first seem to be a disadvantage because it implies that quantiles need not be contained in the convex hull of the support of $X$ even when the support is compact; this has been noted in the Euclidean case by \cite{Girard2017}. It can be empirically verified that this phenomenon occurs for parameter-based quantiles too, even though we were unable to show the equivalent of Theorem \ref{dbext}(b). This phenomenon is counterintuitive because of our experience with univariate quantiles, but this need not negatively impact the usefulness of quantiles; see Sections 6 and 7 of \cite{Shin2023}, or Sections \ref{dbperm} and \ref{dbmeasures} of the current paper. Indeed, for some applications like outlier detection, it is actually necessary.
        \end{remark}

        \begin{remark}
            When the assumption on the support in the above theorem is violated, the situation becomes equivalent to that of ordinary univariate Euclidean quantiles, and can be dealt with accordingly.
        \end{remark}

        \begin{remark}
            When $M=\mathbb{R}^n$, making the obvious identification between $\partial M$ and $S^{n-1}$, 
            \begin{align*}
                \bigg\lVert\frac{q_m}{\lVert q_m\rVert}-\xi\bigg\rVert&\leq\bigg\lVert\frac{q_m}{\lVert q_m\rVert}-\frac{q_m-X}{\lVert q_m-X\rVert}\bigg\rVert+\bigg\lVert\frac{q_m-X}{\lVert q_m-X\rVert}-\xi\bigg\rVert \\
                &=\bigg\lVert\frac{(\lVert q_m-X\rVert-\lVert q_m\rVert) q_m}{\lVert q_m-X\rVert\lVert q_m\rVert}+\frac{X}{\lVert q_m-X\rVert}\bigg\rVert+\bigg\lVert\frac{q_m-X}{\lVert q_m-X\rVert}-\xi\bigg\rVert \\
                &\leq \frac{\lvert\lVert q_m-X\rVert-\lVert q_m\rVert\rvert/\lVert q_m\rVert}{\lVert q_m/\lVert q_m\rVert-X/\lVert q_m\rVert\rVert}+\frac{\lVert X\rVert/\lVert q_m\rVert}{\lVert q_m/\lVert q_m\rVert-X/\lVert q_m\rVert\rVert}+\bigg\lVert\frac{q_m-X}{\lVert q_m-X\rVert}-\xi\bigg\rVert\\
                &\leq \frac{\lVert X\rVert/\lVert q_m\rVert}{\lVert q_m/\lVert q_m\rVert-X/\lVert q_m\rVert\rVert}+\frac{\lVert X\rVert/\lVert q_m\rVert}{\lVert q_m/\lVert q_m\rVert-X/\lVert q_m\rVert\rVert}+\bigg\lVert\frac{q_m-X}{\lVert q_m-X\rVert}-\xi\bigg\rVert\\
                &\xrightarrow{p} 0
            \end{align*}
            since (b) implies $\lVert q_m\rVert$ converges to $\infty$ and therefore the first two summands in the second to last line converge to 0 surely, while (c) implies the last summand converges to 0 in mean square. But since $q_m/\lVert q_m\rVert-\xi$ is deterministic, the above simply means $q_m/\lVert q_m\rVert\rightarrow \xi$. So Proposition 2.1 and Theorem 2.1 in \cite{Girard2017} correspond to the above results. However, \cite{Girard2017} only deal with non-atomic distributions; no such assumption is required in the above theorem. Moreover, their proofs do not easily generalize to our Hadamard manifold case, so our proofs, apart from that of (b), are quite different.
        \end{remark}

\section{Gradient of the loss function} \label{dbhyp}
	
 We are interested in this gradient because it can be used in a gradient descent algorithm on the basis of Theorem \ref{dbqgrad} to compute quantiles. The $\mathrm{grad}$ and $d(\log_x)_p^\dagger$ notations are as defined in Section \ref{dbasymp}. The proof of the results in this section are in Appendix \ref{dbproof_hyp}. 

 \begin{proposition} \label{dbgrad}
      Let $M$ be an $n$-dimensional Hadamard manifold. Fix $x\in M$, $\beta\in[0,\infty)$ and $\xi\in\partial M$. 
      
      (a) If $p\neq x$, $1-\beta\leq\lVert\nabla \rho(x,p;\beta,\xi)\rVert\leq1+\beta$.

      (b) If $p\neq x$,
      \begin{equation}\begin{aligned} \label{dbequal}
      \nabla \rho(x,p;\beta,\xi)=-\frac{\log_p(x)}{d(p,x)}-d(\log_x)_p^\dagger \beta\xi_x=d(\log_x)_p^\dagger\bigg(\frac{\log_x(p)}{d(p,x)}-\beta\xi_x\bigg).
      \end{aligned}\end{equation}
 \end{proposition}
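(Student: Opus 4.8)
The plan is to derive both displayed formulas in Proposition~\ref{dbgrad}(b) directly from the defining expression $\rho(x,p;\beta,\xi)=\lVert\log_x(p)\rVert-\langle\beta\xi_x,\log_x(p)\rangle$ in (\ref{dbriemquantile}), differentiating with respect to $p$, and then to read off (a) as an immediate consequence. The key observation is that $\rho(x,\cdot;\beta,\xi)=(F\circ\log_x)$ where $F:T_xM\to\mathbb{R}$ is the smooth (away from $0$) function $F(v)=\lVert v\rVert_g-\langle\beta\xi_x,v\rangle_g$ on the fixed inner product space $T_xM$. So the first step is the elementary Euclidean-style computation of the differential of $F$: for $v\neq 0$, $dF_v(w)=\langle v/\lVert v\rVert_g,w\rangle_g-\langle\beta\xi_x,w\rangle_g=\langle v/\lVert v\rVert_g-\beta\xi_x,w\rangle_g$, i.e. the gradient of $F$ at $v$ (with respect to the inner product $\langle\cdot,\cdot\rangle_g$ on $T_xM$) is $v/\lVert v\rVert_g-\beta\xi_x$. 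Evaluating at $v=\log_x(p)$ and noting $\lVert\log_x(p)\rVert_g=d(p,x)$ gives the ``inner'' vector $\log_x(p)/d(p,x)-\beta\xi_x\in T_xM$.

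Next I would push this through the chain rule. Since $\rho(x,\cdot;\beta,\xi)=F\circ\log_x$, we have $d(\rho(x,\cdot;\beta,\xi))_p(u)=dF_{\log_x(p)}((d\log_x)_p u)=\langle\log_x(p)/d(p,x)-\beta\xi_x,(d\log_x)_p u\rangle_g$ for $u\in T_pM$. By the definition of the adjoint $d(\log_x)_p^\dagger$ given in Section~\ref{dbasymp}, this equals $\langle d(\log_x)_p^\dagger(\log_x(p)/d(p,x)-\beta\xi_x),u\rangle_g$. Since $\nabla\rho(x,p;\beta,\xi)$ is by definition the Riemannian gradient, i.e. the vector representing the linear functional $d(\rho(x,\cdot;\beta,\xi))_p$ via $\langle\cdot,\cdot\rangle_g$, this yields the right-hand equality in (\ref{dbequal}): $\nabla\rho(x,p;\beta,\xi)=d(\log_x)_p^\dagger(\log_x(p)/d(p,x)-\beta\xi_x)$. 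For the left-hand equality I would use the standard fact that $d(\log_x)_p^\dagger(\log_x(p)/d(p,x))=-\log_p(x)/d(p,x)$; this is the Riemannian identity underlying the gradient of the distance function $p\mapsto d(p,x)$, namely $\nabla_p d(\cdot,x)=-\log_p(x)/d(p,x)$, combined with the fact that $\nabla_p d(\cdot,x)=d(\log_x)_p^\dagger(\log_x(p)/\lVert\log_x(p)\rVert_g)$ by the same chain-rule argument applied to $F_0(v)=\lVert v\rVert_g$. Splitting off the $\beta\xi_x$ term by linearity of $d(\log_x)_p^\dagger$ then gives $\nabla\rho(x,p;\beta,\xi)=-\log_p(x)/d(p,x)-d(\log_x)_p^\dagger\beta\xi_x$, which is the middle expression.

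Finally, (a) follows from the right-hand formula in (b) once we control $\lVert d(\log_x)_p^\dagger\rVert$. The point here is that $\log_x:M\to T_xM$ on a Hadamard manifold is \emph{distance non-decreasing} (equivalently $\exp_x$ is non-expanding, a standard consequence of non-positive curvature / the CAT(0) inequality, e.g.\ comparison with $\mathbb{R}^n$), so $(d\log_x)_p$ has operator norm at least $1$ in a suitable sense; more precisely, I would argue that $d(\log_x)_p^\dagger$ preserves the norm of the radial vector exactly---$\lVert d(\log_x)_p^\dagger(\log_x(p)/d(p,x))\rVert_g=\lVert-\log_p(x)/d(p,x)\rVert_g=1$---and is norm non-increasing on all of $T_xM$ because $(d\log_x)_p$ is norm non-decreasing (Rauch comparison / CAT(0)). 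Writing the gradient as $d(\log_x)_p^\dagger(u_{\mathrm{rad}})-\beta\,d(\log_x)_p^\dagger(\xi_x)$ with $\lVert u_{\mathrm{rad}}\rVert_g=1$, $\lVert\xi_x\rVert_g=1$, the triangle inequality gives the upper bound $\lVert\nabla\rho\rVert_g\le 1+\beta$. For the lower bound $1-\beta$ I would instead use the middle formula and pair with the unit radial direction $-\log_p(x)/d(p,x)$ at $p$: $\langle\nabla\rho,-\log_p(x)/d(p,x)\rangle_g=1-\beta\langle d(\log_x)_p^\dagger\xi_x,-\log_p(x)/d(p,x)\rangle_g\ge 1-\beta\lVert d(\log_x)_p^\dagger\xi_x\rVert_g\ge 1-\beta$, hence $\lVert\nabla\rho\rVert_g\ge 1-\beta$. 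The main obstacle is pinning down the two facts about $d(\log_x)_p^\dagger$---that it is norm non-increasing and exactly norm-preserving along the radial direction---cleanly from the Hadamard (non-positive curvature) hypothesis; everything else is a routine chain-rule computation. I would either cite the relevant comparison-geometry statement from the background references or include a short lemma establishing that $\exp_x$ is $1$-Lipschitz on a Hadamard manifold and that it is a radial isometry.
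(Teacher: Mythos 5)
Your argument for (b) is correct and essentially matches the paper's (the paper computes the differential along a smooth curve through $p$ rather than explicitly writing $\rho(x,\cdot;\beta,\xi)=F\circ\log_x$, but the chain-rule content is the same). The argument for (a), however, contains a real gap: you have the direction of the comparison inequality reversed. You write that on a Hadamard manifold ``$\log_x$ is distance non-decreasing (equivalently $\exp_x$ is non-expanding),'' and hence that $(d\log_x)_p$ has operator norm at least $1$. This is backwards. The Rauch/CAT(0) comparison says that on a Hadamard manifold $\exp_x$ is distance \emph{non-decreasing} (geodesic rays from $x$ spread apart at least as fast as straight lines do in $T_xM$); consequently $\log_x$ is $1$-Lipschitz and $(d\log_x)_p$ has operator norm \emph{at most} $1$. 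Your subsequent inference, ``$d(\log_x)_p^\dagger$ is norm non-increasing because $(d\log_x)_p$ is norm non-decreasing,'' is also invalid: a linear map between finite-dimensional inner-product spaces of equal dimension has the same singular values as its adjoint, so taking the adjoint does not flip a norm non-decreasing operator into a norm non-increasing one. The conclusion you ultimately use, $\lVert d(\log_x)_p^\dagger\rVert_{\mathrm{op}}\le 1$, is in fact true --- but in your write-up it holds only because the two mis-stated steps cancel, not because the chain of reasoning is sound.

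The repair is short and leaves your overall strategy intact: $\log_x$ is $1$-Lipschitz (CAT(0) comparison), so $(d\log_x)_p$, and hence its adjoint $d(\log_x)_p^\dagger$ (same singular values), has operator norm at most $1$; combined with your correct observation that the radial direction is sent to a unit vector, the triangle inequality and your pairing argument give both bounds in (a). For comparison, the paper proves (a) directly rather than deducing it from (b): it pairs the differential of $p\mapsto -\beta\langle\xi_x,\log_x(p)\rangle_g$ against an arbitrary geodesic velocity $\gamma'(0)$, bounds it by $\beta\lVert\gamma'(0)\rVert_g$ via Cauchy--Schwarz and the $1$-Lipschitz property of $\log_x$, and then adds the unit-norm gradient of $p\mapsto d(p,x)$. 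That route avoids any explicit operator-norm claim about the adjoint; your route is equally legitimate once the comparison inequality is stated in the right direction.
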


\begin{remark} \label{dbapprox}
Making the canonical identification between $\partial M$ and $S^{n-1}$ in the Euclidean case, $\nabla \rho(x,p;\beta,\xi)=-(x-p)/\lVert p-x\rVert-\beta\xi=(p-x)/\lVert p-x\rVert-\beta\xi$. So it seems natural that Proposition \ref{dbgrad}(a) should hold on more general Hadamard manifolds. However, Theorem 5.1 in \cite{Shin2023} shows that the size of the gradient of the parameter-based loss function on hyperbolic spaces is unbounded as noted in Remark 5.1 in that paper, marking another advantage of the data-based perspective. The na\"ive analogue on Hadamard manifolds of $(x-p)/\lVert x-p\rVert$ is $\log_p(x)/d(p,x)$, and of $(p-x)/\lVert p-x\rVert-\beta\xi$ is $\log_x(p)/d(p,x)-\beta\xi_x$. Since the gradient at $p$ must be in the tangent space at $p$, we can imagine $d(\log_x)_p^\dagger$, which is the identity in $\mathbb{R}^n$, as shifting vectors from $T_xM$ to $T_pM$, and the parallels to (\ref{dbequal}) can be clearly seen. On the other hand, because the radial field $p\mapsto\xi_p$ is also a function of $p$ and is completely independent of $x$, the gradient of the parameter-based loss function is much more complicated and is not so cleanly analogous as (\ref{dbequal}) is to the Euclidean $(p-x)/\lVert p-x\rVert-\beta\xi$.
\end{remark}

 Recall that the radial field $p\mapsto\xi_p$, though continuously differentiable, is not necessarily smooth, somewhat complicating efforts to take the gradient of the parameter-based loss function as many techniques in Riemannian geometry presuppose smoothness. That is part of why \cite{Shin2023} only calculated the gradients on hyperbolic spaces, in which they showed that smoothness is ensured in Proposition 5.1. Because there is no need to differentiate the radial field with the data-based loss function, the gradient on any Hadamard manifold can be computed if we have the differential of $\log_x$. Here we derive the gradient on Hadamard manifolds that are locally symmetric spaces. Locally symmetric spaces are Riemannian manifolds where the covariant derivative of the Riemannian curvature tensor is zero. All (globally) symmetric spaces are locally symmetric, including hyperbolic spaces, the spaces of symmetric, positive-definite matrices, Euclidean spaces, compact Lie groups, Grassmanians and spheres; the first three examples are also Hadamard manifolds. Of course, for any Hadamard manifold, the gradient could be calculated by expressing the loss function in local coordinates, obtaining the Euclidean gradient $\Psi(x,p;\beta,\xi)$ and left-multiplying it by the Riemannian metric at that point $g_p$; see (\ref{dbequiv}). The results of this section eliminate the need for that tedious process on locally symmetric Hadamard spaces.

In order to calculate the gradient, we need to make use of Jacobi fields. Given an open interval $I\subset \mathbb{R}$ containing 0 and a family of geodesics $\{\gamma_s\}_{s\in I}$ that varies smoothly with respect to $s$, a Jacobi field is a vector field along $\gamma_0$ describing how the family varies at each $t$: $J(t)=(\partial \gamma_s(t)/\partial s)|_{s=0}$. Crucially for our purposes, Jacobi fields can be explicitly calculated on locally symmetric spaces.

Let $M$ be an $n$-dimensional locally symmetric Hadamard space. Take distinct points $p,x\in M$ and $\xi\in \partial M$ and let $e_1=\log_p(x)/d(p,x)\in T_pM$. The curvature operator $CO_{e_1}:T_pM\rightarrow T_pM$ defined by 
\begin{equation}\begin{aligned}\label{dbco}
CO_{e_1}(u)=R(e_1,u)e_1,
\end{aligned}\end{equation}
where $R$ is the Riemannian curvature tensor of $M$, is self-adjoint, and hence has an orthonormal basis of eigenvectors $e_1,\ldots,e_n\in T_pM$, the first element of which is $\log_p(x)/d(p,x)$. Define $\kappa_i$ as the eigenvalue corresponding to $e_i$, that is, $\kappa_i:=\langle CO_{e_1}(e_i),e_i\rangle=\langle R(e_1,e_i)e_1,e_i\rangle$, so that $\kappa_1=0$ and $\kappa_i$ is the sectional curvature of the subspace of $T_pM$ spanned by $e_1$ and $e_i$ for $i=2,\ldots,n$. Define 
\begin{align*}
    g_i(t)=\begin{cases}
        \frac{1}{d(p,x)\sqrt{-\kappa_i}}\sinh(d(p,x)\sqrt{-\kappa_i}t) &\text{if $\kappa_i<0$,} \\
        t &\text{if $\kappa_i=0$,}
    \end{cases}
\end{align*}
for $i=2,\ldots,n$. Denote by $\Gamma_{x\rightarrow p}:T_xM\rightarrow T_pM$ the parallel transport of a vector at $x$ to $p$ along the unique (modulo speed) geodesic connecting the two.
 
	\begin{theorem}\label{dbgradient}
		Let $M$ be an $n$-dimensional locally symmetric Hadamard space. Then for $p\neq x$, 
		\begin{equation}\begin{aligned} \label{dbsymgrad}
		&\nabla \rho(x,p;\beta,\xi)   \\
  &=-\frac{\log_p(x)}{d(p,x)}-\beta\sum_{i=1}^n \frac{(d/dt)g_i(t)|_{t=0}}{g_i(1)}\langle \Gamma_{x\rightarrow p}(\xi_x),e_i\rangle e_i  \\
  &=-\frac{\log_p(x)}{d(p,x)}-\beta\bigg(\bigg\langle\Gamma_{x\rightarrow p}(\xi_x),\frac{\log_p(x)}{d(p,x)}\bigg\rangle\frac{\log_p(x)}{d(p,x)}+\sum_{i=2}^n \frac{(d/dt)g_i(t)|_{t=0}}{g_i(1)}\langle \Gamma_{x\rightarrow p}(\xi_x),e_i\rangle e_i \bigg)
		\end{aligned}\end{equation}
	\end{theorem}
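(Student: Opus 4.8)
The plan is to build on Proposition \ref{dbgrad}(b), which already gives $\nabla \rho(x,p;\beta,\xi) = -\log_p(x)/d(p,x) - d(\log_x)_p^\dagger(\beta\xi_x)$ for $p\neq x$. So everything reduces to producing an explicit formula for the single vector $d(\log_x)_p^\dagger(\xi_x)\in T_pM$ when $M$ is locally symmetric. Since $M$ is Hadamard, $\exp_x$ is a diffeomorphism and $\log_x = \exp_x^{-1}$, so $d(\log_x)_p = (d(\exp_x)_v)^{-1}$ with $v := \log_x(p)$; thus it suffices to identify the linear map $d(\exp_x)_v : T_xM \to T_pM$, then invert it and pass to the adjoint.

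First I would recall the Jacobi-field description of $d(\exp_x)_v$: along the geodesic $\gamma(t) = \exp_x(tv)$, $t\in[0,1]$, joining $x$ to $p$, one has $d(\exp_x)_v(w) = J_w(1)$, where $J_w$ is the Jacobi field along $\gamma$ with $J_w(0) = 0$ and initial covariant derivative $w$. This is where local symmetry is used: because $\nabla R = 0$, the curvature operator is parallel along $\gamma$, so the orthonormal eigenbasis $e_1,\ldots,e_n$ of $CO_{e_1}$ at $p$ — which exists with $e_1 = \log_p(x)/d(p,x)$ and $\kappa_1 = 0$ as already noted before the theorem — parallel-transports along $\gamma$ to a parallel orthonormal eigenframe $E_1,\ldots,E_n$ of the curvature operator with the same eigenvalues $\kappa_i$ at every point of $\gamma$ (here the sign of $e_1$ is irrelevant, since $CO_{e_1} = CO_{-e_1}$ and $-e_1$ is the unit velocity of $\gamma$ at $p$). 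Expanding $J_w = \sum_i h_i E_i$ decouples the Jacobi equation into scalar equations governed by the $\kappa_i$; since $M$ is Hadamard, $\kappa_i \le 0$, so the solution with $h_i(0) = 0$ and $h_i'(0) = w_i$ (where $w = \sum_i w_i E_i(0)$) is exactly $h_i(t) = w_i\, g_i(t)$ for the functions $g_i$ in the statement — the hyperbolic-sine and linear profiles, which satisfy $g_i'(0) = 1$. Evaluating at $t = 1$, $d(\exp_x)_v(w) = \sum_i w_i\, g_i(1)\, e_i$; that is, $d(\exp_x)_v$ maps $E_i(0)\mapsto g_i(1)\, e_i$ and so is ``diagonal'' between the orthonormal frames $\{E_i(0)\}$ of $T_xM$ and $\{e_i\}$ of $T_pM$.

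What remains is bookkeeping. Inverting, $d(\log_x)_p$ sends $e_i \mapsto g_i(1)^{-1} E_i(0)$, and because both frames are orthonormal, its adjoint $d(\log_x)_p^\dagger : T_xM\to T_pM$ sends $E_i(0)\mapsto g_i(1)^{-1} e_i$. Writing $\xi_x = \sum_i \langle\xi_x, E_i(0)\rangle E_i(0)$ and using that parallel transport is an isometry with $\Gamma_{x\rightarrow p}(E_i(0)) = e_i$, hence $\langle\xi_x, E_i(0)\rangle = \langle\Gamma_{x\rightarrow p}(\xi_x), e_i\rangle$, gives $d(\log_x)_p^\dagger(\xi_x) = \sum_i \frac{g_i'(0)}{g_i(1)}\langle\Gamma_{x\rightarrow p}(\xi_x), e_i\rangle\, e_i$. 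Substituting into Proposition \ref{dbgrad}(b) yields the first displayed equality, and peeling off the $i = 1$ summand — for which $g_1'(0)/g_1(1) = 1$ and $e_1 = \log_p(x)/d(p,x)$ — gives the second.

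The main obstacle is the middle step: setting up $d(\exp_x)_v$ via Jacobi fields and verifying carefully that local symmetry genuinely decouples the Jacobi equation into the claimed scalar ODEs with the $g_i$ as their normalized solutions — in particular pinning down the curvature sign convention so that non-positive curvature produces $\sinh$ (and $t$) rather than trigonometric profiles — and then tracking through the inversion and the adjoint which tangent space, $T_xM$ or $T_pM$, each vector and frame belongs to, and correctly converting the inner products at $x$ into inner products at $p$ via parallel transport.
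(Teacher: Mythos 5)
Your proof is correct and the core technical engine is the same as the paper's: solve the Jacobi ODE along the geodesic from $x$ to $p$ in a parallel eigenframe of the curvature operator, which exists with constant eigenvalues because $\nabla R=0$. The routing is genuinely different, though, and a bit cleaner. You invoke Proposition \ref{dbgrad}(b) at the outset to reduce everything to the single linear-algebra problem of identifying $d(\log_x)_p^\dagger\xi_x$; you then get $d(\exp_x)_v$ from the standard initial-value Jacobi field ($J(0)=0$, $(DJ/dt)(0)=w$), diagonalize it in the frames $\{E_i(0)\}$ and $\{e_i\}$, and invert and transpose. The paper instead works directly with the boundary-value problem: it sets up the geodesic variation $\gamma_s(t)=\exp_x(t\log_x(c(s)))$, notes that its variation field $J$ satisfies $J(0)=0$, $J(1)=c'(0)$, uses the symmetry of the covariant derivative to identify $\tfrac{D}{ds}\log_x(c(s))|_{s=0}$ with $\tfrac{D}{dt}J|_{t=0}$, and then pairs with $\xi_x$ -- so it never explicitly writes down $d(\exp_x)_v$, its inverse, or the adjoint, and it does not cite Proposition \ref{dbgrad}(b) at all. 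Your version buys you a conceptually modular proof (compute one operator, plug in), and it reuses Proposition \ref{dbgrad}(b) so that the $-\log_p(x)/d(p,x)$ term is accounted for automatically rather than implicitly; the paper's version avoids explicitly inverting a map, trading that for the symmetry-of-$\nabla$ lemma. The two are dual (initial-value vs.\ boundary-value data for the same linear ODE along $\gamma$) and give the same coefficients $g_i'(0)/g_i(1)=1/g_i(1)$. One small point worth making explicit in your write-up: since you take $e_1=\log_p(x)/d(p,x)$ while the velocity $\gamma'(1)$ points the other way, $\gamma'(t)=-d(p,x)E_1(t)$; the factor $d(p,x)^2$ absorbs the sign when you form $R(\gamma',\cdot)\gamma'$, which is why the stated $g_i$ (already carrying the $d(p,x)$ scaling) are the right normalized solutions -- you flagged the sign as irrelevant via $CO_{e_1}=CO_{-e_1}$, which is right, but spelling out the $d(p,x)^2$ rescaling of the scalar Jacobi equation is worth doing so the reader sees why $\sinh(d(p,x)\sqrt{-\kappa_i}\,t)$ rather than $\sinh(\sqrt{-\kappa_i}\,t)$ appears.
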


	\begin{remark} \label{dbunbdd}
		\sloppy The gradient of the loss function for parameter-based quantiles is unbounded, at least on hyperbolic spaces; see Theorem 5.1 of \cite{Shin2023}. On the other hand, the norm of the above gradient  is clearly contained between $1-\beta$ and $1+\beta$, as guaranteed by Proposition \ref{dbgrad}(a), since $((d/dt)g_i(t)|_{t=0})/g_i(1)$ is 1 when $\kappa_i=0$ and $d(p,x)\sqrt{-\kappa_i}/\sinh(d(p,x)\sqrt{-\kappa_i})\in(0,1]$ when $\kappa_i<0$.
	\end{remark}

     The calculation in Theorem \ref{dbgradient} is greatly simplified if sectional curvatures are constant as the gradient can be calculated without the need for eigendecomposition of the curvature operator. If all the sectional curvatures equal 0, $M=\mathbb{R}^n$ and each $\kappa_i=0$, so (\ref{dbsymgrad}) becomes $-(x-p)/\lVert x-p\rVert-\beta\xi$, as expected. If $M$ is of constant negative sectional curvature, $M$ is a hyperbolic space; the following corollary deals with this case. Refer to Section 5 of \cite{Shin2023} for relevant background on hyperbolic spaces, including an expression for $\xi_p$ in Proposition 5.1.

     \begin{corollary} \label{dbhypgrad}
Let $M$ be an n-dimensional hyperbolic space of constant sectional curvature $\kappa<0$. Then, for $p \neq x$,
\begin{align*}
		&\nabla \rho(x,p;\beta,\xi)   \\
  &=-\frac{\log_p(x)}{d(p,x)}-\beta\bigg(\frac{d(p,x)\sqrt{-\kappa}}{\sinh(d(p,x)\sqrt{-\kappa})}\Gamma_{x\rightarrow p}(\xi_x) \\
  &\qquad+\bigg(\frac{d(p,x)\sqrt{-\kappa}}{\sinh(d(p,x)\sqrt{-\kappa})}-1\bigg)\bigg\langle\xi_x,\frac{\log_x(p)}{d(p,x)}\bigg\rangle\frac{\log_p(x)}{d(p,x)}\bigg).
		\end{align*}
     \end{corollary}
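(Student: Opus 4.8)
The plan is to specialize Theorem \ref{dbgradient} to the case of constant sectional curvature $\kappa<0$. In a hyperbolic space of constant curvature $\kappa$, every sectional curvature equals $\kappa$, so in the notation of Theorem \ref{dbgradient} we have $\kappa_1=0$ (the eigenvalue along $e_1=\log_p(x)/d(p,x)$) and $\kappa_i=\kappa$ for $i=2,\ldots,n$. First I would substitute these values into the definition of $g_i(t)$: for $i=2,\ldots,n$ we get $g_i(t)=\frac{1}{d(p,x)\sqrt{-\kappa}}\sinh\!\big(d(p,x)\sqrt{-\kappa}\,t\big)$, whence $(d/dt)g_i(t)|_{t=0}=1$ and $g_i(1)=\frac{1}{d(p,x)\sqrt{-\kappa}}\sinh\!\big(d(p,x)\sqrt{-\kappa}\big)$, so that the ratio $((d/dt)g_i(t)|_{t=0})/g_i(1)$ equals $d(p,x)\sqrt{-\kappa}/\sinh(d(p,x)\sqrt{-\kappa})$ for every $i=2,\ldots,n$, a single constant independent of $i$. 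Call this constant $c$ for brevity.

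Next I would start from the second displayed expression in (\ref{dbsymgrad}), which isolates the $i=1$ term (with coefficient $1$, since $\kappa_1=0$) from the sum over $i=2,\ldots,n$. Since the coefficient $c$ is the same for all $i\geq 2$, the sum $\sum_{i=2}^n c\,\langle\Gamma_{x\to p}(\xi_x),e_i\rangle e_i$ equals $c$ times the projection of $\Gamma_{x\to p}(\xi_x)$ onto the orthogonal complement of $e_1$ in $T_pM$, i.e. $c\big(\Gamma_{x\to p}(\xi_x)-\langle\Gamma_{x\to p}(\xi_x),e_1\rangle e_1\big)$, because $e_1,\ldots,e_n$ is an orthonormal basis of $T_pM$. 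Substituting this back and collecting the two terms that are multiples of $e_1=\log_p(x)/d(p,x)$ — namely $\langle\Gamma_{x\to p}(\xi_x),e_1\rangle e_1$ from the $i=1$ term and $-c\langle\Gamma_{x\to p}(\xi_x),e_1\rangle e_1$ from the rewritten sum — yields $\beta\big(c\,\Gamma_{x\to p}(\xi_x)+(1-c)\langle\Gamma_{x\to p}(\xi_x),e_1\rangle e_1\big)$, which after pulling the sign through matches the claimed formula once one rewrites $(1-c)$ as $-(c-1)$.

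The only remaining point is to identify $\langle\Gamma_{x\to p}(\xi_x),e_1\rangle$ with the inner product $\langle\xi_x,\log_x(p)/d(p,x)\rangle$ appearing in the statement. This follows because parallel transport along the geodesic from $x$ to $p$ is a linear isometry of tangent spaces, so $\langle\Gamma_{x\to p}(\xi_x),e_1\rangle_g=\langle\xi_x,\Gamma_{p\to x}(e_1)\rangle_g$, and $\Gamma_{p\to x}(e_1)=\Gamma_{p\to x}(\log_p(x)/d(p,x))=-\log_x(p)/d(p,x)$ since the unit velocity of the geodesic at $p$ pointing toward $x$ parallel-transports to minus the unit velocity at $x$ (the latter points from $x$ toward $p$). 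Actually one must be slightly careful with the sign here: tracking it through shows the two $e_1$-components combine so that the final coefficient of $\log_p(x)/d(p,x)$ is exactly $(c-1)\langle\xi_x,\log_x(p)/d(p,x)\rangle$ as written. I do not expect any genuine obstacle — this is a direct specialization — but the one place to be careful is the bookkeeping of signs in that last identification of the inner product and in reconciling $\log_p(x)$ versus $\log_x(p)$, since the two differ by the parallel transport sign just described.
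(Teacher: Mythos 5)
Your proposal is correct and follows the same route as the paper: specialize Theorem \ref{dbgradient} by setting $\kappa_1=0$ and $\kappa_i=\kappa$ for $i\geq 2$, pull out the common factor $c=d(p,x)\sqrt{-\kappa}/\sinh(d(p,x)\sqrt{-\kappa})$, and rewrite $\sum_{i\geq 2}c\langle\Gamma_{x\to p}(\xi_x),e_i\rangle e_i$ as $c$ times the projection onto $e_1^{\perp}$. Your explicit tracking of the sign from $\Gamma_{p\to x}(\log_p(x)/d(p,x))=-\log_x(p)/d(p,x)$ is exactly what reconciles the coefficient $(1-c)$ with the stated $(c-1)$, a step the paper leaves implicit.
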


\begin{remark}\label{dbradial}
If the exponential and inverse exponential maps are known, the radial fields $x\mapsto \xi_x$ can often be calculated explicitly using Proposition 3.3(a) of \cite{Shin2023}, which states that 
\begin{equation}\begin{aligned}\label{dbxi}
\xi_x=\lim_{t\rightarrow \infty}\frac{\log_x(\gamma(t))}{d(p,\gamma(t))},
\end{aligned}\end{equation}
where $\gamma:[0,\infty)\rightarrow M$ is any geodesic ray in the equivalence class $\xi$. Then even if an exlicit expression for $\xi_x$ remains elusive, it can be approximated by choosing a suitably large $t$. This approximation is a simpler affair for data-based quantiles because one needs expressions for $\xi_x$ at only finitely many data points when calculating sample quantiles, rather than at every point in $M$ as for parameter-based quantiles.
\end{remark}

\begin{remark}\label{dbhope}
Due to Proposition \ref{dbgrad} and Remark \ref{dbapprox}, we hypothesize that 
 \begin{equation}\begin{aligned}\label{dbapproxgrad}
     -\frac{\log_p(x)}{d(p,x)}-\Gamma_{x\rightarrow p}(\beta\xi_x)
 \end{aligned}\end{equation}
 approximates the gradient well enough to be used in a descent algorithm. If this is so, we can find the quantile without the need to calculate tedious differentials or, in the case of locally symmetric spaces, computationally intensive eigendecompositions. Instead, basic operations like the exponential maps, their inverses and parallel transport, as well as radial fields which can be calculated or approximated as per Remark \ref{dbradial}, suffice. However, we also speculate that the above approximation should get worse as curvature increases in absolute value and $d(p,x)$ increases. Theorem \ref{dbgradient} supports this as $((d/dt)g_i(t)|_{t=0})/g_i(1)=d(p,x)\sqrt{-\kappa_i}/\sinh(d(p,x)\sqrt{-\kappa_i})$ when $\kappa_i<0$, and $t/\sinh(t)$ is strictly decreasing in $t$ while $\lim_{t\rightarrow 0}t/\sinh(t)=1$, $\lim_{t\rightarrow\infty}t/\sinh(t)=0$. Therefore on a given Hadamard manifold, we expect a descent algorithm which uses (\ref{dbapproxgrad}) to fare worse the more spread out the data are, and also the larger $\beta$ is thanks to Theorem \ref{dbext}(b).
 
 On the other hand, because the gradient of the parameter-based loss function can be unbounded per Remark \ref{dbunbdd}, it seems that neither (\ref{dbapproxgrad}) nor something like 
 \begin{equation}\begin{aligned} \label{dbbadgrad}
     -\frac{\log_p(x)}{d(p,x)}-\beta\xi_p
 \end{aligned}\end{equation}
 should be a good approximation. These ideas will be tested in Section \ref{dbsims}.
 \end{remark}

Here we will explain some practical details about the implementation of the descent algorithm, the pseudocode of which is contained in Algorithm \ref{dbalgo}. The $\text{step}$ in lines 4 and 8 is either the true gradient or some approximation of it such as (\ref{dbapproxgrad}) or (\ref{dbbadgrad}). When $p=x_i$, $\log_p(x_i)/d(p,x_i)$ is not defined, so we set it at 0 and let $\text{step}(x_i,p)$ be $-\beta\xi_p=-\Gamma_{x\rightarrow p}(\beta\xi_x)$. In fact, we also do this when $d(p,x_i)$ is extremely small as calculating $\log_p(x_i)/d(p,x_i)$ in that case can cause problems. In line 6, $\text{shift}$ is divided by its norm because we have observed that not doing so sometimes causes issues with convergence of the algorithm when $p$ is near a singularity. %; this seems to be because the non-differentiability of the loss function means that its gradient does not necessarily approach 0 as $p$ approaches the true quantile (see Corollary \ref{db}).
The size of the update is therefore controlled entirely by the learning rate $\text{lr}$ in this algorithm.

The loss function in line 7 can be either the parameter-based or data-based loss function. If the comparison in line 7 is true, the learning rate $\text{lr}$ is increased in line 8 to speed up convergence, and if it is false, we deem $\text{lr}$ to be too large and lower it. The counter is necessary because occasionally the learning rate $\text{lr}$ oscillates between bounds that are small but greater than the tolerance $\text{tol}$; in lines 5 and 10, we have set up the algorithm to terminate if this happens too often.
 
\begin{algorithm}[h] 
	\caption{Descent algorithm} \label{dbalgo}
	\begin{algorithmic}[1]
		\State Input: $x_1,\ldots,x_N\in M$, $\beta\in[0,1)$, $\xi\in\partial M$, tolerance $\text{tol}>0$ and $\text{maxcount}\in\mathbb{Z}^+$.
		\State Output: $q\in M$
		\State Initialize $p$, learning rate $\text{lr}>\text{tol}$ and counter $\text{count}\leftarrow 0$.
            \State $\text{shift}\leftarrow (1/N)\sum\text{step}(x_i,p)$
            \While{$\text{lr}>\text{tol}$ \& $\text{count}<\text{maxcount}$}
            \State $p_{\text{new}} \leftarrow \mathrm{exp}_p(-\text{lr}\cdot\text{shift}/\lVert\text{shift}\rVert)$
		\If {$(1/N)\sum\text{loss}(x_i,p_{\text{new}};\beta,\xi)\leq(1/N)\sum\text{loss}(x_i,p;\beta,\xi)$}
		\State $p\leftarrow p_{\text{new}}$, $\text{lr}\leftarrow 1.1\cdot\text{lr}$, $\text{shift}\leftarrow (1/N)\sum\text{step}(x_i,p)$
		\Else
		\State $\text{lr}\leftarrow \text{lr}/2$, $\text{count}\leftarrow \text{count}+1$
		\EndIf
		\EndWhile
            \State $q\leftarrow p$
	\end{algorithmic}
\end{algorithm}

 \section{Experiments}\label{dbexps}
	
\sloppy The code for the experiments in this section, implemented in Python with PyTorch, can be found at \url{https://github.com/hayoungshin1/Data-based-quantiles-on-Hadamard-spaces}.

 \subsection{Simulations in hyperbolic space} \label{dbsims}

Let $M$ be $2$-dimensional hyperbolic space of constant sectional curvature $-1$. We generate 100 data points from the bivariate normal distribution $\mathcal{N}(0,0.3I_2)$ truncated to ensure that each data point is contained in the standard open unit ball in $\mathbb{R}^2$. These data points are treated as points in the Poincar\'e ball model of hyperbolic space and projected onto the hyperboloid model. In this section, we compute the quantiles with Algorithm \ref{dbalgo} using true and approximate gradients, compare the performances of data- and parameter-based quantiles in this respect, and illustrate one of the potential applications of quantiles mentioned in the introduction---determining whether two data sets are generated by the same distribution.

We compute the parameter-based and data-based quantiles with Algorithm \ref{dbalgo} using the true gradient, (\ref{dbapproxgrad}) and (\ref{dbbadgrad}); the visualization of these results in the Poincar\'e ball is contained in Figure \ref{dbfig:imageone}. The values of $\beta$ we consider are $0,0.2,0.4,0.6,0.8$ and $0.98$, and making the natural identification between the boundary of the Poincar\'e ball $S^1$ and the boundary at infinity, the values of $\xi$ we consider are 
\begin{equation}\label{dbxil}
\xi_{l,L}=(\cos(2\pi l/L),\sin(2\pi l/L))
\end{equation}
for $l=1,\ldots,L$, where $L=64$.

\begin{figure}[!t]
	\centering
	\begin{subfigure}[t]{0.32\linewidth}
		\includegraphics[width=\linewidth]{./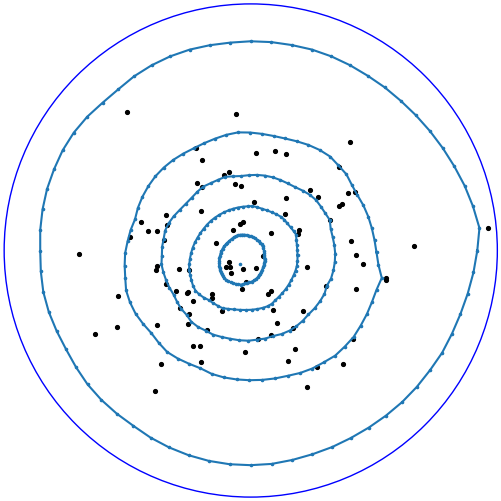}
		\caption{Parameter-based quantiles calculated with the true gradient.}
		\label{dbfig:dq1}
	\end{subfigure}
	\begin{subfigure}[t]{0.32\linewidth}
		\includegraphics[width=\linewidth]{./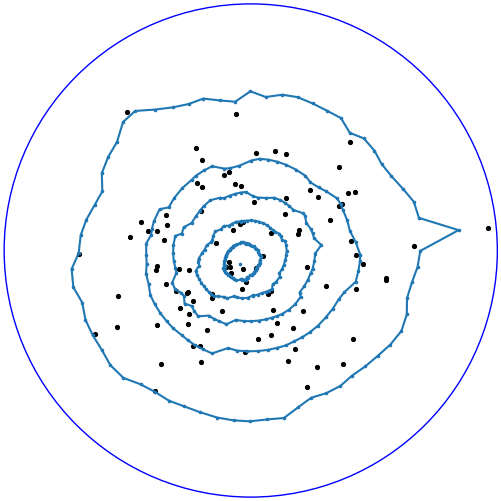}
		\caption{Parameter-based quantiles calculated with (\ref{dbapproxgrad}).}
		\label{dbfig:dq2}
	\end{subfigure}
	\begin{subfigure}[t]{0.32\linewidth}
		\includegraphics[width=\linewidth]{./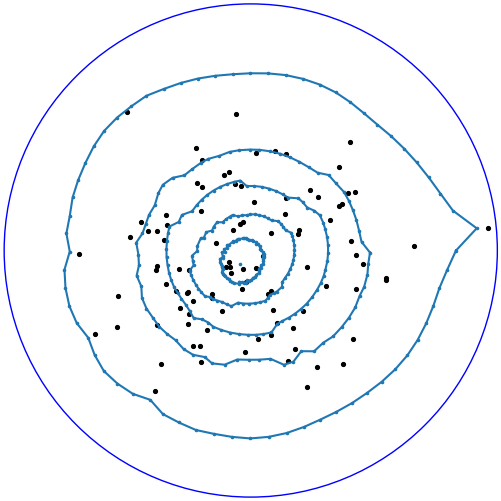}
		\caption{Parameter-based quantiles calculated with (\ref{dbbadgrad}).}
		\label{dbfig:dq3}
	\end{subfigure}

	\begin{subfigure}[t]{0.32\linewidth}
	    \includegraphics[width=\linewidth]{./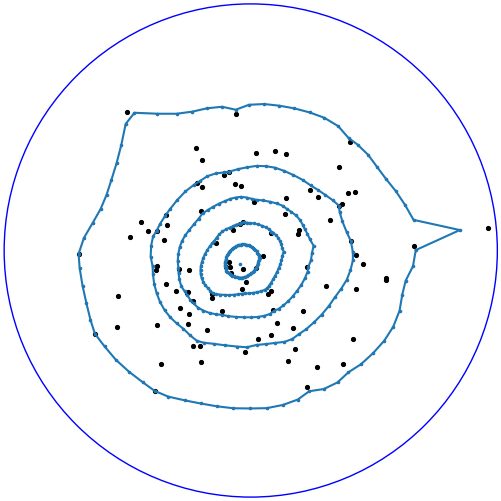}
	    \caption{Data-based quantiles calculated with the true gradient.}
	    \label{dbfig:dq4}
    \end{subfigure}
	    \begin{subfigure}[t]{0.32\linewidth}
		\includegraphics[width=\linewidth]{./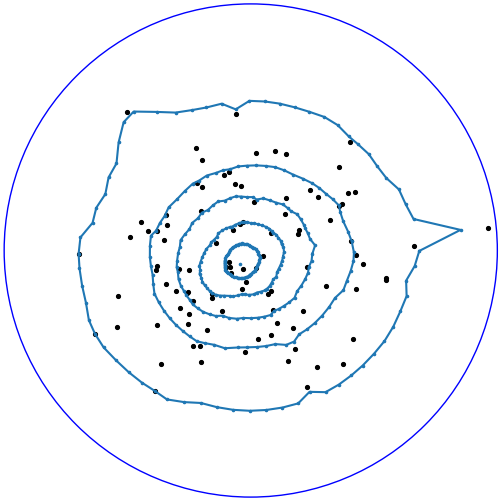}
		\caption{Data-based quantiles calculated with (\ref{dbapproxgrad}).}
		\label{dbfig:dq5}
	\end{subfigure}
	\begin{subfigure}[t]{0.32\linewidth}
		\includegraphics[width=\linewidth]{./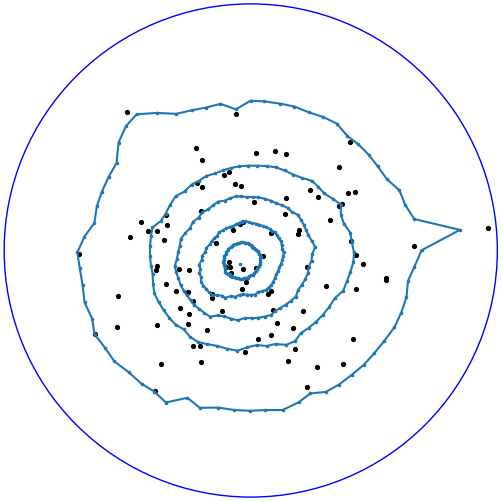}
		\caption{Data-based quantiles calculated with (\ref{dbbadgrad}).}
		\label{dbfig:dq6}
	\end{subfigure}
	\caption{The black points are the data points from the first simulated data set, the smaller blue points the computed quantiles, and the blue curves connecting them the isoquantile curves for $\beta\in\{0.2,0.4,0.6,0.8,0.98\}$. The top row shows attempts at computing parameter-based quantiles and the bottom row data-based quantiles. The results in the first column were computed using the true gradients, the second column using (\ref{dbapproxgrad}) and the third column (\ref{dbbadgrad}).}
	\label{dbfig:imageone}
\end{figure}

Focusing on Figures \ref{dbfig:dq1} and \ref{dbfig:dq4}, which show the results of using true gradient descent, the parameter-based quantiles seem to spread out more quickly as $\beta$ increases and have smoother, more spherically symmetric isoquantile contours. In contrast, the data-based quantiles may better adhere to the shape of the data cloud; we will investigate this further. 

The other four subfigures represent approximations of the results in Figures \ref{dbfig:dq1} and \ref{dbfig:dq4}. Figures \ref{dbfig:dq4} and \ref{dbfig:dq5} are almost identical, providing evidence for our hunch in Remark \ref{dbhope}. This is true despite the fact that the data are quite spread out, and even when $\beta$ is extremely large; for a given manifold, these are precisely the circumstances under which we worried in Remark \ref{dbhope} that the approximate gradient might fare poorly.

It is clear visually that the approximation in Figure \ref{dbfig:dq5} is more successful than the other three. Both approximations in the parameter-based case fare quite poorly, and interestingly Figure \ref{dbfig:dq2} is actually quite similar to Figure \ref{dbfig:dq5}, despite different the loss functions being minimized; this makes sense because the updates in both cases are computed using the same formula, (\ref{dbapproxgrad}), while the loss function only comes into play in line 7 of the algorithm.

Table \ref{dbtable1} makes the comparisons between the four approximations more precise. When $\beta=0$, the gradients and their approximations are all equivalent, so we only consider the other values of $\beta$. The approximation of the data-based quantiles using (\ref{dbapproxgrad}) has by far the smallest mean error for each $\beta$; in fact, even the best approximation for the parameter-based quantiles when $\beta=0.2$ is barely better than the approximation for the data-based quantiles using (\ref{dbapproxgrad}) when $\beta=0.98$. We also see that in all cases (\ref{dbbadgrad}) leads to better approximations of the parameter-based quantiles than (\ref{dbapproxgrad}). This makes sense as there is no reason to expect that (\ref{dbapproxgrad}) would better approximate the true parameter-based gradient than (\ref{dbbadgrad}), or the reverse in the data-based case; those two cases were included for the sake of completeness. Table \ref{dbtable1} also strongly supports the idea, mentioned in Remark \ref{dbhope}, that approximations of data-based quantiles using (\ref{dbapproxgrad}) should get worse as $\beta$ increases; this trend is seen for all four of the approximations.

\begin{table}[!h]
    \centering
    \caption{For given $\beta$, means over 64 values of $\xi_{l,L}$ of the distance between the quantiles of the first simulated data set computed using the true gradient and each of (\ref{dbapproxgrad}) and (\ref{dbbadgrad}). The smallest value in each column is highlighted.}
    {\small
        \begin{tabular}{|c|c|c|c|c|c|c|}\cline{3-7}
            \multicolumn{2}{c|}{\multirow{2}{*}{}} & \multicolumn{5}{c|}{$\beta$} \\\cline{3-7}
            \multicolumn{2}{c|}{} & $0.2$ & $0.4$ & $0.6$ & $0.8$ & $0.98$ \\\hhline {|=|=|=|=|=|=|=|}
            \multirow{2}*{Parameter-based} & (\ref{dbapproxgrad}) & 0.0511 & 0.1031 & 0.1502 & 0.2615 & 0.9801\\\cline{2-7}
            & (\ref{dbbadgrad}) & 0.0207 & 0.0451 & 0.0686 & 0.1516 & 0.6556 \\ \hhline {|=|=|=|=|=|=|=|}
            \multirow{2}*{Data-based} & (\ref{dbapproxgrad}) & \textbf{0.0044} & \textbf{0.0064} & \textbf{0.0090} & \textbf{0.0122} & \textbf{0.0233} \\\cline{2-7}
            & (\ref{dbbadgrad}) & 0.0181 & 0.0132 & 0.0144 & 0.0189 & 0.0317 \\\hline
        \end{tabular}
    }
    \label{dbtable1}
\end{table}

We next took our simulated data set in the Poincar\'e ball and created a second data set by simply dividing the second coordinate of each point by 4. We can make essentially all of the same observations from these results, contained in Figure \ref{dbfig:imagetwo} and Table \ref{dbtable2}, as we did above from those of the first data set. In particular, the data-based isoquantile contours clearly conform to the shape of the data cloud better than the parameter-based ones, and they do so extremely well for smaller values of $\beta$. It is known in the Euclidean case that extreme geometric quantiles do not follow the shape very well, and this is certainly true for the parameter-based quantiles in Figure \ref{dbfig:dq7}; these issues are also apparent in the outermost contour of Figure \ref{dbfig:dq10}, but they are far less pronounced, perhaps indicating that data-based quantiles are more robust to these distortions than parameter-based quantiles, at least on hyperbolic spaces.

Because of the way that this data set was defined from the first, Table \ref{dbtable2} also supports the idea contained in Remark \ref{dbhope} that the approximations should have more trouble the more spread out the data are, and in fact the value in each cell of Table \ref{dbtable1} is larger than the value in the corresponding cell of Table \ref{dbtable2}.

\begin{figure}[!t]
	\centering
	\begin{subfigure}[t]{0.32\linewidth}
		\includegraphics[width=\linewidth]{./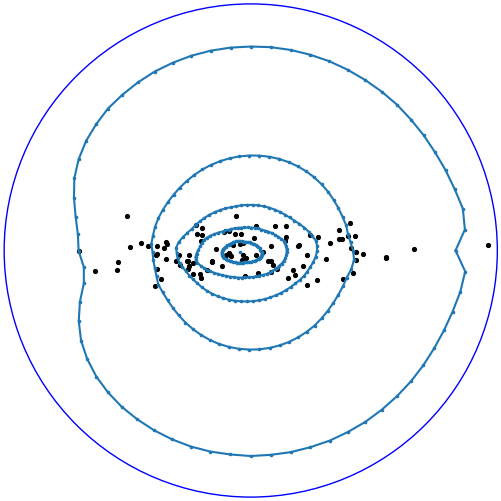}
		\caption{Parameter-based quantiles calculated with the true gradient.}
		\label{dbfig:dq7}
	\end{subfigure}
	\begin{subfigure}[t]{0.32\linewidth}
		\includegraphics[width=\linewidth]{./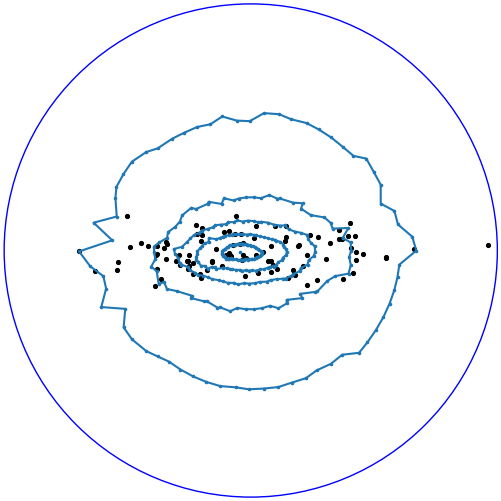}
		\caption{Parameter-based quantiles calculated with (\ref{dbapproxgrad}).}
		\label{dbfig:dq8}
	\end{subfigure}
	\begin{subfigure}[t]{0.32\linewidth}
		\includegraphics[width=\linewidth]{./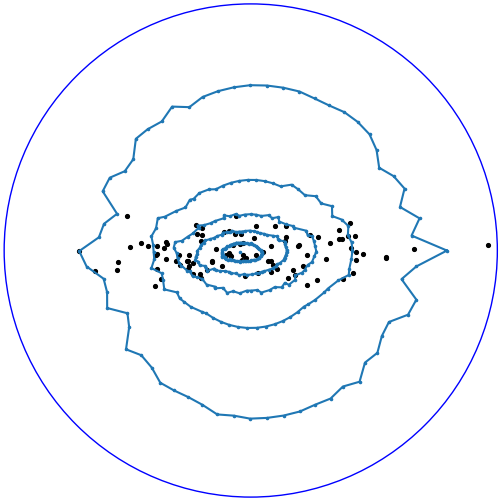}
		\caption{Parameter-based quantiles calculated with (\ref{dbbadgrad}).}
		\label{dbfig:dq9}
	\end{subfigure}

	\begin{subfigure}[t]{0.32\linewidth}
	    \includegraphics[width=\linewidth]{./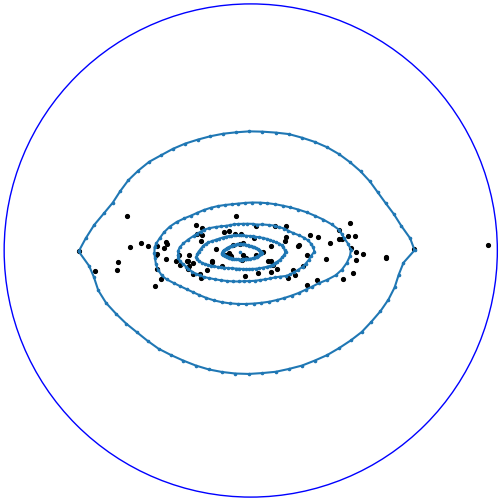}
	    \caption{Data-based quantiles calculated with the true gradient.}
	    \label{dbfig:dq10}
    \end{subfigure}
	    \begin{subfigure}[t]{0.32\linewidth}
		\includegraphics[width=\linewidth]{./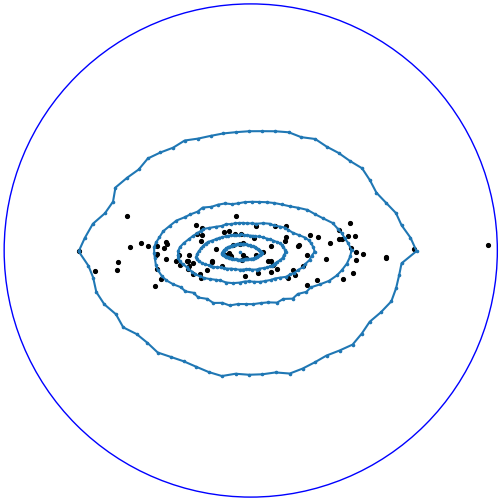}
		\caption{Data-based quantiles calculated with (\ref{dbapproxgrad}).}
		\label{dbfig:dq11}
	\end{subfigure}
	\begin{subfigure}[t]{0.32\linewidth}
		\includegraphics[width=\linewidth]{./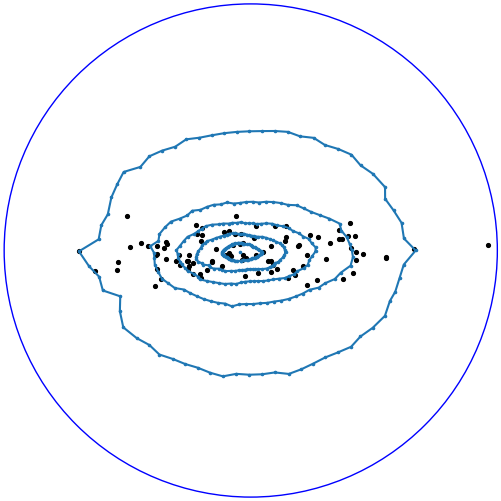}
		\caption{Data-based quantiles calculated with (\ref{dbbadgrad}).}
		\label{dbfig:dq12}
	\end{subfigure}
	\caption{The black points are the data points from the second simulated data set, the smaller blue points the computed quantiles, and the blue curves connecting them the isoquantile curves for $\beta\in\{0.2,0.4,0.6,0.8,0.98\}$. The top row shows attempted at computing parameter-based quantiles and the bottom row data-based quantiles. The results in the first column were computed using the true gradients, the second column using (\ref{dbapproxgrad}) and the third column (\ref{dbbadgrad}).}
	\label{dbfig:imagetwo}
\end{figure}

\begin{table}[!h]
    \centering
    \caption{For given $\beta$, means over 64 values of $\xi_{l,L}$ of the distance between the quantiles of the second simulated data set computed using the true gradient and each of (\ref{dbapproxgrad}) and (\ref{dbbadgrad}). The smallest value in each column is highlighted.}
    {\small
        \begin{tabular}{|c|c|c|c|c|c|c|}\cline{3-7}
            \multicolumn{2}{c|}{\multirow{2}{*}{}} & \multicolumn{5}{c|}{$\beta$} \\\cline{3-7}
            \multicolumn{2}{c|}{} & $0.2$ & $0.4$ & $0.6$ & $0.8$ & $0.98$ \\\hhline {|=|=|=|=|=|=|=|}
            \multirow{2}*{Parameter-based} & (\ref{dbapproxgrad}) & 0.0180 & 0.0396 & 0.0892 & 0.2214 & 0.9822 \\\cline{2-7}
            & (\ref{dbbadgrad}) & 0.0088 & 0.0229 & 0.0496 & 0.1319 & 0.7496 \\ \hhline {|=|=|=|=|=|=|=|}
            \multirow{2}*{Data-based} & (\ref{dbapproxgrad}) & \textbf{0.0018} & \textbf{0.0045} & \textbf{0.0066} & \textbf{0.0109} & \textbf{0.0210} \\\cline{2-7}
            & (\ref{dbbadgrad}) & 0.0085 & 0.0127 & 0.0208 & 0.0318 & 0.0520 \\\hline
        \end{tabular}
    }
    \label{dbtable2}
\end{table}

%Throughout this section, we have often made comparisons between the parameter-based and data-based perspectives, stating that one may perform better than the other in some measure. Of course, these statements cannot be applied categorically across all Hadamard manifolds because the two perspectives are equivalent in Euclidean space. Perhaps if we weaken these statements to say that one perspective performs at least as well as, instead of better than, the other, this generalization may have greater validity.

\subsubsection{Permutation tests with quantiles}\label{dbperm}

Regardless of how closely isoquantile contours follow the shape of a distribution, geometric quantiles give information about a distribution, and \cite{Koltchinskii1997} showed that in the Euclidean case, they completely characterize it. Imagining a measure of central tendency like the mean or median as giving a very rough, low-resolution image of a distribution, quantiles allow for arbitrarily fine resolutions determined by the number of quantiles used. Thus quantiles are useful in distinguishing distributions when the mean or median is not enough because they give more points for comparison.

Permutation tests are exact tests for the null hypothesis $H_0:F_X=F_Y$, where $F_X$ and $F_Y$ are the distributions underlying two data sets of size $N$ and $N'$, respectively. A common test statistic in permutation tests is $\lvert \hat m_N^X-\hat m_{N'}^Y\rvert$ in the case of real data, where $\hat m_N^X$ and $\hat m_{N'}^Y$ measure in some way the central tendency of the data sets; for simplicity, we will consider the sample medians. This statistic generalizes to $T_0=d(\hat m_N^X, \hat m_{N'}^Y)$ for metric space-valued data. The null hypothesis is rejected when $T_0$ is large. However, $T_0$ will have trouble rejecting $H_0$ even when $F_X\neq F_Y$ if the medians of the two distributions are equal, so we need more points of comparison. Letting $M$ be an Hadamard space and $(\beta_1,\xi_1),\ldots,(\beta_K,\xi_K)\in[0,1)\times \partial M$, we propose the statistic
\begin{equation*}
T_1=\sum_{k=1}^Kd(\hat{q}_N^X(\beta_k,\xi_k),\hat{q}_{N'}^Y(\beta_k,\xi_k)),
\end{equation*}
where $\hat{q}_N^X(\beta_k,\xi_k)$ and $\hat{q}_{N'}^Y(\beta_k,\xi_k)$ are sample $(\beta_k,\xi_k)$-quantiles, $k=1,\ldots,K$, for the respective data sets. Continuing the analogy from the previous paragraph, $T_1$ allows for comparisons at higher resolutions than $T_0$, and it should be able to distinguish distributions that differ at any of the $K$ quantiles, including at the median if any of the $\beta_k$'s is 0. As for the choice of quantile indices, in consideration of computational efficiency, they should be spaced apart to minimize redudancy.

We illustrate our point by testing $H_0$ with the data sets visualized in Figures \ref{dbfig:imageone} and \ref{dbfig:imagetwo}, $N=N'=100$. We use $K=5$ quantiles, indexed by $(0,\xi_{1,4}),(0.8,\xi_{1,4}),\ldots,(0.8,\xi_{4,4})$, where $\xi_{l,4}$, $l=1,\ldots,4$, are defined as in (\ref{dbxil}) for $L=4$. Note the first quantile is the median. This results in $p$-values of $0.432$ using $T_0$ and $0.01$ using $T_1$, calculated using 500 permutations. Thus $T_1$ succeeds in detecting the difference in the underlying distributions of these two data sets, whereas $T_0$ is an abject failure.

 \subsection{Real data experiments with diffusion tensor imaging} \label{dbreal}

Diffusion tensor imaging (DTI), first proposed by \cite{Basser1994}, allows measurement of the diffusion of water molecules in a voxel of a magnetic resonance imaging (MRI) scan of biological tissue. It has been crucial in identifying the structures and properties of white matter tracts in the brain. In DTI, the diffusion is modeled as a $3\times 3$ symmetric positive-definite matrix. In this section, we will first calculate and visualize data-based quantiles for DTI data, and then give another demonstration of the usefulness of geometric quantiles by measuring distributional characteristics.

 Denote the space of real symmetric $m\times m$ matrices by $\mathcal{S}_m$ and the space of real symmetric positive-definite (SPD) $m\times m$ matrices by $\mathcal{P}_m$. The former is an $m(m+1)/2$-dimensional vector space, and the latter can be considered a $m(m+1)/2$-dimensional Riemannian manifold on which the tangent space at each point is isomorphic to $\mathcal{S}_m$ and the Riemannian metric at $x\in \mathcal{P}_m$ is defined by the so-called trace, or affine invariant, metric: $\langle v_1,v_2\rangle=\text{tr}(x^{-1}v_1x^{-1}v_2)$, where $v_1,v_2\in T_x\mathcal{P}_m\cong\mathcal{S}_m$. In fact, this Riemannian manifold is complete and simply connected with sectional curvatures in $[-1/2,0]$ (see Proposition I.1 of \cite{Criscitiello2020}); therefore this is an Hadamard manifold. It is also a symmetric space (see Proposition 3.1(c) of \cite{Dolcetti2018}). Beyond diffusion tensor imaging, $\mathcal{P}_m$ is crucial to study because covariance (and precision) matrices are SPD matrices, and these can be random objects in their own right either as sample covariance matrices or as parameters in a Bayesian framework, in which they are often assigned the inverse-Wishart prior distribution; see \cite{Lee2018} for an example. For more details on this Riemannian manifold, such as expressions for the exponential maps and their inverses, parallel transport and the radial fields, consult \cite{Shin2024}. 

With eigendecomposition, an element of $\mathcal{P}_3$ can be visualized as an ellipsoid in three-dimensional space; the orientation of its mutually orthogonal axes is defined by an orthonormal basis of eigenvectors and their lengths by the eigenvalues. Coloring based on so-called fractional anisotropy, a positive real function of the eigenvalues, can also aid in visualization by encoding information about the ellipsoid; see \cite{Pajevic1999}. Briefly, red, green and blue represent left-right, anterior-posterior and superior-inferior orientations, respectively, while brightness encodes the eccentricity of the ellipsoid: darker hues are closer to spheres, while brighter hues are more elongated. 

Our data are visualized in this manner in Figure \ref{dbcc}. Both subfigures display the same data, the only difference being the background; both are included because some hues are more clearly seen against black and others against white. The data set, consisting of 90 data points, is extracted from an axial slice of the splenium of the corpus callosum of a subject. The data are available in the DIPY library for Python; see \url{https://docs.dipy.org/stable/examples_built/reconstruction/reconst_dti.html} for instructions on accessing the raw data and fitting them to the DTI model.

\begin{figure}[!t]
	\centering
	\begin{subfigure}[t]{0.49\linewidth}
		\includegraphics[width=\linewidth]{./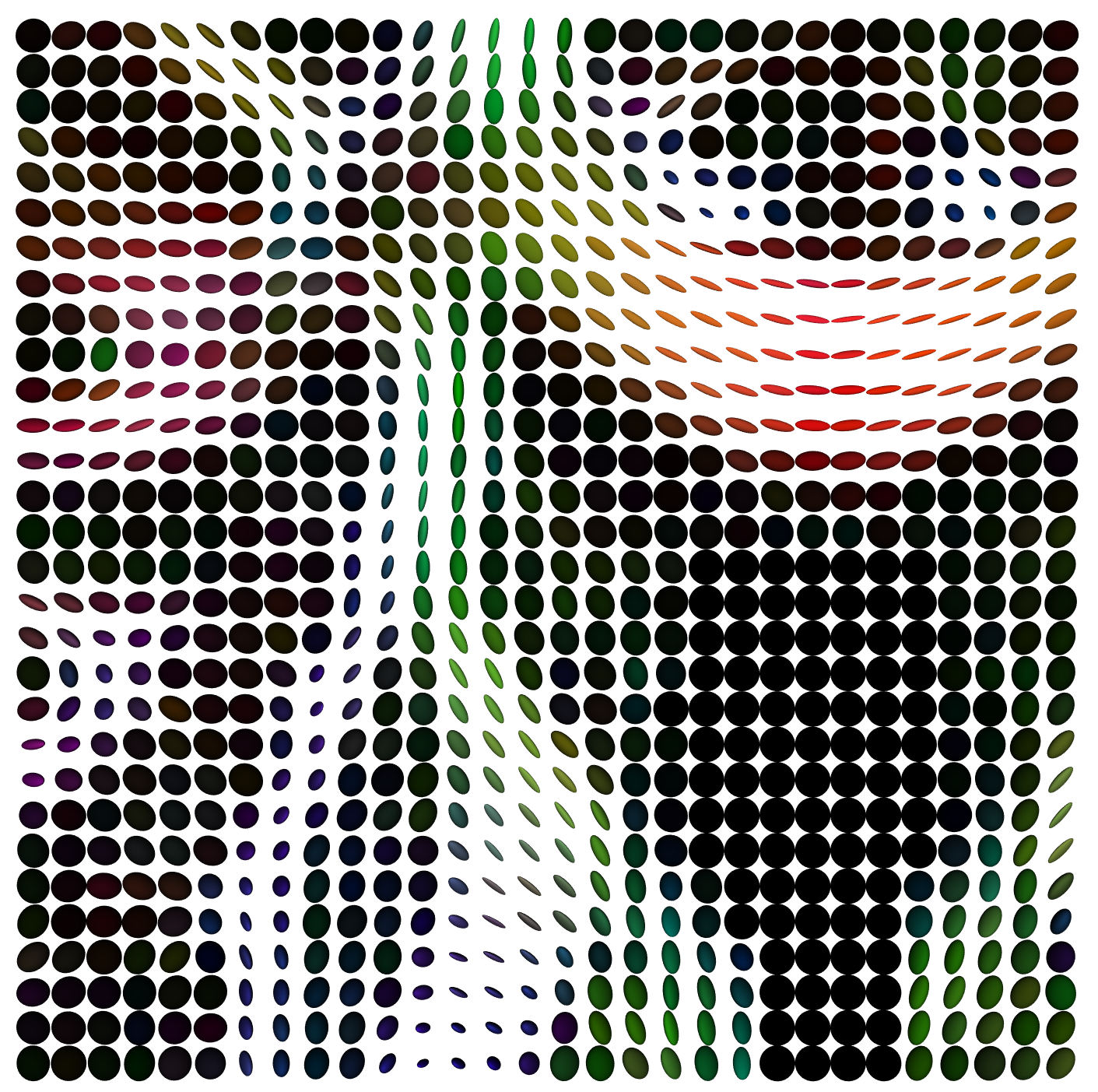}
  \caption{Data against a white background.}
  \label{dbcca}
	\end{subfigure}
	\begin{subfigure}[t]{0.49\linewidth}
		\includegraphics[width=\linewidth]{./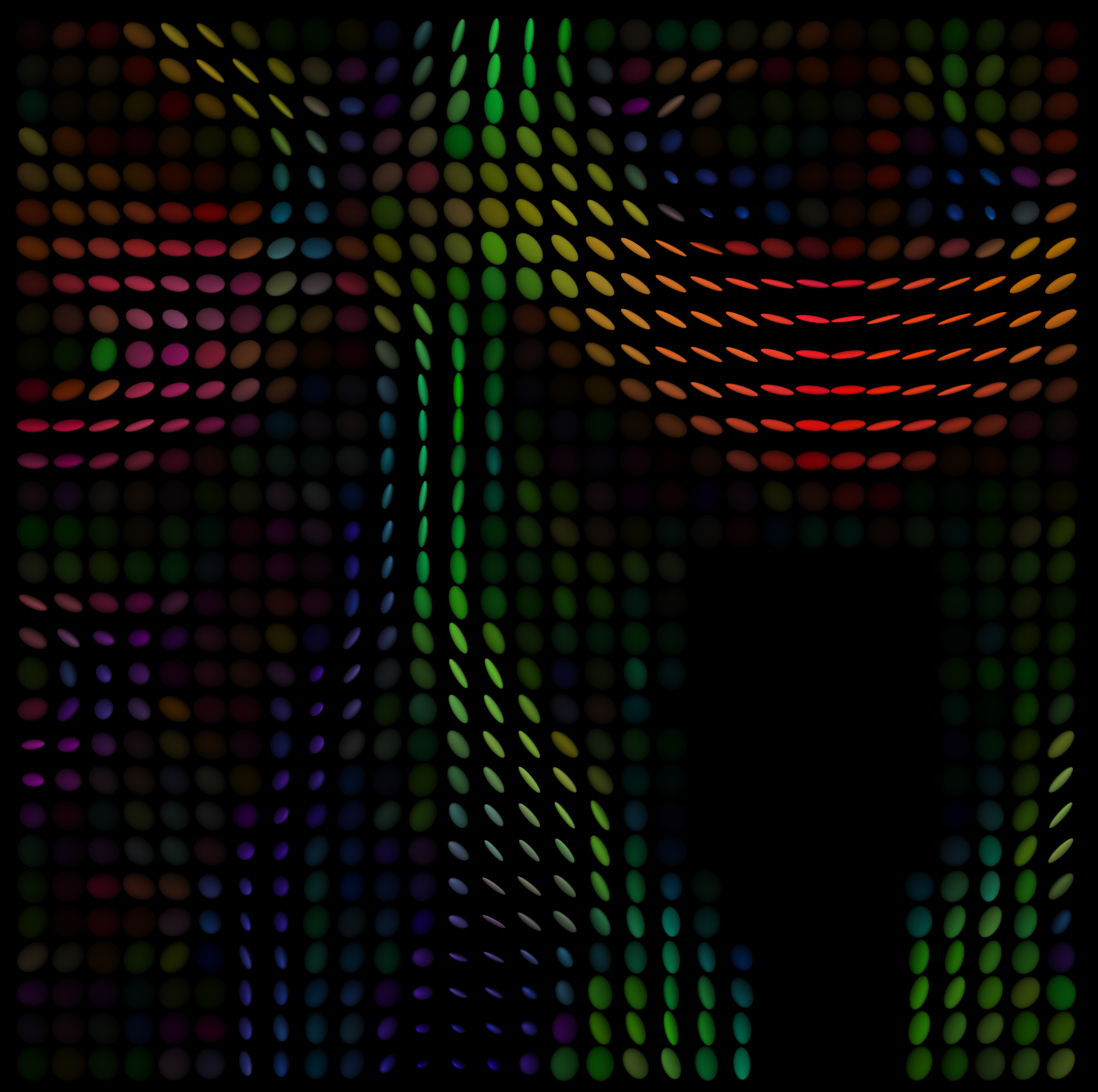}
  \caption{The same data against a black background.}
  \label{dbccb}
	\end{subfigure}
        \caption{Corpus callosum DTI data, visualized as ellipsoids in three-dimensional space.}
        \label{dbcc}
\end{figure}

 We again calculate quantiles for $\beta\in\{0,0.2,0.4,0.6,0.8,0.98\}$, and identifying $\xi\in\partial\mathcal{P}_3$ with $\xi_I\in T_I\mathcal{P}_3\cong\mathcal{S}_3$, we use 8 different values of $\xi$:
		\begin{align*}
			&\xi_1=\frac{1}{\sqrt{3}}\begin{bmatrix}1 & 0 & 0 \\ 0 & 1 & 0 \\ 0 & 0 & 1\end{bmatrix}, \xi_2=\frac{1}{\sqrt{2}}\begin{bmatrix}0 & 1 & 0 \\ 1 & 0 & 0 \\ 0 & 0 & 0\end{bmatrix},\xi_3=\frac{1}{\sqrt{2}}\begin{bmatrix}0 & 0 & 1 \\ 0 & 0 & 0 \\ 1 & 0 & 0\end{bmatrix}, \xi_4=\frac{1}{\sqrt{2}}\begin{bmatrix}0 & 0 & 0 \\ 0 & 0 & 1 \\ 0 & 1 & 0\end{bmatrix},
		\end{align*}
		$\xi_5=-\xi_1,\xi_6=-\xi_2,\xi_7=-\xi_3$, and $\xi_8=-\xi_4$; $\xi_1, \xi_2, \xi_3$, and $\xi_4$ are mutually orthonormal in $T_I\mathcal{P}_3$. 

  The curvature operator in (\ref{dbco}) is a self-adjoint linear operator from $T_p\mathcal{P}_m$ to itself, and so can be expressed as a real symmetric $m(m+1)/2\times m(m+1)/2$ matrix with respect to some orthonormal basis $\{w_1,\ldots,w_{m(m+1)/2}\}$ of $T_p\mathcal{P}_m$. This matrix can be obtained since an expression for the Riemannian curvature tensor of $\mathcal{P}_m$ is known from Proposition 3.1.2 \cite{Dolcetti2014} or Proposition 1.3(b) of \cite{Dolcetti2018}: the $(k,l)$-th entry is 
\begin{align*}
\langle CO_{\log_p(x)/d(p,x)}(w_k),w_l\rangle&=\bigg\langle R\bigg(\frac{\log_p(x)}{d(p,x)},w_k\bigg)\frac{\log_p(x)}{d(p,x)},w_l\bigg\rangle \\
&=\frac{1}{4}\text{Tr}\bigg(\bigg[p^{-1}\frac{\log_p(x)}{d(p,x)},p^{-1}w_k\bigg]\bigg[p^{-1}\frac{\log_p(x)}{d(p,x)},p^{-1}w_l\bigg]\bigg),
\end{align*}
where $[\cdot,\cdot]$ represents the matrix commutator. Then the eigendecomposition of this value will provide the $\kappa_i$s and $e_i$s in (\ref{dbsymgrad}), which we can use because $\mathcal{P}_m$ is a symmetric space. However, it seems to us that (\ref{dbapproxgrad}) should be an even better approximation of the gradient on $\mathcal{P}_m$ because it is ``flatter'' than hyperbolic space in the sense that its sectional curvatures are contained in $[-1/2,0]$ and these bounds are tight, and therefore some of its sectional curvatures are 0. Thus we will use this approximation in our descent algorithm.

The results are shown in Figure \ref{dbwholequantiles}. Each of the four subfigures shows the same configuration of ellipsoids. Figures \ref{dbwholea} and \ref{dbwholeb} differ only in background color for the same reason as in Figure \ref{dbcc}, while Figures \ref{dbwholec} and \ref{dbwholed} show the configuration from different angles; because the ellipsoids are three-dimensional, this is helpful in getting a better sense of their shapes. The central ellipsoid is the median ($\beta=0$), and in each of eight directions there lie five more ellipsoids. In Figures \ref{dbwholea} and \ref{dbwholeb}, starting from the top and moving counterclockwise, the ellipsoids in a given direction represent quantiles for $\xi_k$, $k=1,\ldots,8$ respectively, and moving outward from the center for a given $\xi_k$, the ellipsoids represent quantiles for $\beta=0.2,0.4,0.6,0.8,0.98$ respectively. 

 \begin{figure}[!t]
	\centering
	\begin{subfigure}[t]{0.49\linewidth}
		\includegraphics[width=\linewidth]{./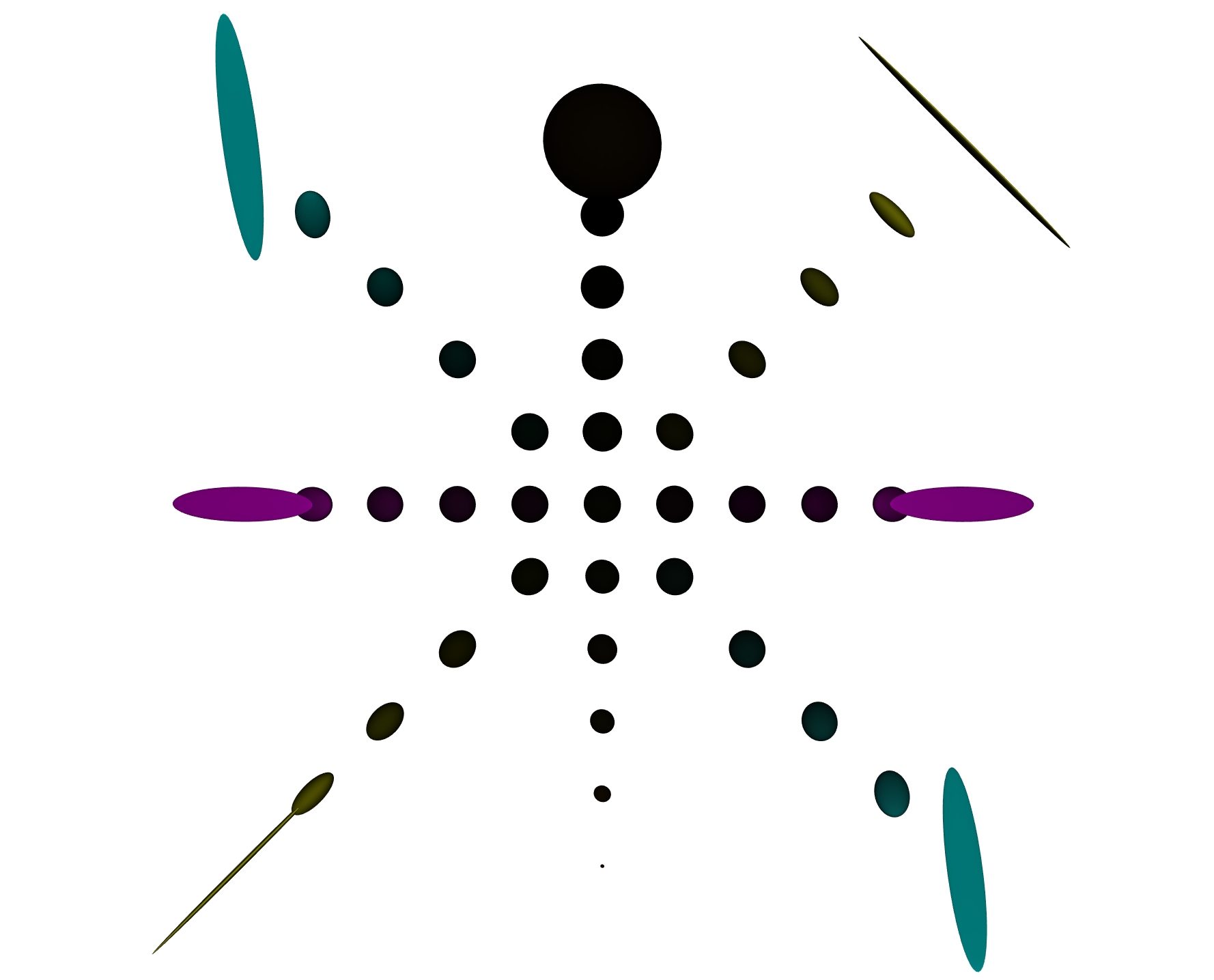}
  \caption{Quantiles arranged in the described configuration.}
  \label{dbwholea}
	\end{subfigure}
	\begin{subfigure}[t]{0.49\linewidth}
		\includegraphics[width=\linewidth]{./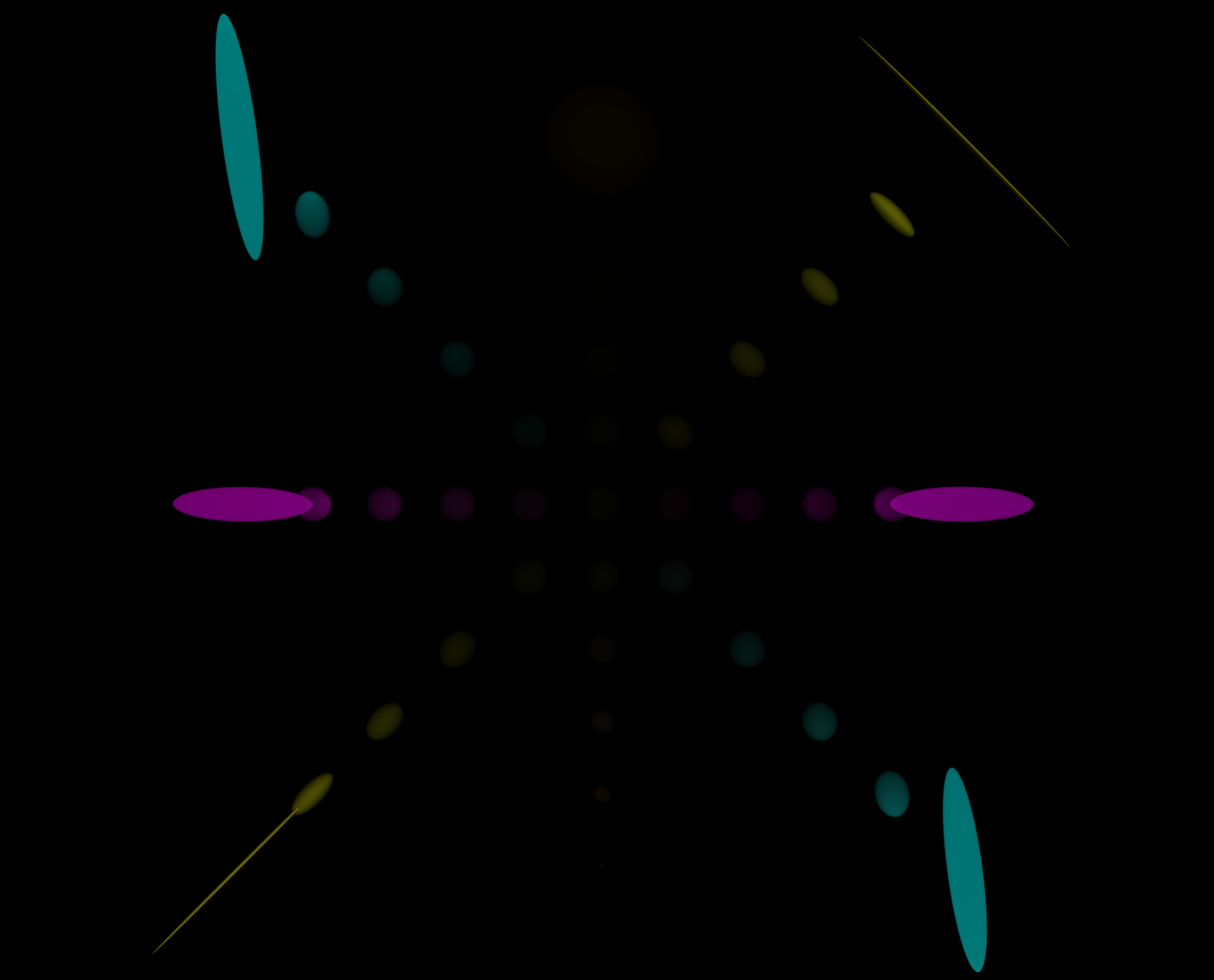}
  \caption{The same configuration against a black background.}
  \label{dbwholeb}
	\end{subfigure}
	\begin{subfigure}[t]{0.49\linewidth}
	    \includegraphics[width=\linewidth]{./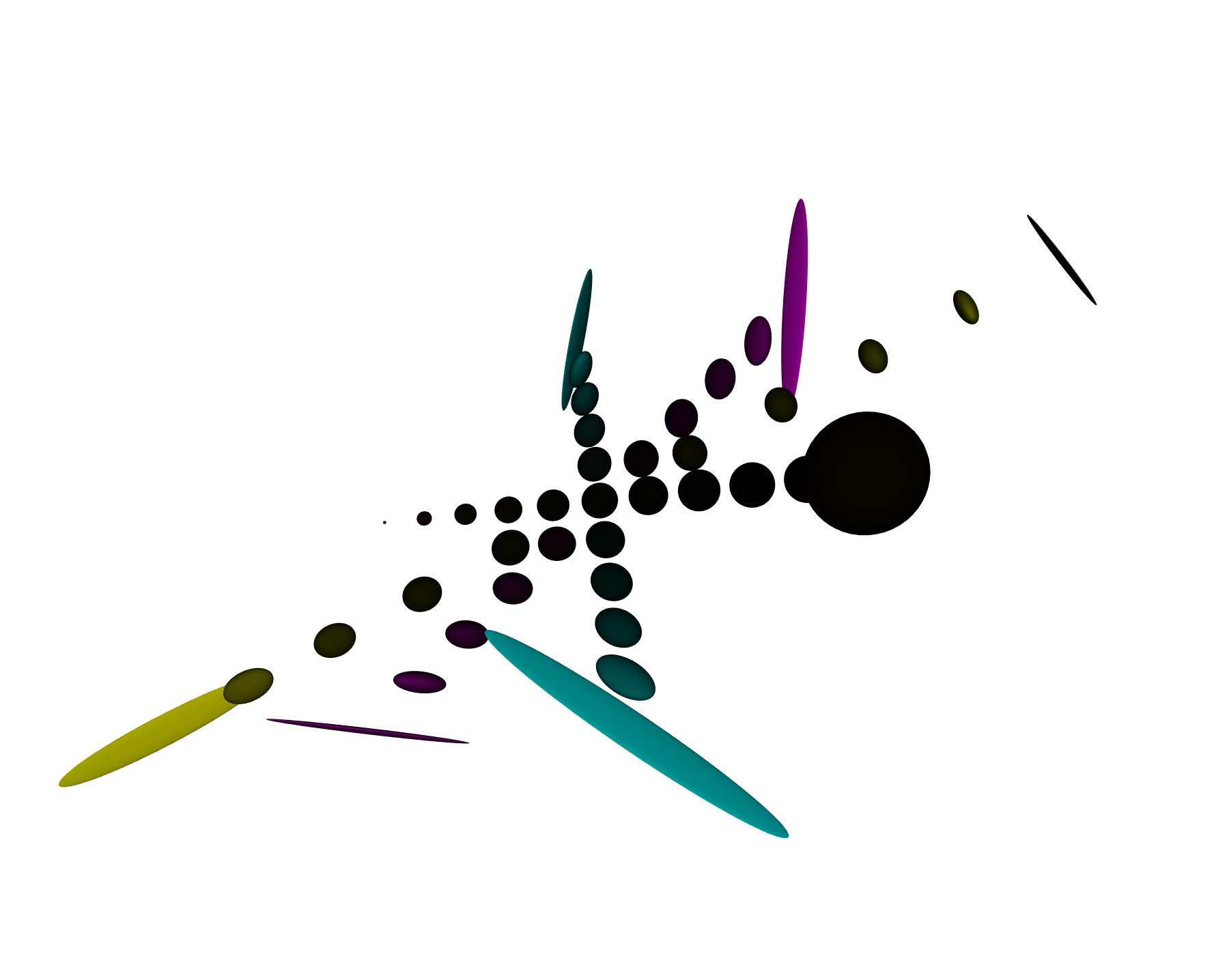}
      \caption{The same configuration from another angle.}
  \label{dbwholec}
    \end{subfigure}
	    \begin{subfigure}[t]{0.49\linewidth}
		\includegraphics[width=\linewidth]{./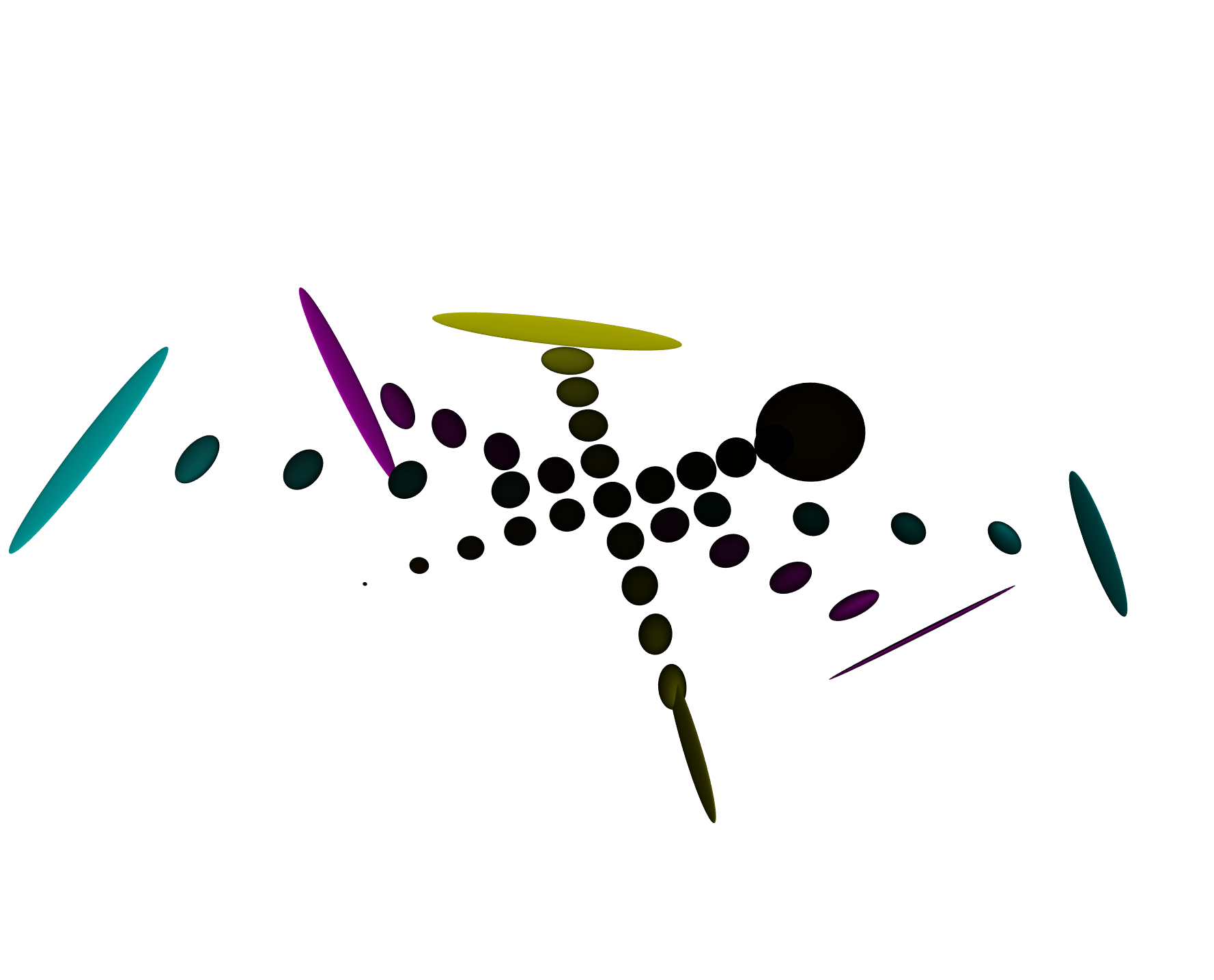}
  \caption{The same configuration from yet another angle.}
  \label{dbwholed}
	\end{subfigure}
	\caption{Quantiles for the data in Figure \ref{dbcc}, visualized as ellipsoids in three-dimensional space. The central ellipsoid is the median ($\beta=0$), and moving outward in each of eight directions, we get quantiles for $\beta=0.2,0.4,0.6,0.8,0.98$ respectively. In Figures \ref{dbwholea} and \ref{dbwholeb}, starting from the top and moving counterclockwise, the ellipsoids in a certain direction represent quantiles for $\xi_k$, $k=1,\ldots,8$ respectively.}
	\label{dbwholequantiles}
\end{figure}

Except for extreme values of $\beta$, the ellipsoids are quite close to spheres of a similar size. This makes sense, given the distribution of the original data seen in Figure \ref{dbcc}. For comparison, we also consider the patch of red ellipsoids, shown in Figure \ref{dbparta}, in the upper right quadrant of Figures \ref{dbcca} and \ref{dbccb}. Since this is a less spherical subset of the original data, its quantiles, calculated for the same values of $\beta$ and $\xi$, should look very different. These quantiles are visualized from three different angles in Figures \ref{dbpartb}, \ref{dbpartc} and \ref{dbpartd}, as in Figure \ref{dbwholequantiles}. The ellipsoids in Figure \ref{dbpartb} are arranged in the same way as those in Figure \ref{dbwholea} and \ref{dbwholeb}.
 
\begin{figure}[!t]
	\centering
	\begin{subfigure}[t]{0.49\linewidth}
		\includegraphics[width=\linewidth]{./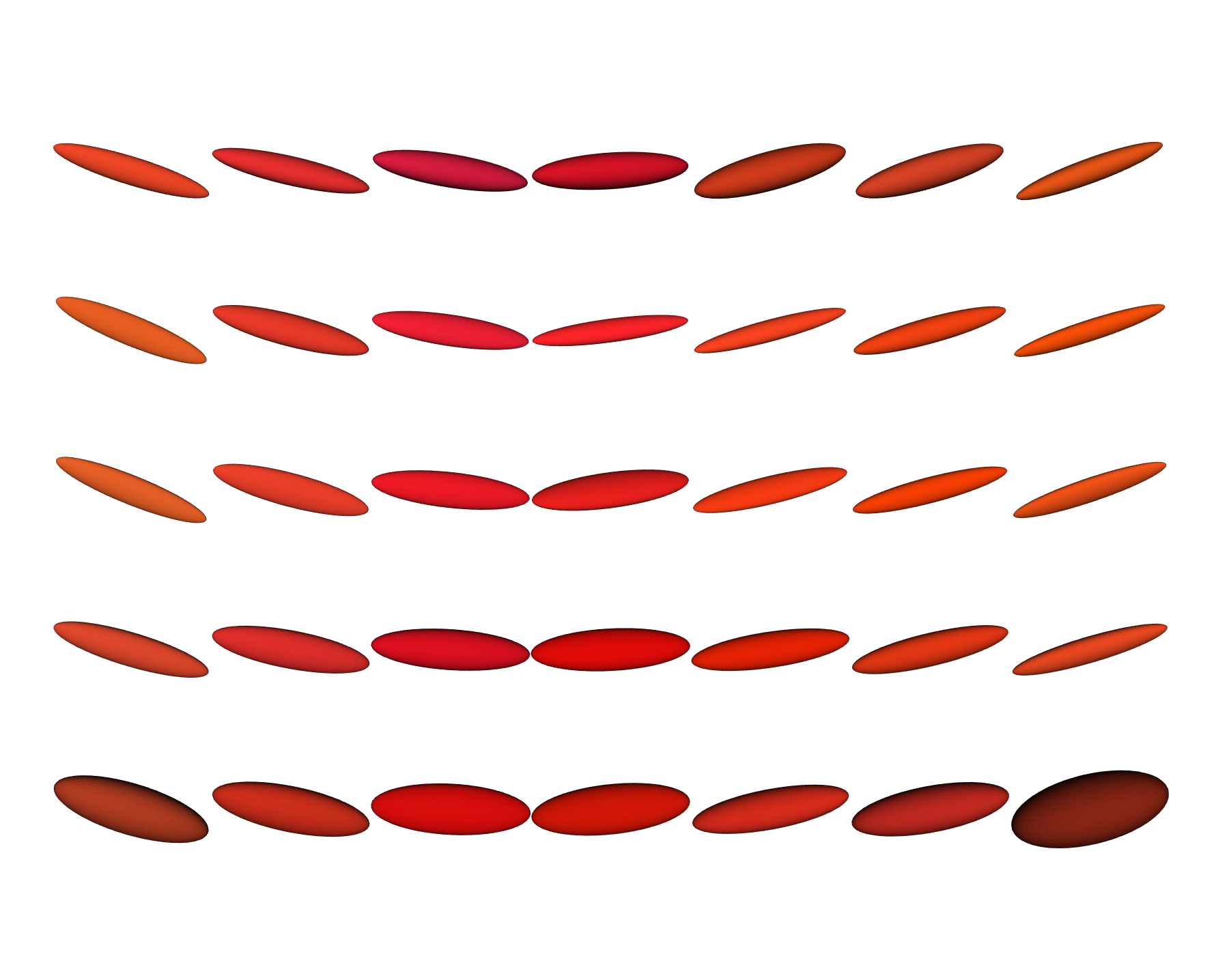}
      \caption{Subset of the original data.}
  \label{dbparta}
	\end{subfigure}
	\begin{subfigure}[t]{0.49\linewidth}
		\includegraphics[width=\linewidth]{./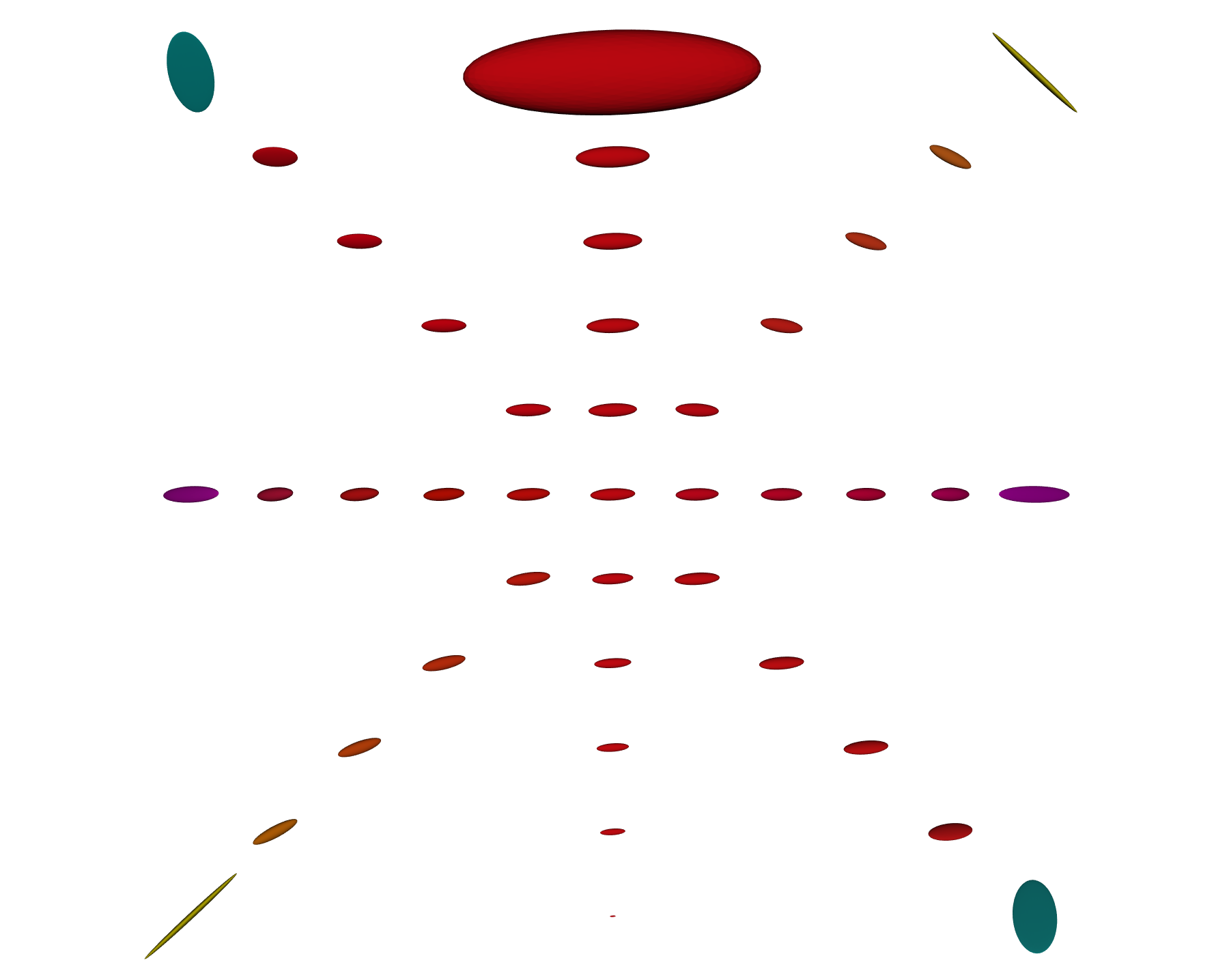}
      \caption{Quantiles arranged in the same configuration as in Figures \ref{dbwholea} and \ref{dbwholeb}.}
  \label{dbpartb}
	\end{subfigure}

	\begin{subfigure}[t]{0.49\linewidth}
	    \includegraphics[width=\linewidth]{./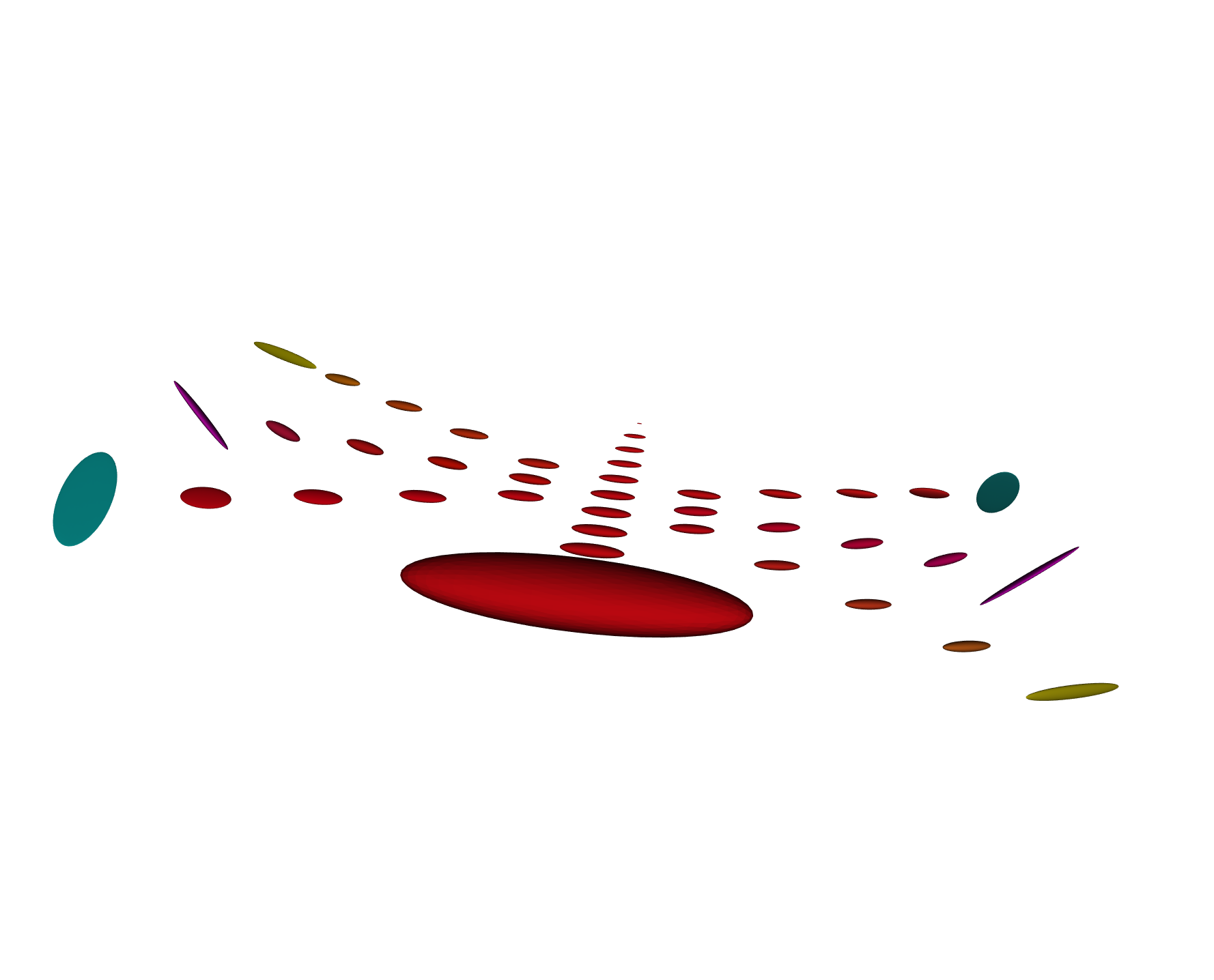}
      \caption{The same configuration from another angle.}
  \label{dbpartc}
    \end{subfigure}
	    \begin{subfigure}[t]{0.49\linewidth}
		\includegraphics[width=\linewidth]{./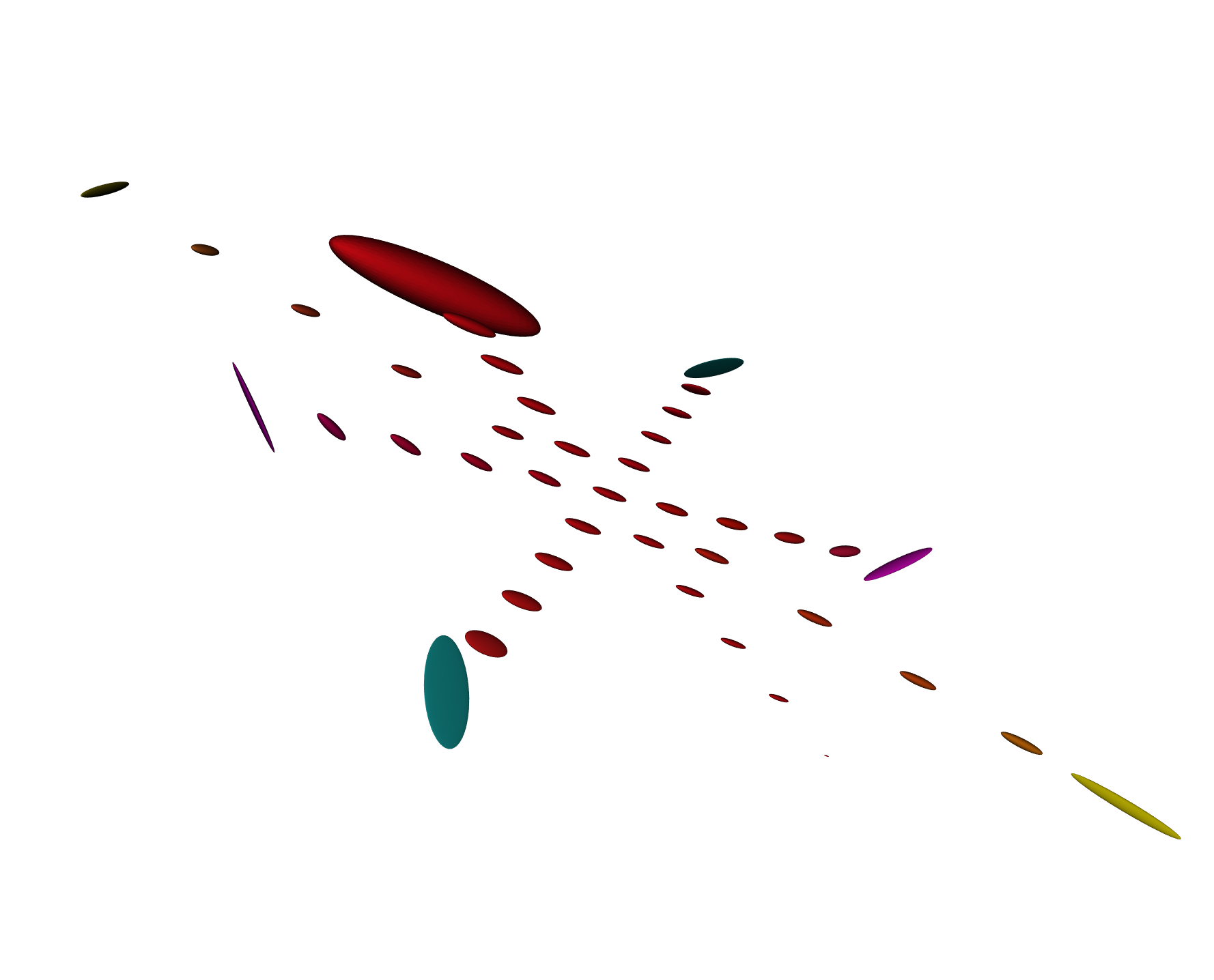}
      \caption{The same configuration from yet another angle.}
  \label{dbpartd}
	\end{subfigure}
	\caption{Subset of the original data and quantiles, visualized as ellipsoids in three-dimensional space.}
	\label{dbpartquantiles}
\end{figure}

\subsubsection{Measures of distributional characteristics}\label{dbmeasures}

In addition to the two dispersion measures in (\ref{dbdispersions}), the two skewness measures, two kurtosis measures and single spherical asymmetry measure of Section 6.2 of \cite{Shin2023} can also be adapted to data-based quantiles as
\begin{equation*}\begin{gathered}
\gamma_1(\beta):=\frac{\sup_{v\in S_{m}^{n-1}}\lVert \log_{m}(q(\beta,f_m(v)))+\log_{m}(q(\beta,f_m(-v)))\rVert}{\delta_1(\beta)},\\
\gamma_2(\beta):=\frac{\int_{S_{m}^{n-1}}\log_{m}(q(\beta,f_m(v))) dv/SA(n-1)}{\delta_2(\beta)},\\
\kappa_1(\beta,\beta'):=\frac{\delta_1(\beta')}{\delta_1(\beta)}, ~~
\kappa_2(\beta,\beta'):=\frac{\delta_2(\beta')}{\delta_2(\beta)}, ~~
\alpha(\beta):=\log\bigg(\frac{\sup_{\xi\in\partial M}d(m,q(\beta,\xi))}{\inf_{\xi\in\partial M}d(m,q(\beta,\xi))}\bigg).
\end{gathered}\end{equation*}
Then in the style of (\ref{dbdispersionsapprox}) and using notation of Section \ref{dbrobustness}, these quantities can be approximated with,
\begin{equation*}\begin{gathered}
\hat\gamma_{1,N}^\beta(\boldsymbol{X}):=\frac{\max_{k=1,\ldots,K}\lVert \log_{\hat m_N(\boldsymbol{X})}(\hat q_{k,N}(\boldsymbol{X})+\log_{\hat m_N(\boldsymbol{X})}(\hat p_{k,N}(\boldsymbol{X}))\rVert}{\hat \delta_{1,N}^\beta(\boldsymbol{X})},\\
\hat\gamma_{2,N}^\beta(\boldsymbol{X}):=\frac{(1/K)\sum_{k=1}^K\log_{\hat m_N(\boldsymbol{X})}(\hat q_{k,N}(\boldsymbol{X}))}{\hat \delta_{2,N}^\beta(\boldsymbol{X})},\\
\hat\kappa_{1,N}^{\beta,\beta'}(\boldsymbol{X}):=\frac{\delta_{1,N}^{\beta'}(\boldsymbol{X})}{\delta_{1,N}^\beta(\boldsymbol{X})}, ~~
\hat\kappa_{2,N}^{\beta,\beta'}(\boldsymbol{X}):=\frac{\delta_{2,N}^{\beta'}(\boldsymbol{X})}{\delta_{2,N}^\beta(\boldsymbol{X})}, ~~
\hat\alpha^\beta(\boldsymbol{X}):=\log\bigg(\frac{\max_{k=1,\ldots,K}d(\hat m_N(\boldsymbol{X}),q_k(\boldsymbol{X}))}{\min_{k=1,\ldots,K}d(\hat m_N(\boldsymbol{X}),q_k(\boldsymbol{X}))}\bigg),
\end{gathered}\end{equation*}
respectively. We calculate these quantities, as well as the dispersion measures in (\ref{dbdispersionsapprox}), with both non-extreme and extreme values of $\beta$ and $\beta'$: in the non-extreme case, $(\beta,\beta')=(0.2,0.8)$ for the kurtosis measures and $\beta=0.5$ otherwise, and the corresponding values in the extreme case are $(\beta,\beta')=(0.2,0.98)$ and $\beta=0.98$, respectively. Because $\mathcal{P}_3$ is 6-dimensional, we used a significantly larger value of $K$, 192, than in Section \ref{dbsims}, in which $M$ was only 2-dimensional. These 192 values of $\xi$ were chosen to be the image of 96 random points from the uniform distribution on $S_{\hat{m}_N(\mathbf{X})}^{n-1}$ and their antipodes under $f_{\hat{m}_N(\mathbf{X})}$; here again, $f_{\hat{m}_N(\mathbf{X})}$ is as defined in Section \ref{dbrobustness}. The measures were calculated using two different values of $\boldsymbol{X}$. In Table \ref{dbwholetable}, $\boldsymbol{X}$ represents the whole data set visualized in Figure \ref{dbwholea}, while in Table \ref{dbpartialtable}, it represents the partial data set from Figure \ref{dbparta}.

 \begin{table}[!h]
    \centering
    \caption{Estimates for the seven measures for the whole data set for various values of $\beta$ and $\beta'$.}
    {\small
\begin{tabular}{ |c|c|c|c|c|c|c| } 
\hline
$\hat\delta_1^{0.5}(\boldsymbol{X})$ & $\hat\delta_2^{0.5}(\boldsymbol{X})$ & $\hat\gamma_1^{0.5}(\boldsymbol{X})$ & $\lVert\hat\gamma_2^{0.5}(\boldsymbol{X})\rVert$ & $\hat\kappa_1^{0.2,0.8}(\boldsymbol{X})$ & $\hat\kappa_2^{0.2,0.8}(\boldsymbol{X})$ & $\hat\alpha^{0.5}(\boldsymbol{X})$\\
\hline
0.731 & 0.559 & 0.434 & 0.142 & 6.992 & 8.665 & 1.011 \\  \hline \hline
$\hat\delta_1^{0.98}(\boldsymbol{X})$ & $\hat\delta_2^{0.98}(\boldsymbol{X})$ & $\hat\gamma_1^{0.98}(\boldsymbol{X})$ & $\lVert\hat\gamma_2^{0.98}(\boldsymbol{X})\rVert$ & $\hat\kappa_1^{0.2,0.98}(\boldsymbol{X})$ & $\hat\kappa_2^{0.2,0.98}(\boldsymbol{X})$ & $\hat\alpha^{0.98}(\boldsymbol{X})$\\
\hline
6.589 & 5.982 & 0.280 & 0.064 & 26.501 & 33.782 & 0.710 \\ 
\hline
\end{tabular}
    }
    \label{dbwholetable}
\end{table}

 \begin{table}[!h]
    \centering
    \caption{Estimates for the seven measures for the partial data set for various values of $\beta$ and $\beta'$.}
    {\small
\begin{tabular}{ |c|c|c|c|c|c|c| } 
\hline
$\hat\delta_1^{0.5}(\boldsymbol{X})$ & $\hat\delta_2^{0.5}(\boldsymbol{X})$ & $\hat\gamma_1^{0.5}(\boldsymbol{X})$ & $\lVert\hat\gamma_2^{0.5}(\boldsymbol{X})\rVert$ & $\hat\kappa_1^{0.2,0.8}(\boldsymbol{X})$ & $\hat\kappa_2^{0.2,0.8}(\boldsymbol{X})$ & $\hat\alpha^{0.5}(\boldsymbol{X})$\\
\hline
1.068 & 0.861 & 0.121 & 0.034 & 4.558 & 5.698 & 0.513 \\  \hline \hline
$\hat\delta_1^{0.98}(\boldsymbol{X})$ & $\hat\delta_2^{0.98}(\boldsymbol{X})$ & $\hat\gamma_1^{0.98}(\boldsymbol{X})$ & $\lVert\hat\gamma_2^{0.98}(\boldsymbol{X})\rVert$ & $\hat\kappa_1^{0.2,0.98}(\boldsymbol{X})$ & $\hat\kappa_2^{0.2,0.98}(\boldsymbol{X})$ & $\hat\alpha^{0.98}(\boldsymbol{X})$\\
\hline
6.530 & 5.391 & 0.042 & 0.010 & 15.167 & 16.880 & 0.406 \\ 
\hline
\end{tabular}
    }
    \label{dbpartialtable}
\end{table}

The measures for both non-extreme and extreme values of $\beta$ suggest that the whole data set is more skewed and spherically asymmetric than the partial data set. However, according to $\hat \delta_1^{0.5}$ and $\hat \delta_2^{0.5}$, the whole data set is more dispersed than the partial data set, while $\hat \delta_1^{0.98}$ and $\hat \delta_2^{0.98}$ suggest the opposite. What do we make of this? The measures can be thought of as evaluating a characteristic based on the central bulk or the peripheral extremes of a distribution, depending on whether extreme or non-extreme quantiles, respectively, are used. Thus the seemingly contradictory conclusions make sense when comparing Figures \ref{dbcca} and \ref{dbparta}. On one hand, many of the ellipsoids in Figure \ref{dbcca} are more or less spherical, black and of similar sizes, meaning that much of the data represented is concentrated around a single point, while this degree of uniformity in a large portion of the data is not at all visible in Figure \ref{dbparta}; thus it make sense that $\hat \delta_1^{0.5}$ and $\hat \delta_2^{0.5}$ are greater for the whole data set. On the other hand, Figure \ref{dbcca} shows more variety of color and shape at the extremes than Figure \ref{dbparta}, in which all of the ellipsoids of roughly similar color and shape, and so it also makes sense that the order is reversed for $\hat \delta_1^{0.98}$ and $\hat \delta_2^{0.98}$. These conclusions, at least for the extreme case, are evident in Figures \ref{dbwholea} and \ref{dbpartb}: With the possible exception of the ones at the top and the bottom of the images, the ellipsoids for $\beta=0.98$ are visibly less extreme than the corresponding ellipsoids in the latter than in the former. Finally, given these conclusions about the dispersion and how $\hat\kappa(\beta,\beta')$ is calculated, the whole data set should have greater kurtosis than the partial data set, as is confirmed by Tables \ref{dbwholetable} and \ref{dbpartialtable}.

 \section{Discussion}\label{dbdiscussion}

This paper defines a new data-based notion of quantiles on Hadamard spaces as an alternative to the parameter-based one suggested in \cite{Shin2023}. We find that these new quantiles can be taken much further than the parameter-based quantiles. Stronger versions of asymptotic properties like strong consistency and joint asymptotic normality are demonstrated, and there are also several advantages related to robustness, extreme quantiles and the gradient of the loss functions. Through simulations, we showed empirically that an approximate gradient descent algorithm works quite well at computing data-based quantiles in contrast to parameter-based quantiles and illustrated how to test whether different data sets come from the same distribution using quantiles and permutation tests; finally, we computed data-based quantiles and measures of distributional characteristics for real DTI data.

The introduction notes several potential applications for quantiles on Hadamard spaces besides measures of distributional characteristics, and there are many other avenues of future research available just by generalizing existing work with multivariate geometric quantiles to Hadamard space settings; in particular, we hope to work on quantile regression models on Hadamard spaces. Research into these areas could be done using both types of quantiles, comparing the pros and cons of the two perspectives. Similarly, both parameter-based and data-based versions of expectiles and other $M$-quantiles could be defined on Hadamard spaces.

\section*{Acknowledgments} 
		
  This research was supported by the National Research Foundation of Korea (NRF) funded by the Korean government (2021R1A2C1091357).

 \begin{appendix}
		\section*{Appendix}
		
		\section{Proofs} \label{dbproofs}

  \subsection{Proof of the propositions in Section \ref{dbdef}}\label{dbproof_measurable}

  	\begin{proof}[Proof of Proposition \ref{dbmeasurable}]

If $x\neq p$, Proposition II.9.2(1) of \cite{Bridson1999} guarantees that the map $x\mapsto \angle_x(p,\xi)$ is upper semi-continuous. Since upper semi-continuous functions between spaces endowed with their Borel $\sigma$-algebras are measurable, the map $x\mapsto \rho(x,p;\beta,\xi)$ is a measurable map on $M\backslash\{p\}$, which is a measurable set, and 0 on $\{p\}$; it is therefore measurable on $M$.
	\end{proof}

\begin{proof}[Proof of Proposition \ref{dbequivariance}]
		
		When $x=p$, $\rho(g(x),g(p);\beta,g\xi)=0=\sigma\rho(x,p;\beta,\xi)$ since $g$ is bijective. The first part of the proof of Proposition 3.2 in \cite{Shin2023} demonstrates that scaled isometries preserve angles between geodesic, so when $x\neq p$,
		\begin{eqnarray*}
\rho(g(x),g(p);\beta,g\xi)&=&d'(g(p),g(x))-\beta d'(g(p),g(x))\mathrm{cos}(\angle_{g(x)}(g(p),g\xi)) \\
&=&\sigma d(p,x)-\beta \sigma d(p,x)\mathrm{cos}(\angle_X(p,\xi))  \\
&=&\sigma[d(p,x)-\beta d(p,x)\mathrm{cos}(\angle_X(p,\xi))] \\
&=&\sigma\rho(x,p;\beta,\xi),
		\end{eqnarray*}
		from which the conclusion follows since $g$ is bijective.
	\end{proof}

 \begin{proof}[Proof of Proposition \ref{dbnew}]
     At $p\neq x$, $d(\cdot,x)$ is continuous since $\lvert d(p,x)-d(p',x)\rvert\leq d(p,p')$, and $\cos(\angle_x(\cdot,\xi))$ is continuous thanks to Proposition II.9.2(1) in \cite{Bridson1999}. Since $(1-\beta)d(p,x)\leq\rho(x,p;\beta,\xi)\leq(1+\beta)d(p,x)$ and the limits of both sides as $p$ approaches $x$ are 0, the map is also continuous at $p=x$.
 \end{proof}

%Naturally, the steps of the following proof are almost identical to the corresponding proof in \cite{Shin2023}, with only a handful of minor differences. The proof is included here for completeness.

  \begin{proof}[Proof of Proposition \ref{dbbasic}]
			(a) For $\beta\in[0,1)$, the steps of the proof are completely analogous to those in the proof of Proposition 3.4(a) in \cite{Shin2023}, except that $p\mapsto\rho(x,p;\beta,\xi)$ is known to be continuous for all Hadamard spaces thanks to Proposition \ref{dbnew}, and therefore the result holds for all Hadamard spaces. Now for general $\beta\in[0,\infty)$, $G^{\beta,\xi}=G^{0,\xi}+2\beta(G^{0.5,\xi}-G^{0,\xi})$, so the conclusion follows.
			
			(b) It can be proved that the quantile set is nonempty, closed and bounded in exactly the same manner as in the proof of Proposition 3.4(b) in \cite{Shin2023}, except that this conclusion holds for all Hadamard spaces thanks to Proposition \ref{dbbasic}(a). Then by a generalization of the Hopf-Rinow theorem to length spaces (see for example Proposition I.3.7 in \cite{Bridson1999}), the local compactness of $M$ implies that the quantile set is compact.
		\end{proof}

\subsection{Proofs for the results in Section \ref{dbsc}}\label{dbproof_sulln}

%The proof is partially based on that of Lemma 4.1 in \cite{Shin2023}, but there are substantial differences.

\begin{proof}[Proof of Lemma~\ref{dbsulln}]
			$G^{\beta,\xi}$ is finite on all $M$ by Proposition \ref{dbbasic}(a). Denote by $C(L)$ the vector space of continuous functions on $L$ equipped with the uniform (sup) norm (i.e., for $f\in C(L)$, $\lVert f\rVert_{\sup}:=\sup_{p\in L}\lvert f(p)\rvert$), which is a Banach space; because $L$ is a compact metric space, the Stone-Weierstrass theorem implies that $C(L)$ is separable. Then, the restrictions $\hat{G}^{\beta,\xi}_N|_L$ and $G^{\beta,\xi}|_L$ to $L$ are in $C(L)$ by Proposition \ref{dbbasic}(a). In the rest of this proof we equip $C(L)$ with its Borel $\sigma$-algebra.

   The metric function is continuous everywhere and by Proposition II.9.2 of \cite{Bridson1999}, the map $x\mapsto \angle_x(p,\xi)$, and hence $x\mapsto -\beta d(p,x)\cos(\angle_x(p,\xi))$, is upper semi-continuous on $M\backslash\{p\}$ for fixed $p$. When $x\rightarrow p$, $\lvert-\beta d(p,x)\cos(\angle_x(p,\xi))\rvert\leq \beta d(p,x)\rightarrow 0$, so the map is continuous, and hence upper semi-continuous, when $x=p$ too. Therefore $x\mapsto \rho(x,p;\beta,\xi)$ is upper semi-continuous on $M$. So if $\rho(x_0,p;\beta,\xi)<r$ for real $r$, there exists some open neighborhood $U\subset$ of $x_0$ such that $x\in U$ implies $\rho(x,p;\beta,\xi)<r$. Then the preimage of $E_{p,r}:=$ under the map $x\mapsto\rho(x,\cdot;\beta,\xi)$ from $M$ to $C(L)$ contains $U$, and is thus open and hence measurable. Thus if we can show the collection of sets $\{E_{p,r}\}$ generates the open sets on $C(L)$, the aforementioned map into $C(L)$ is measurable.

   Denoting by $B_{p,r,\epsilon}$ the set $\{f\in C(L):r-\epsilon\leq f(p)\leq r+\epsilon\}$ for $\epsilon>0$, 
   \begin{align*}
   B_{p,r,\epsilon}=(C(L)\backslash E_{p,r-\epsilon})\bigcup(\bigcap_{m=1}^\infty E_{p,r+\epsilon+1/m}).
   \end{align*}
   Because $L$ is compact, it has a dense countable subset $L_D$. Then for $f_0\in C(L)$,
   \begin{align*}
       \{f\in C(L):\lVert f-f_0\rVert_{\sup}\leq\epsilon\}=\bigcap_{p\in L_D}B_{p,f_0(p),\epsilon}
   \end{align*}
   by the continuity of the elements of $C(L)$. This shows that $\{E_{p,r}\}$ generates the closed balls in $C(L)$. An open ball is a countable union of expanding concentric closed balls, and it is known that the open balls generate the open sets in spaces that are separable, or equivalently on metric spaces, second-countable or Lindel\"of. But it is also known that since $L$ is a compact metric space, $C(L)$ must be separable. Therefore $\rho(X,\cdot;\beta,\xi)|_L$, and hence each $\hat{G}^{\beta,\xi}_N|_L$, is a $C(L)$-valued random element. The rest of this proof proceeds identically to the corresponding part of the proof for Lemma 4.1 of \cite{Shin2023}.
		\end{proof}

  \begin{proof}[Proof of Theorem~\ref{dbslln}]
      The proof proceeds indentically to that of Theorem 4.1 in \cite{Shin2023}, except that thanks to Proposition \ref{dbbasic}, Lemma \ref{dbsulln} and a generalization of the Hopf-Rinow theorem (see for example Proposition I.3.7 in \cite{Bridson1999}) according to which closed and bounded sets in complete and locally compact length spaces are compact, the result holds for all locally compact Hadamard spaces. 
  \end{proof}

     \subsection{Proofs for the results in Section~\ref{dbasymp}}\label{dbproof_qgrad}

%This proof is heavily inspired by that of Theorem 3.1 of \cite{Shin2023}, but is much simpler and does not require compactness of the support.

		\begin{proof}[Proof of Theorem~\ref{dbqgrad}]
			Take an $h\in T_{q}M$ such that $\lVert h\rVert_g=1$ and let $\gamma_{q,h}(t)=\mathrm{exp}_{q}(th)$. We suppress the $q$ and $h$ parameters in the notation for $\gamma_{q,h}$. $\rho$ is differentiable as a function of its second argument on $M\backslash\{x\}$. So, if $x\neq q$,
			\begin{equation} \label{dbyes}
			\lim_{t\rightarrow 0+}\frac{\rho(x,\gamma(t);\beta,\xi)-\rho(x,\gamma(0);\beta,\xi)}{t}=d\rho(x,\cdot;\beta,\xi)_{q}(h)=\langle\nabla \rho(x,q;\beta,\xi),h\rangle_g.
			\end{equation}
   %%%%%different from here
			If $x=q$, 
			\begin{equation}\begin{aligned} \label{dbno}
			\lim_{t\rightarrow 0+}\frac{\rho(x,\gamma(t);\beta,\xi)-\rho(x,\gamma(0);\beta,\xi)}{t}
			&=\lim_{t\rightarrow 0+}\frac{\lVert th\rVert_g-\langle \beta\xi_x,\log_{x}(\gamma(t))\rangle_g+0}{t}   \\
			&=\lim_{t\rightarrow 0+}\frac{t-\langle \beta\xi_x,t\gamma^\prime(0)\rangle_g}{t}   \\
			&=1-\langle \beta\xi_{q},h\rangle_g.
			\end{aligned}\end{equation}
			
			For all $t$,
			\begin{align*}
			\frac{\lvert\rho(x,\gamma(t);\beta,\xi)-\rho(x,\gamma(0);\beta,\xi)\rvert}{t} &\leq\frac{\vert d(x,\gamma(t))-d(x,\gamma(0))\rvert+\beta\lvert\langle\xi_x,\log_x(\gamma(t))-\log_x(\gamma(0))\rangle_g\rvert}{t} \\
   &\leq\frac{d(\gamma(t),\gamma(0))+\beta d(\gamma(t),\gamma(0))}{t} \\
   &=1+\beta
			\end{align*}
   by the Cartan-Hadamard theorem. So using the bounded convergence theorem, (\ref{dbyes}) and (\ref{dbno}), the rest of the proof follows in an identical manner to the corresponding part of the proof of Theorem 3.1 of \cite{Shin2023}. Notably, unlike in that theorem, compactness of the support is not needed to use the bounded convergence theorem here.
		\end{proof}

  Several lemmas are needed.

\begin{lemma}\label{dblipschitz}
    Let $M$ be an $n$-dimensional $(n\geq2)$ Hadamard manifold, $(\beta,\xi)\in[0,1)\times\partial M$ and $\psi$ a chart defined on some open $V\subset M$ such that $\psi(V)$ is bounded in the Euclidean metric. Then there exists some positive number $\kappa$ for which 
    \begin{equation*}
    \lvert\rho(x,p_1;\beta,\xi)-\rho(x,p_2;\beta,\xi)\rvert\leq \kappa\lVert\psi(p_1)-\psi(p_2)\rVert_2
    \end{equation*}
    for all $x\in M$ and $p_1,p_2\in V$.
\end{lemma}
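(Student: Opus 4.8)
The plan is to obtain the bound in two stages: first show that $p\mapsto\rho(x,p;\beta,\xi)$ is $(1+\beta)$-Lipschitz with respect to the \emph{Riemannian} distance $d$, uniformly in $x$; then show that $d(p_1,p_2)$ is controlled by $\lVert\psi(p_1)-\psi(p_2)\rVert_2$ on $V$, using that $\psi(V)$ is bounded.

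For the first stage, fix $x$ and take $p_1,p_2\in M$. Let $\gamma\colon[0,d(p_1,p_2)]\to M$ be the unique unit-speed minimizing geodesic from $p_1$ to $p_2$, which is injective. By Proposition~\ref{dbnew} the function $t\mapsto\rho(x,\gamma(t);\beta,\xi)$ is continuous on the whole interval, and it is $C^1$ except possibly at the single parameter (if any) at which $\gamma$ passes through $x$; at every other parameter its derivative equals $\langle\nabla\rho(x,\gamma(t);\beta,\xi),\gamma'(t)\rangle_g$, which by Proposition~\ref{dbgrad}(a) has absolute value at most $1+\beta$. A continuous function on an interval that is $C^1$ with derivative bounded by $1+\beta$ off a single point is $(1+\beta)$-Lipschitz, so $\lvert\rho(x,p_1;\beta,\xi)-\rho(x,p_2;\beta,\xi)\rvert\le(1+\beta)\,d(p_1,p_2)$, and this holds for every $x$.

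For the second stage, since $\psi(V)$ is bounded it is contained in a closed Euclidean ball $K$, which is compact and (for the inverse-exponential-type charts used in the applications, whose inverse is defined and smooth on all of $\mathbb{R}^n$) lies in the domain of $\psi^{-1}$. Then $\psi^{-1}$ is $C^1$ on $K$, the coordinate matrix of the Riemannian metric varies continuously over $\psi^{-1}(K)$, and $\Lambda:=\sup_{y\in K}\lambda_{\max}(g_{\psi^{-1}(y)})<\infty$. Given $p_1,p_2\in V$, push the straight segment $s(t)=(1-t)\psi(p_1)+t\psi(p_2)$, $t\in[0,1]$, forward through $\psi^{-1}$ to obtain a $C^1$ path $\sigma$ in $M$ from $p_1$ to $p_2$ lying in $\psi^{-1}(K)$; its Riemannian length is $\int_0^1\sqrt{s'(t)^\top g_{\sigma(t)}s'(t)}\,dt\le\sqrt{\Lambda}\,\lVert\psi(p_1)-\psi(p_2)\rVert_2$, whence $d(p_1,p_2)\le\sqrt{\Lambda}\,\lVert\psi(p_1)-\psi(p_2)\rVert_2$. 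Combining the two stages gives the claim with $\kappa=(1+\beta)\sqrt{\Lambda}$.

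The first stage is essentially immediate from Proposition~\ref{dbgrad}(a), modulo the mild care needed at the one point where $\rho(x,\cdot;\beta,\xi)$ is non-differentiable. The step I expect to require the most attention is the coordinate comparison in the second stage: one must ensure that the Euclidean segments joining points of $\psi(V)$ sweep out a region on which the metric coefficients are bounded, which is exactly what the boundedness of $\psi(V)$ (together with the regularity of the chart near $\overline{\psi(V)}$) provides. Equivalently, one may bound the \emph{Euclidean} gradient of $\rho(x,\cdot;\beta,\xi)$ directly: by (\ref{dbequiv}) it equals $g_p\nabla\rho(x,p;\beta,\xi)$, whose Euclidean norm is at most $\sqrt{\lambda_{\max}(g_p)}\,(1+\beta)$ by Proposition~\ref{dbgrad}(a), and then integrate it along the same segments to reach the same bound.
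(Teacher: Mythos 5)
Your two-stage structure matches the paper's: first establish that $p\mapsto\rho(x,p;\beta,\xi)$ is $(1+\beta)$-Lipschitz in the Riemannian distance $d$ uniformly in $x$, then convert $d$ into the Euclidean chart distance using boundedness of the metric coefficients. The routes through stage 1 differ, though. You obtain the Riemannian Lipschitz bound by integrating $\nabla\rho$ along the connecting geodesic and invoking Proposition~\ref{dbgrad}(a), which requires a small argument to handle the one parameter where $\gamma$ passes through $x$ (this works, but is not purely trivial). The paper instead bounds the increment directly and elementarily: $\lvert\rho(x,p_1)-\rho(x,p_2)\rvert\le\lvert d(x,p_1)-d(x,p_2)\rvert+\beta\lvert\langle\xi_x,\log_x(p_1)-\log_x(p_2)\rangle\rvert\le(1+\beta)d(p_1,p_2)$, where the second term uses Cauchy--Schwarz and the fact (Cartan--Hadamard) that $\log_x$ is distance non-increasing. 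That avoids the non-differentiability at $p=x$ entirely and does not lean on a result proved later in the appendix. For stage 2 the paper builds an explicit positive upper semi-continuous factor $l$ with $d(\psi^{-1}y,\psi^{-1}z)=l(y,z)\lVert y-z\rVert_2$ and then takes its supremum; your version is the more direct $\sqrt{\Lambda}\,\lVert\psi(p_1)-\psi(p_2)\rVert_2$ bound, which amounts to the same estimate. One thing worth noting: the lemma as stated allows an arbitrary chart on $V$ with bounded image, but passing a supremum of the metric coefficients (or of $l$) over the \emph{closure} of $\psi(V)$ requires the chart to extend past that closure, which you make explicit by restricting to globally defined charts of the kind actually used in Lemmas~\ref{dbtwo} and~\ref{dbdiff} ($\psi=\phi|_V$ with $\phi$ a global diffeomorphism such as $\log_{q}$); the paper's "$l$ attains a maximum on $\psi(V)\times\psi(V)$" elides the same point. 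In the contexts where the lemma is used, that assumption holds, so both proofs go through.
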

\begin{proof}
The first part of this proof. in which we show the existence of a positive upper semi-continuous function $l:\psi(V)\times\psi(V)\rightarrow\mathbb{R}$ such that $d(\psi^{-1}(y),\psi^{-1}(z))=l(y,z)\lVert z-y\rVert_2$, is similar to that of Lemma B.1(a) of \cite{Shin2023}, but with crucial differences. Fix some $y'\in \psi(V)$. By the openness of $\psi(V)\subset\mathbb{R}^n$, for all $y$ and $z$ sufficiently close in the Euclidean metric to $y'$, the image of the Euclidean straight line $c_{y,z}$ defined on $[0,1]$ by $c_{y,z}(t)=y+t(z-y)$ is contained in $\psi(V)$. Denoting by $\zeta_{\max}(y)$ the largest eigenvalue of $g_y^\psi$, the $n\times n$ positive definite matrix representation of the Riemannian metric at $y\in \psi(V)$, $\zeta_{\max}$ is continuous because eigenvalues are the roots of polynomials whose coefficients are continuous functions of the entries of a matrix. Then define $\zeta_{\sup}(y,z):=\sup_{t\in[0,1]}\zeta_{\max}(c_{y,z}(t))$. The length of $c_{y,z}$ according to the Riemannian metric must be at least as large as the length of the Riemannian geodesic from $\psi^{-1}(y)$ to $\psi^{-1}(z)$, so
			\begin{align*}
			d(\psi^{-1}(y),\psi^{-1}(z))&\leq\int_{0}^1\sqrt{\dot c_{y,z}(t)^Tg_{c_{y,z}(t)}\dot c_{y,z}(t)} dt \\
   &=\int_{0}^1\sqrt{(z-y)^Tg_{c_{y,z}(t)}^\psi(z-y)} dt \\
   &\leq\sqrt{\zeta_{\sup}(y,z)}\int_{0}^1\lVert z-y\rVert_2dt \\
   &=\sqrt{\zeta_{\sup}(y,z)}\lVert z-y\rVert_2
			\end{align*}
			or
			\begin{align*}
			\frac{d(\psi^{-1}(y),\psi^{-1}(z))}{\lVert z-y\rVert_2}\leq\sqrt{\zeta_{\sup}(y,z)}
			\end{align*}
			if $y\neq z$. Recall that $y'$ is some fixed point in $\psi(V)$. By the continuity of $\zeta_{\max}$, there exists for any $\epsilon>0$ some $\delta>0$ for which $\lVert y-y'\rVert_2<\delta$ implies $\lvert \sqrt{\zeta_{\max}(y)}-\sqrt{\zeta_{\max}(y')}\rvert<\epsilon$. Because the image of $c_{y,z}$ is compact, $\zeta_{\sup}(y,z)=\zeta_{\max}(c_{y,z}(t))>0$ for some $t\in[0,1]$. Therefore, if $\lVert y-y'\rVert_2<\delta$ and $\lVert z-y'\rVert_2<\delta$, $\lVert c_{y,z}(t)-y'\rVert_2<\delta$ and $\lvert \sqrt{\zeta_{\sup}(y,z)}-\sqrt{\zeta_{\max}(y')}\rvert<\epsilon$. This means that 
			\begin{equation}\begin{aligned} \label{dblowersemi}
			\lim\sup_{(y,z)\rightarrow(y',y')}\frac{d(\psi^{-1}(y),\psi^{-1}(z))}{\lVert z-y\rVert_2}\leq\lim_{(y,z)\rightarrow(y',y')}\sqrt{\zeta_{\sup}(y,z)}=\sqrt{\zeta_{\max}(y')}.
			\end{aligned}\end{equation}
			
			Then, $l:\psi(V)\times\psi(V)\rightarrow\mathbb{R}$ defined by  
			\begin{align*}
			l(y,z)=\begin{cases}
			\frac{d(\psi^{-1}(y),\psi^{-1}(z))}{\lVert z-y\rVert_2}&~~\mbox{if $y\neq z$} \\ 
			\sqrt{\zeta_{\max}(y)} &~~ \mbox{if $y=z$}
			\end{cases}
			\end{align*}
			is continuous if $y\neq z$, while upper semi-continuity at $(y,y)\in \psi(V)\times \psi(V)$ follows from (\ref{dblowersemi}) and the continuity of $\sqrt{\zeta_{\max}(y)}$ on $\psi(V)$; thus $l:\psi(V)\times\psi(V)\rightarrow\mathbb{R}$ is a positive upper semi-continuous function.

    Then
    \begin{align*}
        \lvert\rho(x,p_1;\beta_1,\xi_k)-\rho(x,p_2;\beta_1,\xi_k)\rvert&\leq\lvert d(x,p_1)-d(x,p_2)\rvert+\beta\lvert\langle\xi_x,\log_x(p_1)-\log_x(p_2)\rangle\rvert \\
        &\leq (1+\beta)d(p_1,p_2) \\
        &=l(\psi(p_1),\psi(p_2))\lVert\psi(p_1)-\psi(p_2)\rVert_2
    \end{align*}
    for all $x,p_1,p_2\in V$ by the triangle inequality, Cauchy-Schwarz inequality and Cartan-Hadamard theorem. Since $\psi(V)$ is bounded, it is contained in some compact set and therefore $l$ attains a maximum, which we call $\kappa$, on $\psi(V)\times \psi(V)$.
\end{proof}

\begin{lemma}\label{dbtwo}
    Let $M$ be an $n$-dimensional ($n\geq 2$) Hadamard manifold. Then for each $k=1,...,K$ and any $p\in M$, $E[\lVert\Psi_k(X,p)\rVert_2^2]<\infty$.
\end{lemma}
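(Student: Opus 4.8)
The plan is to reduce the claim to the essentially trivial observation that, for fixed $p$, the random variable $\lVert\Psi_k(X,p)\rVert_2$ is bounded above by a deterministic constant; finiteness of its second moment is then immediate. The only care needed is in passing between the Euclidean norm $\lVert\cdot\rVert_2$ in the global chart $\phi$ (in terms of which $\Psi$ is defined) and the Riemannian norm $\lVert\cdot\rVert_g$ (in terms of which the gradient bound is available). Since $g_p$ is a single fixed symmetric positive definite $n\times n$ matrix, let $\lambda_{\min}$ and $\lambda_{\max}$ denote its smallest and largest eigenvalues, both strictly positive; then $\lVert v\rVert_2\leq\lambda_{\min}^{-1/2}\lVert v\rVert_g$ and $\lVert g_p v\rVert_2\leq\lambda_{\max}\lVert v\rVert_2$ for every $v\in\mathbb{R}^n$.

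First I would handle the event $X\neq p$. There $\Psi_k(X,p)=g_p\nabla\rho_k(X,p)$ by definition, so combining the two norm comparisons above with Proposition \ref{dbgrad}(a), which gives $\lVert\nabla\rho_k(X,p)\rVert_g\leq 1+\beta_k$ uniformly in $X$, yields $\lVert\Psi_k(X,p)\rVert_2\leq\lambda_{\max}\lambda_{\min}^{-1/2}(1+\beta_k)$. On the complementary event $X=p$, $\Psi_k(X,p)$ equals the single fixed tangent vector $\Psi_k(p,p)\in T_pM$, whose Euclidean norm is simply a finite number (one can note $d(\log_p)_p$ is the identity on $T_pM$, so $\Psi_k(p,p)=-\beta_k(\xi_k)_p$ and $\lVert\Psi_k(p,p)\rVert_2\leq\lambda_{\min}^{-1/2}\beta_k$, but the exact value is irrelevant). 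Taking $C:=\max\{\lambda_{\max}\lambda_{\min}^{-1/2}(1+\beta_k),\,\lVert\Psi_k(p,p)\rVert_2\}$, we get $\lVert\Psi_k(X,p)\rVert_2\leq C$ almost surely, hence $E[\lVert\Psi_k(X,p)\rVert_2^2]\leq C^2<\infty$.

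There is no real obstacle here; the statement is a convenience lemma whose content is the uniform bound on the gradient. The only points worth stating carefully are the eigenvalue comparison between $\lVert\cdot\rVert_2$ and $\lVert\cdot\rVert_g$, and the measurability of $x\mapsto\Psi_k(x,p)$ that makes the expectation meaningful—this follows from the smoothness of $p'\mapsto\rho_k(x,p')$ on $M\setminus\{x\}$ together with Proposition \ref{dbmeasurable}. If one prefers to avoid invoking Proposition \ref{dbgrad}(a), the same bound follows directly from the identity $\nabla\rho_k(x,p)=-\log_p(x)/d(p,x)-d(\log_x)_p^\dagger\beta_k(\xi_k)_x$ of Proposition \ref{dbgrad}(b), since $\lVert\log_p(x)/d(p,x)\rVert_g=1$ and $d(\log_x)_p$ is norm non-increasing by the Cartan--Hadamard/Rauch comparison theorem, so that $d(\log_x)_p^\dagger$ has operator norm at most $1$; but citing Proposition \ref{dbgrad}(a) is cleanest.
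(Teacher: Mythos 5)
Your proof is correct, and it takes a somewhat different route than the paper. The paper's proof invokes Lemma \ref{dblipschitz}, whose role is to supply a local Lipschitz constant $\kappa$ for $p'\mapsto\rho_k(x,p')$ (uniform in $x$) with respect to $\lVert\cdot\rVert_2$ on a chart neighborhood of $p$; it then identifies $\lVert\Psi_k(x,p)\rVert_2$ as a one-sided directional derivative quotient bounded by that Lipschitz constant, which pins down the uniform bound directly in the Euclidean norm. You instead start from the pointwise Riemannian bound $\lVert\nabla\rho_k(x,p)\rVert_g\leq 1+\beta_k$ of Proposition~\ref{dbgrad}(a) and convert to the Euclidean bound by comparing $\lVert\cdot\rVert_2$ with $\lVert\cdot\rVert_g$ via the eigenvalues of the single fixed matrix $g_p$; your handling of the $x=p$ case (noting $d(\log_p)_p$ is the identity, so $\Psi_k(p,p)=-\beta_k(\xi_k)_p$) is also correct. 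Both proofs ultimately rest on the same $(1+\beta)$-Lipschitz property of $\rho$ in the Riemannian distance, but your version is more economical for this lemma in isolation since it needs only the metric at the single point $p$, whereas the paper's Lemma \ref{dblipschitz} builds an upper-semicontinuous comparison factor on a whole neighborhood. The paper pays that price because it reuses Lemma \ref{dblipschitz} in the proofs of Lemma \ref{dbdiff} and Theorem \ref{dbclt2}, where the whole-neighborhood statement is genuinely needed; for Lemma \ref{dbtwo} alone your argument is cleaner.
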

\begin{proof}
    Fix $k$ and $p\in M\cong \phi(M)$. For any $x\neq p$ and $h\in T_pM\cong \mathbb{R}^n$, 
    \begin{equation*}
        \lim_{t\rightarrow 0}\frac{\rho_k(x,p+th)-\rho_k(x,p)}{\lVert th\rVert_2}=\langle\Psi_k(x,p),h\rangle_2.
    \end{equation*}
    Then letting $h=\Psi_k(x,p)$ and applying Lemma \ref{dblipschitz} with $(\beta,\xi)=(\beta_k,\xi_k)$, $V$ an open neighborhood of $p$ such that $\phi(V)$ is bounded in the Euclidean metric and $\psi=\phi|_V$, $\lVert\Psi_k(x,p)\rVert_2^2\leq\kappa$ if $x\neq p$. Thus 
    \begin{equation}\label{dbneed}
        \lVert\Psi_k(x,p)\rVert_2\leq\sqrt\kappa+\lVert\Psi_k(p,p)\rVert
    \end{equation} 
    for all $x$ and so $E[\lVert\Psi_k(X,p)\rVert_2]\leq\sqrt\kappa+\lVert\Psi_k(p,p)\rVert$.
\end{proof}

\begin{proof}[Proof of Theorem~\ref{dbclt1}] 
With just a handful of differences, the proof follows that of Theorem 4.2 of \cite{Shin2023}. We list those differences here. 

The core of the proof requires showing that all of the assumptions of Theorem 3 and its corollary in Section 4 of \cite{Huber1967} are satisfied. In the parameter-based case, Assumption (N-2) was included as part of (I) in the statement of the theorem, but in the data-based case, Assumption (N-2) holds by Theorem \ref{dbqgrad} and the existence of the density in a neighborhood of $q_k$. Similarly, Assumption (N-4) held by hypothesis in the parameter-based case, but here it is a consequence of (I), as shown in Lemma \ref{dbtwo}.

The conditions need to be renumbered: all mentions of (III), (IV) and (V) in the proof of Theorem 4.2 of \cite{Shin2023} should be relabeled as (II), (III) and (IV), respectively, here.
  
  In the proof of Lemma A.2, $R_k(x,p)$ is now defined to be the Riemannian gradient of $-\langle\beta_k(\xi_k)_x,\log_x(p)\rangle_g$ (instead of $\langle\beta_k(\xi_k)_p,\log_p(x)\rangle_g$) as a function of $p$, and $R_k$ is continuous on $M\times M$ by Proposition 3.3(b) of \cite{Shin2023}. Then, $\Psi_k(x,\tau)=-g_\tau\log_\tau(x)/\lVert\log_\tau(x)\rVert_g+g_\tau R_k(x,\tau)$ when $x\neq\tau$, and since $R_k(x,p)=-d(\log_x)_p^\dagger\beta_k(\xi_k)_x$ by the proof of Proposition \ref{dbgrad}, $\Psi_k(x,\tau)=g_\tau R_k(x,\tau)$ when $x=\tau$. 

  Finally, $T_k:M\times M\rightarrow \mathbb{R}$ is defined so that $T_k(x,q)$ is the Frobenius norm of the $n\times n$ Euclidean Hessian matrix of $-\langle\beta\xi_x,\log_x(q)\rangle_g$ (instead of $\langle\beta_k(\xi_k)_p,\log_p(x)\rangle_g$) as a function of $q$, and it is continuous on all of $M\times M$ by Proposition 3.3(b) of \cite{Shin2023}. This is where the $C^2$ condition was needed in the original proof of Theorem 4.2 of \cite{Shin2023}, but that condition is not necessary here.
\end{proof}

\begin{proof}[Proof of Corollary \ref{dbbdd}]
			There exists a sufficiently large neighborhood $U$ of $q_k$  which is bounded in the Riemannian metric and for which $P(X\not\in \bar{U})=0$, so condition (II) in Theorem \ref{dbclt1} holds.
		\end{proof}

Proposition \ref{dbsupp} can be derived in the same way as its counterpart in \cite{Shin2023}.

To prove Theorem \ref{dbclt2} we require Lemma B.1 from \cite{Shin2023}, which we repeat here for completeness.

\begin{lemma} \label{dblem}
			Let $M$ be an $n$-dimensional $(n\geq2)$ Hadamard manifold.
			\begin{itemize}
				\item[(a)] There exists a positive upper semi-continuous function $L:M\times M\rightarrow \mathbb{R}$ that satisfies
				\begin{align*}
				\lVert x-p\rVert_2=L(x,p)\lVert\log_p(x)\rVert_g.
				\end{align*}
				
				\item[(b)] There exists a positive upper semi-continuous function $S:M\times M\rightarrow \mathbb{R}$ that satisfies
				\begin{align*}
				\lvert D_{r'}\Psi^r(x,p;0,\xi)\rvert&\leq\frac{S(x,p)}{\lVert x-p\rVert_2},
				\end{align*}
				for all $r,r^\prime=1,\ldots,n$ whenever $x\neq p$.
			\end{itemize}
		\end{lemma}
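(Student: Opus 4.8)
This lemma is reproduced verbatim from \cite{Shin2023}, so the plan is to recall that proof. For part (a), I would define $L(x,p)=\lVert x-p\rVert_2/d(p,x)$ when $x\neq p$ and $L(p,p)=\sqrt{\lambda_{\max}(g_p^{-1})}$, where $\lambda_{\max}(\cdot)$ and $\lambda_{\min}(\cdot)$ denote the largest and smallest eigenvalues; since $d(p,x)=\lVert\log_p(x)\rVert_g$, this gives the identity $\lVert x-p\rVert_2=L(x,p)\lVert\log_p(x)\rVert_g$ immediately, and $L$ is positive everywhere. Off the diagonal $L$ is continuous, because $\log_p(x)$ is smooth on $M\times M$ (the exponential map of an Hadamard manifold is a global diffeomorphism) and its Riemannian norm is positive there. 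The only real point is upper semi-continuity at a diagonal point $(p_0,p_0)$: bounding the Riemannian length of the minimal geodesic $\gamma$ from $p$ to $x$, viewed in the chart, below by $(\min_t\sqrt{\lambda_{\min}(g_{\gamma(t)})})\lVert x-p\rVert_2$ — using that the Euclidean length of any curve joining $p$ to $x$ is at least $\lVert x-p\rVert_2$ together with $\lVert v\rVert_g^2=v^Tg v\geq\lambda_{\min}(g)\lVert v\rVert_2^2$ — gives $L(x,p)\leq\max_t\sqrt{\lambda_{\max}(g_{\gamma(t)}^{-1})}$, and as $(x,p)\to(p_0,p_0)$ the image of $\gamma$ shrinks to $\{p_0\}$ while $\lambda_{\max}(g_\cdot^{-1})$ is continuous, so $\limsup L\leq\sqrt{\lambda_{\max}(g_{p_0}^{-1})}=L(p_0,p_0)$. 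This is essentially the argument already carried out for $l$ in the proof of Lemma \ref{dblipschitz}.

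For part (b), recall that the Riemannian gradient of $p\mapsto d(p,x)$ is $-\log_p(x)/d(p,x)$ (this is classical and also follows from Proposition \ref{dbgrad}(b) with $\beta=0$), so $\Psi(x,p;0,\xi)=g_p\nabla\rho(x,p;0,\xi)=-F/d$, where $F(x,p):=g_p\log_p(x)$ is smooth on $M\times M$ and $d:=d(p,x)$. Differentiating in the chart, $D_{r'}\Psi^r(x,p;0,\xi)=-(\partial_{p^{r'}}F^r)/d+F^r(\partial_{p^{r'}}d)/d^2$. The numerator of the first term is bounded by a continuous positive $A(x,p)$ on $M\times M$ since $F$ is smooth; for the second, $\lvert F^r\rvert\leq\lVert F\rVert_2\leq\lVert g_p\rVert_{\mathrm{op}}\sqrt{\lambda_{\max}(g_p^{-1})}\,\lVert\log_p(x)\rVert_g=B(x,p)\,d$ with $B$ continuous positive (using $\lVert v\rVert_2\leq\sqrt{\lambda_{\max}(g_p^{-1})}\lVert v\rVert_g$), and — the key observation — the vector $(\partial_{p^1}d,\dots,\partial_{p^n}d)^T$ is exactly the Euclidean gradient $\Psi(x,p;0,\xi)=-F/d$, so $\lvert\partial_{p^{r'}}d\rvert\leq\lVert F\rVert_2/d\leq B(x,p)$. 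Combining, $\lvert D_{r'}\Psi^r(x,p;0,\xi)\rvert\leq(A(x,p)+B(x,p)^2)/d=(A(x,p)+B(x,p)^2)L(x,p)/\lVert x-p\rVert_2$ by part (a), so one takes $S(x,p)=(A(x,p)+B(x,p)^2)L(x,p)$, which is positive and, being a product of a continuous positive function and the upper semi-continuous positive function $L$, is itself positive and upper semi-continuous.

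I expect the only delicate points to be the upper semi-continuity of $L$ across the diagonal, handled by the geodesic-length lower bound above, and, in part (b), the recognition that every factor other than the explicit $1/d$ stays bounded as $x\to p$ — in particular $\partial_{p^{r'}}d$ is a coordinate of the Euclidean gradient $\Psi(x,p;0,\xi)$, which has bounded Euclidean norm even though the underlying Riemannian gradient has unit Riemannian norm — so that the $1/d$ singularity is precisely the one converted into $1/\lVert x-p\rVert_2$ by part (a). The remaining estimates are routine consequences of the smoothness of $(x,p)\mapsto\log_p(x)$ on $M\times M$ and the continuity of matrix eigenvalues.
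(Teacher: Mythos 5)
The paper does not reprove this lemma---it is imported verbatim from Shin (2023) ``for completeness''---so there is no in-paper proof to compare against, but your reconstruction is correct. Your argument for (a) uses the same eigenvalue/geodesic-length device that the paper itself deploys for the reciprocal quantity $l$ in the proof of Lemma~\ref{dblipschitz} (which the paper explicitly notes is modeled on the Shin (2023) proof of B.1(a)), and your argument for (b) correctly isolates the single $1/d(p,x)$ singularity by differentiating the quotient $-g_p\log_p(x)/d(p,x)$, exploiting the joint smoothness of $g_p\log_p(x)$ on $M\times M$ and the observation that $\partial_{p^{r'}}d$ is a component of the bounded Euclidean gradient $\Psi(x,p;0,\xi)$, then converting $1/d$ into $L/\lVert x-p\rVert_2$ via part (a); taking $S=(A+B^2)L$ is positive and upper semi-continuous as a product of a continuous positive function with a positive upper semi-continuous one.
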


\begin{lemma} \label{dbdiff}
		Let $M$ satisfy all the conditions stated in Theorem \ref{dbclt2} with the possible exception of (III). Then for each $k=1,\ldots,K$, the map $F_k:M\rightarrow \mathbb{R}$ defined by $p\mapsto E[\rho_k(X,p)]$ is 
  
  (a) differentiable on $Q_k$ from (I), on which its Euclidean gradient at $q$ is $E[\Psi_k(X,q)]$, and 
  
  (b) twice continuously differentiable in some open neighborhood of $q_k$, on which its Euclidean Hessian matrix at $q$ is $\Lambda_q^k$, the symmetric $n\times n$ matrix whose $(r,r')$-entry is $E[D_{r'}\Psi_k^r(X,q)]$.
	\end{lemma}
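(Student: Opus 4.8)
The plan is to differentiate under the expectation sign, once for (a) and twice for (b); the only delicate point is the non-integrable singularity of the $q$-derivatives of $\rho_k(x,\cdot)$ on the diagonal $x=q$ in part (b). For (a), fix $q\in Q_k$. The density bound in~(I) rules out atoms in $Q_k$, so $P(X=q)=0$, and for $P$-a.e.\ $X$ (namely $X\neq q$) the map $p\mapsto\rho_k(X,p)$ is smooth near $q$ with Euclidean gradient $\Psi_k(X,q)$. Working in a bounded chart around $q$, Lemma~\ref{dblipschitz} bounds the difference quotients $t^{-1}\lvert\rho_k(X,q+th)-\rho_k(X,q)\rvert$ by $\kappa\lVert h\rVert_2$ uniformly in $X$, so the bounded convergence theorem shows that every directional derivative of $F_k$ at $q$ exists and equals $\langle E[\Psi_k(X,q)],h\rangle_2$; hence the partial derivatives of $F_k$ exist on $Q_k$ and assemble into $E[\Psi_k(X,q)]$. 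Since $p\mapsto\Psi_k(x,p)$ is continuous off $x$ and $P(X=q_0)=0$ for every $q_0\in Q_k$, a further application of bounded convergence (the integrand being uniformly bounded by~(\ref{dbneed})) shows $q\mapsto E[\Psi_k(X,q)]$ is continuous on $Q_k$, and a map with continuous partial derivatives is $C^1$. This gives (a).

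For (b), write $G:=\nabla F_k=E[\Psi_k(X,\cdot)]$ and, shrinking $d_1$ if necessary (which preserves~(IIa)), fix a closed Euclidean ball $\bar B:=\{q:\lVert q-q_k\rVert_2\le d_1\}\subset U\cap Q_k$ and then $r_0>0$ with $B(q,r_0)\subset Q_k$ for all $q\in\bar B$. The first substantive step is that $\Lambda^k_{r,r'}(q):=E[D_{r'}\Psi^r_k(X,q)]$ is finite on $\bar B$. On $\{X\notin\bar U\}$ this is immediate from~(IIa) (using $\alpha>1$). On $\{X\in\bar U\}$, I would decompose $D\Psi_k(x,q)$ into the contribution of the $\beta_k=0$ loss, which is bounded by $S(x,q)/\lVert x-q\rVert_2$ on the compact set $\bar U\times\bar B$ by Lemma~\ref{dblem}(b), and the Euclidean Hessian in $q$ of $-\langle\beta_k(\xi_k)_x,\log_x(q)\rangle_g$, which has continuous entries on $M\times M$ (reflected in the continuity of the map $T_k$ in the proof of Theorem~\ref{dbclt1}) and is therefore bounded on $\bar U\times\bar B$. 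This yields $\lvert D_{r'}\Psi^r_k(x,q)\rvert\le C_1\lVert x-q\rVert_2^{-1}+C_2$ for $x\in\bar U$, $q\in\bar B$, and hence $E[\lVert X-q\rVert_2^{-1};X\in\bar U]<\infty$: outside $B(q,r_0)$ the integrand is at most $1/r_0$, while inside $B(q,r_0)\subset Q_k$ the density is bounded and $\int_{B(q,r_0)}\lVert x-q\rVert_2^{-1}\,dx=SA(n-1)\,r_0^{\,n-1}/(n-1)<\infty$ since $n\ge2$.

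The second step is the continuity of $q\mapsto\Lambda^k_{r,r'}(q)$ on $\bar B$: on $\{X\notin\bar U\}$ it follows by dominated convergence via~(IIa), and on $\{X\in\bar U\}$ from the generalized dominated convergence theorem, the dominating functions $C_1\lVert\cdot-q\rVert_2^{-1}+C_2$ themselves having convergent integrals as $q$ varies because the Riesz potential of the indicator of a ball is continuous. With these two facts I would conclude as follows. For $q\in\operatorname{int}\bar B$ and $t$ small, the segment $q+[0,t]e_{r'}$ lies in $Q_k$, so $P(X\in q+[0,t]e_{r'})=0$; for $X$ off this measure-zero segment, $\Psi^r_k(X,\cdot)$ is smooth on a neighborhood of the segment, so $\Psi^r_k(X,q+te_{r'})-\Psi^r_k(X,q)=\int_0^t D_{r'}\Psi^r_k(X,q+se_{r'})\,ds$ almost surely. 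By the finiteness just proved, Fubini applies and $G^r(q+te_{r'})-G^r(q)=\int_0^t\Lambda^k_{r,r'}(q+se_{r'})\,ds$; dividing by $t$, letting $t\to0$ and using continuity of $\Lambda^k_{r,r'}$ yields $D_{r'}G^r(q)=\Lambda^k_{r,r'}(q)$. Symmetry of $\Lambda^k_q$ holds because $D\Psi_k(x,q)$ is a Hessian for $x\neq q$ and is $0$ for $x=q$; and since $G$ is thereby $C^1$ near $q_k$ with derivative $q\mapsto\Lambda^k_q$, $F_k$ is $C^2$ on that neighborhood with Hessian $\Lambda^k_q$.

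The main obstacle is precisely this diagonal singularity in~(b): because $\sup_{q\in\bar B}\lvert D_{r'}\Psi^r_k(x,q)\rvert=\infty$ whenever $x\in\bar B$, there is no dominating function uniform in $q$ and one cannot differentiate under the expectation directly. The device that rescues the argument is to integrate the identity $\Psi^r_k(x,q+te_{r'})-\Psi^r_k(x,q)=\int_0^t D_{r'}\Psi^r_k(x,q+se_{r'})\,ds$ and apply Fubini, which reduces everything to the finiteness and continuity of $q\mapsto E[\lVert X-q\rVert_2^{-1};X\in\bar U]$---the local integrability and continuity of a Riesz potential of a locally bounded density---and this is exactly where the density hypothesis in~(I) and the requirement $n\ge2$ enter; the remaining estimates, and all of part~(a), are routine bounded or dominated convergence.
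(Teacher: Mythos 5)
Your proof of part (a) follows the same bounded-convergence strategy as the paper: both invoke Lemma~\ref{dblipschitz} and~(\ref{dbneed}) to dominate the relevant quotients and then pass to the limit, with you obtaining $C^1$ via continuous partials while the paper verifies the first-order Taylor remainder directly. Part (b), however, takes a genuinely different and cleaner route. The paper expands $\Psi^r_k(x,\tau)-\Psi^r_k(x,q)$ via the mean value theorem, which produces an $x$-dependent intermediate point on the segment and therefore forces control of $\sup_\tau\lvert D_{r'}\Psi^r_k(X,\tau)\rvert$ over directed segments; this is exactly why it needs the $L^{\alpha'}$-estimates, the spherical-coordinate integrals in~(\ref{dbshort1})--(\ref{dbshort2}) (where $n\geq3$ is consumed since it must handle $\nu=\alpha'>1$), the auxiliary maps $h^q_{r,r'}$, and a Dini's-theorem argument to get the Taylor remainder of $\nabla F_k$ to vanish uniformly in direction. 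You instead write the increment exactly as $\int_0^t D_{r'}\Psi^r_k(X,q+se_{r'})\,ds$ for a.e.\ $X$ and commute $E$ with $\int_0^t$ by Fubini, which reduces everything to pointwise (not supremal) integrability of Hessian entries --- settled by the local integrability of the Riesz kernel $\lVert x-q\rVert_2^{-1}$ against the bounded density, requiring only $n\geq2$ --- plus continuity of $q\mapsto\Lambda^k_q$ via generalized dominated convergence; the paper gets that same continuity by Vitali's theorem with uniform integrability from~(\ref{dbh}). Your approach is more elementary and works in one lower dimension, though since the lemma imports $n\geq3$ and (IIa) from Theorem~\ref{dbclt2}, both arguments more than suffice for the stated hypotheses.
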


 \begin{proof}

(a) Fix $k$ and $p\in M\cong\phi(M)$. As in the proof of Lemma \ref{dbtwo}, apply Lemma \ref{dblipschitz} with $(\beta,\xi)=(\beta_k,\xi_k)$, $V$ an open neighborhood of $p$ such that $\phi(V)$ is bounded in the Euclidean metric and $\psi=\phi|_V$ to get
		\begin{equation*}\begin{aligned}
		&\bigg\lvert\frac{\rho_k(X,\tau)-\rho_k(X,p)-\Psi_k(X,p)^T(\tau-p)}{\lVert \tau-p\rVert_2}\bigg\rvert \\
		&\leq\frac{\lvert \rho_k(X,\tau)-\rho_k(X,p)\rvert}{\lVert \tau-p\rVert_2}+\frac{\lVert\Psi_k(X,p)\rVert_2\lVert \tau-p\rVert_2}{\lVert \tau-p\rVert_2} \\
		&\leq\kappa+\sqrt\kappa+\lVert\Psi_k(p,p)\rVert_2
		\end{aligned}\end{equation*}
for all $X\in M$ and $\tau\neq p$, where we have also used the Cauchy-Schwarz inequality and (\ref{dbneed}). The expression in the first line also converges to 0 as $\tau\rightarrow p$ if $X\neq p$, which is almost surely the case if $p=q\in Q_k$ by (I), so we can use the bounded convergence theorem to get
		\begin{align*}
		&\lim_{\tau\rightarrow q}\frac{F_k(\tau)-F_k(q)-E[\Psi_k(X,q)]^T(\tau-q)}{\lVert \tau-q\rVert_2} \\
		&=E\bigg[\lim_{\tau\rightarrow q}\frac{\rho_k(X,\tau)-\rho_k(X,q)-\Psi(X,q)^T(\tau-q)}{\lVert \tau-b\rVert_2}\bigg] \\
		&=0
		\end{align*}
		for all $q\in Q_k$, proving the desired result.

(b) Fix $k$. The map $T_k:M\times M\rightarrow \mathbb{R}$ defined so that $T_k(x,q)$ is the Frobenius norm of the $n\times n$ Euclidean Hessian matrix of $-\langle\beta\xi_x,\log_x(q)\rangle_g$ as a function of $q$, is continuous on all of $M\times M$ by Proposition 3.3(b) of \cite{Shin2023}, and
			\begin{align*}
			\lvert D_r'\Psi_k^r(x,q)\rvert&\leq\frac{S(x,q)}{\lVert x-q\rVert_2}+T_k(x,q),
			\end{align*}
			for all $r,r'=1,\ldots,n$ by Lemma \ref{dblem}(b).

This next part of the proof is inspired by the section labeled (i) in the proof of Theorem 4.2 of \cite{Shin2023}, with some adjustments. For any $p\in M$ and $t_0>0$, denote $\{p+tw:t\in[0,t_0],w\in S^{n-1}\}$ by $p+[0,t_0]\times S^{n-1}$, and do similarly with open and half-open intervals. If (IIa) is true for a certain value of $d_1$, it is also true for any smaller positive value of $d_1$. Therefore, assume without loss of generality that $q_k+[0,2d_1]\times S^{n-1}$ is contained in $Q_k\cap U$. Fix a $q$ satisfying $\lVert q-q_k\rVert<d_1$; there exists a $d_q>0$ such that $q+[0,d_q]\times S^{n-1}\subset q_k+[0,d_1]\times S^{n-1}$, and therefore
\begin{equation}\label{dbnewthree}
   E\bigg[\sup_{\tau:\lVert \tau-q\rVert_2\leq d_q}\lvert D_{r'}\Psi_k^r(X,\tau)\rvert^\alpha;X\not\in \bar{U}\bigg]<\infty
   \end{equation}
for all $r,r'=1,\ldots, n$ by (IIa). Also, $q+[0,2d_q]\times S^{n-1}$ is contained in $Q_k\cap U$.

Let $j$ be a positive integer such that $\alpha':=1+1/j\in(1,\alpha)$. (\ref{dbnewthree}), Lemma \ref{dblem}, and the boundedness of the density, which we will call $f$, on $Q_k$ imply
			\begin{align*}
			B_1:&=\max_{r,r':1\leq r,r'\leq n}E\bigg[\sup_{\tau:\lVert \tau-q\rVert\leq d_q}\lvert D_{r'}\Psi_k^r(X,\tau)\rvert^{\alpha'};X\not\in \bar{U}\bigg]<\infty, \\
			B_2:&=\sup_{(x,\tau):\lVert x-q\rVert_2\geq 2d_q, x\in \bar{U},\lVert \tau-q\rVert_2\leq d_q} \lvert D_{r'}\Psi_k^r(x,\tau)\rvert<\infty, \\
			B_3:&=\sup_{(x,\tau):\lVert x-q\rVert_2\leq 2d_q,\lVert \tau-q\rVert_2\leq d_q} S(x,\tau)<\infty, \\
			B_4:&=\sup_{(x,\tau):\lVert x-q\rVert_2\leq 2d_q,\lVert \tau-q\rVert_2\leq d_q} T_k(x,\tau)<\infty, \\
			B_5:&=\sup_{x:\lVert x-q\rVert_2\leq 2d_q} f(x)<\infty,
			\end{align*}
			since upper semi-continuous functions attain their maxima on compact sets. For non-negative real numbers $a$ and $b$ and positive integer $j$, $a+b\leq(a^{1/j}+b^{1/j})^j$ by the binomial theorem, so $(a+b)^{\alpha'-1}\leq a^{\alpha'-1}+b^{\alpha'-1}$. Therefore $(a+b)^{\alpha'}=(a+b)(a+b)^{\alpha'-1}\leq(a+b)(a^{\alpha'-1}+b^{\alpha'-1})=a^{\alpha'}+ab^{\alpha'-1}+a^{\alpha'-1}b+b^{\alpha'}$; so 
			\begin{align*}
			&\bigg(\frac{S(x,\tau)}{\lVert X-\tau\rVert_2}+T_k(x,\tau)\bigg)^{\alpha'} \\
   &\leq\frac{S(x,\tau)^{\alpha'}}{\lVert X-\tau\rVert_2^{\alpha'}}+\frac{S(x,\tau)T_k(x,\tau)^{\alpha'-1}}{\lVert X-\tau\rVert_2}+\frac{S(x,\tau)^{\alpha'-1}T_k(x,\tau)}{\lVert X-\tau\rVert_2^{\alpha'-1}}+T_k(x,\tau)^{\alpha'}
			\end{align*}
			when $x\neq \tau$, which we use below. 
   
   Then for any $w\in S^{n-1}$,
			\begin{equation}\begin{aligned} \label{dblong}
			&E\bigg[\sup_{\tau:\tau\in q+[0,d_q]\times w}\lvert D_{r'}\Psi_k^r(X,\tau)\rvert^{\alpha'};X\not\in q+[0,d_q]\times w\bigg] \\
			&=E\bigg[\sup_{\tau:\tau\in q+[0,d_q]\times w}\lvert D_{r'}\Psi_k^r(X,\tau)\rvert^{\alpha'};X\not\in q+[0,d_q]\times w, X\not\in \bar{U}\bigg] \\
			&+E\bigg[\sup_{\tau:\tau\in q+[0,d_q]\times w}\lvert D_{r'}\Psi_k^r(X,\tau)\rvert^{\alpha'};X\not\in q+[0,d_q]\times w, \lVert X-q\rVert_2\geq 2d_q, X\in \bar{U}\bigg] \\
			&+E\bigg[\sup_{\tau:\tau\in q+[0,d_q]\times w}\lvert D_{r'}\Psi_k^r(X,\tau)\rvert^{\alpha'};X\not\in q+[0,d_q]\times w, \lVert X-q\rVert_2< 2d_q\bigg] \\
			&\leq E\bigg[\sup_{\tau:\lVert \tau-q\rVert\leq d_q}\lvert D_{r'}\Psi_k^r(X,\tau)\rvert^{\alpha'};X \not\in \bar{U}\bigg] \\
			&+E\bigg[\sup_{\tau:\lVert \tau-q\rVert\leq d_q}\lvert D_{r'}\Psi_k^r(X,\tau)\rvert^{\alpha'};\lVert X-q\rVert_2\geq 2d_q, X\in \bar{U}\bigg] \\
			&+E\bigg[\sup_{\tau:\tau\in q+[0,d_q]\times w}
			\bigg(\frac{S(X,\tau)}{\lVert X-\tau\rVert_2}+T_k(X,\tau)\bigg)^{\alpha'};X\not\in q+[0,d_q]\times w,  \\
   &\qquad\lVert X-q\rVert_2\leq 2d_q\bigg] \\
			&\leq B_1+B_2^{\alpha'} \\
   &+B_3^{\alpha'}E\bigg[\sup_{\tau:\tau\in q+[0,d_q]\times w}\frac{1}{\lVert X-q\rVert_2^{\alpha'}};X\not\in q+[0,d_q]\times w, \lVert X-q\rVert_2\leq 2d_q\bigg] \\ 
			&+B_3B_4^{\alpha'-1}E\bigg[\sup_{\tau:\tau\in q+[0,d_q]\times w}\frac{1}{\lVert X-\tau\rVert_2};X\not\in q+[0,d_q]\times w, \lVert X-q\rVert_2\leq 2d_q\bigg] \\
			&+B_3^{\alpha'-1}B_4E\bigg[\sup_{\tau:\tau\in q+[0,d_q]\times w}\frac{1}{\lVert X-\tau\rVert_2^{\alpha'-1}};X\not\in q+[0,d_q]\times w, \lVert X-q\rVert_2\leq 2d_q\bigg]+B_4^{\alpha'}.
			\end{aligned}\end{equation}
			Now for $\nu>0$,         
			\begin{equation}\begin{aligned} \label{dblonger}
			&E\bigg[\sup_{\tau:\tau\in q+[0,d_q]\times w}\frac{1}{\lVert X-\tau\rVert_2^\nu};X\not\in q+[0,d_q]\times w, \lVert X-q\rVert_2\leq 2d_q\bigg]  \\
			&\leq B_5\int I(x\not\in q+[0,d_q]\times w, \lVert x-q\rVert_2\leq 2d_q)\sup_{\tau:\tau\in q+[0,d_q]\times w}\frac{1}{\lVert x-\tau\rVert_2^\nu} dx \\
			&=B_5\int I(y\not\in [0,d_q]\times (1,0,\ldots,0)^T\subset\mathbb{R}^n, \lVert y\rVert_2\leq 2d_q) \\
   &\qquad\cdot\sup_{\tau:\tau\in[0,d_q]\times (1,0,\ldots,0)^T\subset\mathbb{R}^n}\frac{1}{\lVert y-q\rVert_2^\nu} dy \\
			&=B_5\bigg(\int I(y^1<0, \lVert(y_1,\ldots,y_n)^T)\rVert_2\leq 2d_q)\frac{1}{\lVert(y_1,\ldots,y_n)^T)\rVert_2^\nu}dy^1\cdots dy^n \\
			&\qquad+\int I(0\leq y^1< d_q, \lVert(y_1,\ldots,y_n)^T)\rVert_2\leq 2d_q)\frac{1}{\lVert(0,y_2,\ldots,y_n)^T)\rVert_2^\nu}dy^1\cdots dy^n \\
			&\qquad+\int I(y^1\geq d_q, \lVert(y_1,\ldots,y_n)^T)\rVert_2\leq 2d_q)\frac{1}{\lVert(y_1-d_q,y_2,\ldots,y_n)^T\rVert_2^\nu}dy^1\cdots dy^n\bigg) \\
			&\leq B_5\bigg(\int I(y^1<0, \lVert(y_1,\ldots,y_n)^T)\rVert_2\leq 2d_q)\frac{1}{\lVert(y_1,\ldots,y_n)^T)\rVert_2^\nu}dy^1\cdots dy^n \\
			&\qquad+\int I(0\leq y^1< d_q, \lVert(0,y_2,\ldots,y_n)^T)\rVert_2\leq 2d_q)\frac{1}{\lVert(0,y_2,\ldots,y_n)^T)\rVert_2^\nu}dy^1\cdots dy^n \\
			&\qquad+\int I(y^1\geq d_q, \lVert(y_1-d_q,\ldots,y_n)^T)\rVert_2\leq 2d_q) \\
   &\qquad\cdot\frac{1}{\lVert(y_1-d_q,y_2,\ldots,y_n)^T\rVert_2^\nu}dy^1\cdots dy^n\bigg) \\
			&=B_5\bigg(\int I(y^1<0, \lVert(y_1,\ldots,y_n)^T)\rVert_2\leq 2d_q)\frac{1}{\lVert(y_1,\ldots,y_n)^T)\rVert_2^\nu}dy^1\cdots dy^n \\
			&\qquad+\int I(0\leq y^1< d_q, \lVert(0,y_2,\ldots,y_n)^T)\rVert_2\leq 2d_q)\frac{1}{\lVert(0,y_2,\ldots,y_n)^T)\rVert_2^\nu}dy^1\cdots dy^n \\
			&\qquad+\int I(y^1\geq 0, \lVert(y_1,\ldots,y_n)^T)\rVert_2\leq 2d_q)\frac{1}{\lVert(y_1,y_2,\ldots,y_n)^T\rVert_2^\nu}dy^1\cdots dy^n\bigg) \\
			&=B_5\bigg(\int I(\lVert(y_1,\ldots,y_n)^T)\rVert_2\leq 2d_q)\frac{1}{\lVert(y_1,\ldots,y_n)^T)\rVert_2^\nu}dy^1\cdots dy^n \\
			&\qquad+\int I(0\leq y^1< d_q, \lVert(0,y_2,\ldots,y_n)^T)\rVert_2\leq 2d_q)\frac{1}{\lVert(0,y_2,\ldots,y_n)^T\rVert_2^\nu}dy^1\cdots dy^n\bigg), 
			\end{aligned}\end{equation}
			where the substitution of $x$ with $y=(y^1,\ldots,y^n)^T$ in the third line signifies a shift of $\mathbb{R}^n$ by $-q$ and then rotation so that $w$ aligns with $(1,0,\ldots,0)^T\in\mathbb{R}^n$, and the substitution in the second-to-last equality comes from shifting $y_1$ by $-d_q$. In the standard $n$-dimensional spherical coordinate substitution of $y=(y^1,\ldots,y^n)^T$ with $(R,\theta_1,\ldots,\theta_{n-1})^T$, $R=\lVert y\rVert_2$, $y^1=R\sin(\theta_1)\cdots\sin(\theta_{n-1})$ and $y^r=R\sin(\theta_1)\cdots\sin(\theta_{n-r})\cos(\theta_{n-r+1})$ for $r=2,\ldots,n$, giving $dy^1\cdots dy^n=R^{n-1}\sin^{n-2}(\theta_1)\cdots\sin(\theta_{n-2})dRd\theta_1\ldots d\theta_{n-1}$. Therefore,
			\begin{equation}\begin{aligned} \label{dbshort1}
			&\int I(\lVert(y_1,\ldots,y_n)^T)\rVert_2\leq 2d_q)\frac{1}{\lVert(y_1,\ldots,y_n)^T)\rVert_2^\nu}dy^1\ldots dy^n  \\
			&=\int_0^{2\pi}\int_0^\pi\ldots\int_0^\pi\int_0^{2d_q} \frac{R^{n-1}\sin^{n-2}(\theta_1)\ldots\sin(\theta_{n-2})}{R^\nu}dRd\theta_1\ldots d\theta_{n-1}  \\
			&\leq \int_0^{2\pi}\int_0^\pi\ldots\int_0^\pi\int_0^{2d_q} \frac{R^{n-1}}{R^\nu}dRd\theta_1\ldots d\theta_{n-1}  \\
   %&=\frac{(2d_q)^{n-\nu}}{n-\nu}\times \int_0^\pi \sin(\theta_1)d\theta_1\ldots\int_0^\pi\sin(\theta_{n-2})d\theta_{n-2}\times 2\pi  \\
			&=\frac{(2d_q)^{n-\nu}}{n-\nu}\times \pi^{n-2}\times 2\pi \\
			&<\infty
			\end{aligned}\end{equation}
			if $n>\nu$. We now substitute $y=(y^1,\ldots,y^n)^T$ with $(y^1,R',\theta_1',\ldots,\theta_{n-2}')^T$, defined by $R'=\lVert (0,y^2,\ldots,y^n)^T\rVert_2$, $y^2=R'\sin(\theta_1')\ldots\sin(\theta_{n-2}')$ and $y^r=R'\sin(\theta_1')\ldots\sin(\theta_{n-1-r}')\cos(\theta_{n-r}')$ for $r=3,\ldots,n$; this substitution just expresses $(y_2,\ldots,y_n)$ in standard $(n-1)$-dimensional spherical coordinates. Then, $dy^1\ldots dy^n=(R')^{n-2}\sin^{n-3}(\theta_1')\ldots\sin(\theta_{n-3}')dy^1dR'd\theta_1'\ldots d\theta_{n-2}'$, and therefore,
			\begin{equation}\begin{aligned} \label{dbshort2}
			&\int I(0\leq y^1< d_q, \lVert(0,y_2,\ldots,y_n)^T\rVert_2\leq 2d_q)\frac{1}{\lVert(0,y_2,\ldots,y_n)^T)\rVert_2^\nu}dy^1\ldots dy^n  \\
			&=\int_0^{2\pi}\int_0^\pi\ldots\int_0^\pi\int_0^{2d_q}\int_0^{d_q} \frac{(R')^{n-2}\sin^{n-3}(\theta_1')\ldots\sin(\theta_{n-3}')}{(R')^\nu}dy^1dR'd\theta_1'\ldots d\theta_{n-2}'  \\
			&\leq\int_0^{2\pi}\int_0^\pi\ldots\int_0^\pi\int_0^{2d_q}\int_0^{d_q} \frac{(R')^{n-2}}{(R')^\nu}dy^1dR'd\theta_1'\ldots d\theta_{n-2}'  \\
      %&=d_q\times \frac{(2d_q)^{n-1-\nu}}{n-1-\nu}\times \int_0^\pi \sin(\theta_1')d\theta_1'\ldots\int_0^\pi\sin(\theta_{n-3}')d\theta_{n-3}'\times 2\pi  \\
			&= d_q\times\frac{(2d_q)^{n-1-\nu}}{n-1-\nu}\times\pi^{n-3}\times 2\pi \\
			&<\infty
			\end{aligned}\end{equation}
			if $n>\nu+1$. Therefore, (\ref{dbshort1}) and (\ref{dbshort2}) hold for each $\nu\in\{\alpha'-1,1,\alpha'\}$ when $n\geq 3$. Since (\ref{dblong}), (\ref{dblonger}), (\ref{dbshort1}) and (\ref{dbshort2}) do not depend on $w$, 
   \begin{equation}\label{dbh}
   \sup_{w\in S^{n-1}}E\bigg[\sup_{\tau:\tau\in q+[0,d_q]\times w}\lvert D_{r'}\Psi_k^r(X,\tau)\rvert^{\alpha'};X\not\in q+[0,d_q]\times w\bigg]<\infty
   \end{equation}
   for all $r,r'=1,\ldots,n$.

   Define $h^q_{r,r'}$ to be a map on $(0,\infty)\times S^{n-1}\times M$ defined by 
   \begin{equation*}
   h^q_{r,r'}(\eta,w,x)=\sup_{\tau:\tau\in q+[0,\eta]\times w}\lvert D_{r'}\Psi_k^r(x,\tau)-D_{r'}\Psi_k^r(x,q)\rvert I(x\not\in q+[0,\eta]\times w),
   \end{equation*}
   where $q+[0,\eta]\times w:=\{q+sw:0\leq s\leq\eta\}$.
			
			In this paragraph, we demonstrate the continuity of $E[h^q_{r,r'}(\eta,w,X)]$ as a function of $w$ for sufficiently small $\eta$, and subsequently the uniform convergence of the map $w\mapsto E[h^q_{r,r'}(\eta,w,X)]$ on $S^{n-1}$ to 0 as $\eta\rightarrow 0$. Fix $\eta\in(0,d_q)$, $w_0\in S^{n-1}$ and $x\in M\backslash\{q+[0,\eta]\times w_0\}$, and consider some closed neighborhood $W\subset S^{n-1}$ of $w_0$ in $S^{n-1}$ for which $x\not\in q+[0,\eta]\times W:=\{q+sw:s\in[0,\eta],w\in W\}$. Defining $H(x,\tau):=\lvert D_{r'}\Psi_k^r(x,\tau)-D_{r'}\Psi_k^r(x,q)\rvert$, $H(x,\cdot)$ is a continuous function on $q+[0,\eta]\times W$, so it is uniformly continuous on this set. Therefore, for any $\epsilon>0$, there exists some $\delta>0$ for which $\lVert \tau-\tau'\rVert_2<\delta$ and $\tau,\tau'\in q+[0,\eta]\times W$ implies $\lvert H(x,\tau)-H(x,\tau')\rvert<\epsilon$. Let $W^*\subset S^{n-1}$ be the open geodesic ball centered on $w_0$ of radius $\delta/\eta$, where the radius is measured in terms of the standard geodesic distance on $S^{n-1}$. Then, if $w\in W^*\cap W$ and $s\in[0,\eta]$, $\lVert sw-sw_0\rVert_2< s\delta/\eta\leq \delta$, so $H(x,q+sw)\in(H(x,q+sw_0)-\epsilon,H(x,q+sw_0)+\epsilon)$. Therefore, for all $s\in[0,\eta]$, $H(x,q+sw)<\sup_{\tau:q+[0,\eta]\times w_0}H(x,\tau)+\epsilon$, so 
			\begin{equation}\begin{aligned} \label{dbupper}
			h^q_{r,r'}(\eta,w,x)\leq h^q_{r,r'}(\eta,w_0,x)+\epsilon.
			\end{aligned}\end{equation}
			By the compactness of $q+[0,\eta]\times w_0$, there exists some $s_0\in[0,\eta]$ for which $H(x,q+s_0w_0)=\sup_{\tau:q+[0,\eta]\times w_0}H(x,\tau)=h^q_{r,r'}(\eta,w_0,x)$, so we can also say that
			\begin{equation}\begin{aligned} \label{dblower}
			h^q_{r,r'}(\eta,w,x)\geq H(x,q+s_0w)>H(x,q+s_0w_0)-\epsilon=h^q_{r,r'}(\eta,w_0,x)-\epsilon
			\end{aligned}\end{equation}
			if $w\in W^*\cap W$; (\ref{dbupper}) and (\ref{dblower}) imply that $h^q_{r,r'}(\eta,w,x)\rightarrow h^q_{r,r'}(\eta,w_0,x)$ as $w\rightarrow w_0$. We defined $q$ in such a way that $Q_k$ is a neighborhood of $q$, so  $\eta\in(0,d_q)$ can be chosen such that $\{z:\lVert z-q\rVert_2<\eta\}$ is contained in $Q_k$, where absolute continuity holds, in which case $P(X\in q+[0,\eta]\times W)=0$ and $h^q_{r,r'}(\eta,w,X)\rightarrow h^q_{r,r'}(\eta,w_0,X)$ as $w\rightarrow w_0$ almost surely. In addition, (\ref{dbh}) provides the $L^{\alpha'}$-boundedness of $\{h(\eta,w,X)\}_{w\in S^{n-1}}$ since $h(\eta,w,X)\leq \sup_{\tau:\tau\in q+[0,d_q]\times w}\lvert D_{r'}\Psi_k^r(X,\tau)\rvert I(X\not\in q+[0,d_q]\times w)$, implying uniform integrability. These two results imply that $E[h^q_{r,r'}(\eta,w,X)]\rightarrow E[h^q_{r,r'}(\eta,w_0,X)]$ as $w\rightarrow w_0$ by Vitali's convergence theorem, and so $E[h^q_{r,r'}(\eta,\cdot,X)]$ is continuous on $S^{n-1}$, as desired. In addition, $S^{n-1}$ is compact and for a fixed $w_0$, $E[h^q_{r,r'}(\eta,w_0,X)]$ decreases monotonically as $\eta\rightarrow 0$, and $h^q_{r,r'}(\eta,w_0,x)\rightarrow 0$ as $\eta\rightarrow 0$ if $x\neq q$, implying $\lim_{\eta\rightarrow 0}E[h^q_{r,r'}(\eta,w_0,X)]=0$ by (I), (\ref{dbh}) and the dominated convergence theorem. Therefore, by Dini's theorem, $E[h^q_{r,r'}(\eta,\cdot,X)]$ converges uniformly on $S^{n-1}$ to 0 as $\eta\rightarrow 0$.
			
			\sloppy For each $r,r'=1,\ldots,n$, $E[D_{r'}\Psi_k^r(X,q)]$ is finite by (\ref{dbh}) since $D_{r'}\Psi_k^r(x,q)\leq \sup_{\tau:\tau\in q+[0,d_q]\times w}\lvert D_{r'}\Psi_k^r(x,\tau)\rvert I(x\not\in q+[0,d_q]\times w)$ if $x\not\in q+[0,d_q]\times w$ and $P(X\not\in q+[0,d_q]\times w)=1$ by (I). Additionally, since the $n\times n$ matrix $D\Psi_k(x,q)$ is symmetric, $\Lambda_q^k$ is symmetric.
			
			For any $\tau\neq q$ such that $\lVert \tau-q\rVert_2<d_q$, the image of $c_{q,\tau}:[0,1]\rightarrow M$, the straight line segment in Euclidean space for which $c_{q,\tau}(0)=q$ and $c_{q,\tau}(1)=\tau$, is contained in $Q_k\subset M$. By the mean value theorem, for each $r=1,\ldots,n$, there exists some (stochastic) $t_r\in[0,1]$ for which, 
			\begin{align*}
			\Psi_k^r(x,\tau)&=\Psi_k^r(x,q)+(D_1\Psi_k^r(x,c_{q,\tau}(t_r)),\ldots,D_n\Psi_k^r(x,c_{q,\tau}(t_r)))c_{q,\tau}'(t_r) \\
			&=(D_1\Psi_k^r(x,c_{q,\tau}(t_r)),\ldots,D_n\Psi_k^r(x,c_{q,\tau}(t_r)))(\tau-q)
			\end{align*}
			if $x$ is not in the image of $c_{q,\tau}$, which is a set of probability 0 by (I).
			
			Therefore
			\begin{equation}\begin{aligned} \label{dbhess}
			&\lVert \lambda_k(\tau)-\lambda_k(q)-\Lambda_q^k(\tau-q) \rVert_2 \\
			&=\lVert E[\Psi_k(X,\tau)-\Psi_k(X,q);X\not\in q+[0,1]\times(\tau-q)]  \\
			&\qquad-E[D\Psi_k(X,q)(\tau-q);X\not\in q+[0,1]\times(\tau-q)]\rVert_2  \\
			&=\big\lVert E\big[[D_{r'}\Psi_k^r(X,c_{q,\tau}(t_r))-D_{r'}\Psi_k^r(X,q)]_{r,r'=1,\ldots,n}(\tau-q); \\
   &\qquad X\not\in q+[0,1]\times(\tau-q)\big]\big\rVert_2  \\
   &= E\big[\big\lVert[D_{r'}\Psi_k^r(X,c_{q,\tau}(t_r))-D_{r'}\Psi_k^r(X,q)]_{r,r'=1,\ldots,n}(\tau-q)\big\rVert_2; \\
   &\qquad X\not\in q+[0,1]\times(\tau-q)\big]  \\
			&\leq E\Bigg[\bigg\lVert\bigg[h^q_{r,r'}\bigg(\lVert \tau-q\rVert_2,\frac{\tau-q}{\lVert \tau-q\rVert_2},X\bigg)\bigg]_{r,r'=1,\ldots,n}\bigg\rVert_F\lVert \tau-q\Vert_2\Bigg]  \\
			&\leq E\bigg[\sum_{r'=1}^n\sum_{r=1}^n h^q_{r,r'}\bigg(\lVert \tau-q\rVert_2,\frac{\tau-q}{\lVert \tau-q\rVert_2},X\bigg)\lVert \tau-q\Vert_2\bigg]
			\end{aligned}\end{equation}
			Here, the last inequality follows because the $L_2$ norm is dominated by the $L_1$ norm, and the second-to-last because for any matrix $A$ and vector $v$ that are conformable, $\lVert Av\rVert_2\leq \lVert A\rVert_F\lVert v\rVert_2$ thanks to the Cauchy-Schwarz inequality and the fact that each component of $Av$ is the dot product of $v$ and a row in $A$.
%(b) Note that even though Theorem 4.2 of \cite{Shin2023} has stronger requirements than the current theorem (such as $\lambda_k(q_k)=0$, $E[\lVert\Psi_k(X,q_k)\rVert_2^2]<\infty$ and $E[\sup_{q:\lVert q-q_k\rVert_2\leq d_1}\lvert D_{r'}\Psi_k^r(X,q)\rvert^2;X\not\in \bar{U}]<\infty$ instead of $E[\sup_{q:\lVert q-q_k\rVert_2\leq d_1}\lvert D_{r'}\Psi_k^r(X,q)\rvert;X\not\in \bar{U}]<\infty$), the section labeled (i) in the proof of Theorem 4.2 of \cite{Shin2023} can be brought over here almost verbatim with a handful of minor differences because the weaker conditions here suffice for that specific section; the nonsingularity of $\Lambda^k$ is not needed here either. We list the minor differences here. Firstly, $T_k:M\times M\rightarrow \mathbb{R}$ is defined so that $T_k(x,q)$ is the Frobenius norm of the $n\times n$ Euclidean Hessian matrix of $-\langle\beta\xi_x,\log_x(q)\rangle_g$ (instead of $\langle\beta_k(\xi_k)_p,\log_p(x)\rangle_g$ as in \cite{Shin2023}) as a function of $q$, and it is continuous on all of $M\times M$ by Proposition 3.3(b) of \cite{Shin2023}. This is where the $C^2$ condition was needed in the original proof of Theorem 4.2 of \cite{Shin2023}, but that condition is not necessary here. Secondly, the conditions need to be renumbered: (III) and (V) in that proof should be relabeled as (II) and (IV) here. 
(\ref{dbhess}) and the uniform convergence to 0 of each $E[h^q_{r,r'}(\eta,\cdot,X)]$ as a function on $S^{n-1}$ as $\eta\rightarrow 0$ imply that 
     \begin{align*}
			\frac{\lVert \lambda_k(\tau)-\lambda_k(q)-\Lambda_q^k(\tau-q) \rVert_2}{\lVert \tau-q\rVert_2}\leq\sum_{r'=1}^n\sum_{r=1}^n E\bigg[h^q_{r,r'}\bigg(\lVert \tau-q\rVert_2,\frac{\tau-q}{\lVert \tau-q\rVert_2},X\bigg)\bigg]\longrightarrow 0
			\end{align*}
			as $\tau\rightarrow q$. Hence, $\lambda_k$ is differentiable at any $q$ in $q_k+[0,d_1)\times S^{n-1}$ since $q$ was defined so that $\lVert q-q_k\rVert_2<d_1$, at which its derivative is the symmetric $\Lambda_q^k$.

   To prove the continuity of $q\rightarrow \Lambda_q^k$ in $q_k+[0,d_1)\times S^{n-1}$, fix $r$ and $r'$. Observe that since (I) implies that $P(X\not\in q+[0,d_q]\times w)=1$ for any $w\in S^{n-1}$, (\ref{dbh}) implies that $\{\lvert D_{r'}\Psi_k^r(X,\tau)\rvert^{\alpha'}\}_{\tau:\lVert \tau-q\rVert_2\leq d_q}$ is $L^{\alpha'}$-bounded and hence uniformly integrable. The continuity of $\tau\rightarrow D_{r'}\Psi_k^r(x,\tau)$ for any $x\in M$ whenever $\tau\neq x$ implies that $D_{r'}\Psi_k^r(X,\tau)\rightarrow D_{r'}\Psi_k^r(X,q)$ as $\tau\rightarrow q$ if $X\neq q$, which is almost surely the case by (I). Therefore we can apply Vitali's theorem to show that $E[D_{r'}\Psi_k^r(X,\tau)]\rightarrow E[D_{r'}\Psi_k^r(X,q)]$ as $\tau\rightarrow q$ for each $r,r'=1,\ldots, n$, and hence $q\rightarrow \Lambda_q^k$ is continuous in $q_k+[0,d_1)\times S^{n-1}$.
 \end{proof}

\begin{proof}[Proof of Theorem~\ref{dbclt2}] 
Consider the $NK$-dimensional Riemannian manifold $M^K$ with the induced product metric, the $M^K$-valued random element $\boldsymbol{X}:=(X,...,X)$ and its i.i.d. samples $\boldsymbol{X}_1:=(X_1,...,X_1),\ldots,\boldsymbol{X}_N:=(X_N,...,X_N)$, and the loss function $\boldsymbol{\rho}:M^K\rightarrow \mathbb{R}$ defined by $\boldsymbol{\rho}(\boldsymbol{x},\boldsymbol{p})=\sum_{k=1}^K \rho_k(\pi_k(\boldsymbol{x}),\pi_k(\boldsymbol{p}))$, where $\pi_k:M^k\rightarrow M$ is the projection map defined by $\pi_k(p_1,\ldots,p_K)=p_k$, $k=1,\ldots,K$. Our goal is to apply Theorem 2.11 of \cite{Eltzner2019} to this setting, which requires showing the validity of Assumptions 2.2, 2.3, 2.4 and 2.6 of that paper.

Clearly the minimizer of the population loss function $\boldsymbol{F}:M^K\rightarrow \mathbb{R}$ defined by $\boldsymbol{F}(\boldsymbol{p})=E[\boldsymbol{\rho}(\boldsymbol{X},\boldsymbol{p})]=\sum_{k=1}^K F_k(\pi_k(\boldsymbol{X}),\pi_k(\boldsymbol{p}))$ is unique and equal to $\boldsymbol{q}:=(q_1,\ldots,q_K)$ by (I) and the measurable selections of the minimizer sets of the sample loss function $\boldsymbol{p}\mapsto \sum_{i=1}^N \boldsymbol{\rho}(\boldsymbol{X}_i,\boldsymbol{p})$ are precisely the ordered $K$-tuples of measurable selections from the sample $(\beta_k,\xi_k)$-quantiles, $k=1,\ldots,K$, defined by $X_1,\ldots,X_N$. Thus by the consistency of sample quantiles demonstrated in Theorem \ref{dbslln}, Assumption 2.2 of \cite{Eltzner2019} is satisfied. The inverse exponential map at $\boldsymbol{q}\in M^K$, denoted by $\boldsymbol{\log}_{\boldsymbol{q}}:M^K\rightarrow T_{\boldsymbol{q}}M^K\cong T_{q_1}M\times \ldots\times T_{q_K}M\cong \mathbb{R}^{Kn}$, satisfies $\boldsymbol{\log}_{\boldsymbol{q}}(\boldsymbol{p})=(\log_{\pi_1(\boldsymbol{q})}(\pi_1(\boldsymbol{p}))^T,\ldots,\log_{\pi_k(\boldsymbol{q})}(\pi_k(\boldsymbol{p}))^T)^T$ and is smooth on all of $M^K$, and so Assumption 2.3 of \cite{Eltzner2019}, which requires a local manifold structure around $\boldsymbol{q}$, is valid. We will specifically consider $\boldsymbol{\log}_{\boldsymbol{q}}$ restricted to $\boldsymbol{V}:=V_1\times\ldots\times V_K$, where $V_k$ is the open geodesic ball in $M$ of some finite radius centered at $q_k$ for each $k=1,\ldots,L$.

The map $\boldsymbol{p}\mapsto \boldsymbol{\rho}(\boldsymbol{x},\boldsymbol{p})$ is differentiable everywhere except at $\boldsymbol{p}=\boldsymbol{x}$, and thus since $P(\boldsymbol{X}=\boldsymbol{q})=0$ thanks to (I), Assumption 2.4(i) of \cite{Eltzner2019}, which requires almost sure differentiability at $\boldsymbol{q}$, is satisfied. For each $k=1,\ldots,L$, let $\psi=\log_{q_k}$ and $V$ be $V_k$ in Lemma \ref{dblipschitz}, and call the resulting positive number guaranteed to exist by that lemma $\kappa_k$. Then for all $\boldsymbol{x}\in M^K$ and $\boldsymbol{p}_1,\boldsymbol{p}_2\in \boldsymbol{V}:=V_1\times\ldots\times V_K$,
\begin{equation*}\begin{aligned}
    \lvert\boldsymbol{\rho}(\boldsymbol{x},\boldsymbol{p}_1)-\boldsymbol{\rho}(\boldsymbol{x},\boldsymbol{p}_2)\rvert&\leq\sum_{k=1}^K\lvert\rho(\pi_k(\boldsymbol{x}),\pi_k(\boldsymbol{p}_1);\beta_k,\xi_k)-\rho(\pi_k(\boldsymbol{x}),\pi_k(\boldsymbol{p}_2);\beta_k,\xi_k)\rvert \\
    &\leq \sum_{k=1}^K\kappa_k\lVert\log_{q_k}(\pi_k(\boldsymbol{p}_1))-\log_{q_k}(\pi_k(\boldsymbol{p}_2))\rVert_2 \\
    &\leq K \big(\max_{k=1,\ldots,K}\kappa_K\big)\lVert\boldsymbol{\log}_{\boldsymbol{q}}(\boldsymbol{p}_1)-\boldsymbol{\log}_{\boldsymbol{q}}(\boldsymbol{p}_2)\rVert_2,
\end{aligned}\end{equation*}
so Assumption 2.4(ii) of \cite{Eltzner2019}, which requires almost surely local Lipschitz continuity at $\boldsymbol{q}$, is also valid with $\dot\rho=K \max_{k=1,\ldots,K}\kappa_K$.

Let $\boldsymbol{\exp}_{\boldsymbol{q}}$ be the exponential map at $\boldsymbol{q}$ of $M^K$, the inverse of $\boldsymbol{\log}_{\boldsymbol{q}}$. Because smooth charts preserve twice continuous differentiability, Lemma \ref{dbdiff}(b) shows that $\boldsymbol{F}\circ\boldsymbol{\exp}_{\boldsymbol{q}}$ defined on $\boldsymbol{\log}_{\boldsymbol{q}}(\boldsymbol{V})$ is twice continuously differentiable in some neighborhood of $0$ in $\boldsymbol{\log}_{\boldsymbol{q}}(\boldsymbol{V})$. The first derivative of $\boldsymbol{F}\circ\boldsymbol{\exp}_{\boldsymbol{q}}$ at 0 vanishes because Theorem \ref{dbqgrad}, (\ref{dbequiv}), Lemma \ref{dbdiff}(a) and (I) imply that the first derivative of $\boldsymbol{F}$ at $\boldsymbol{q}$ is 0.

Label the components of the exponential map so that  $\boldsymbol{\exp}_{\boldsymbol{q}}=(\boldsymbol{\exp}_{\boldsymbol{q}}^1,\ldots,\boldsymbol{\exp}_{\boldsymbol{q}}^{Kn}):\boldsymbol{\log}_{\boldsymbol{q}}(\boldsymbol{V})\rightarrow \boldsymbol{V}\cong \phi(V_1)\times\ldots\times\phi(V_K)\subset\mathbb{R}^{Kn}$. By the chain rule,
		\begin{align*}
		\frac{\partial (\boldsymbol{F}\circ\boldsymbol{\exp}_{\boldsymbol{q}})}{\partial y^r}(0)=\sum_{j=1}^{Kn}\frac{\partial \boldsymbol{F}}{\partial y^j}(\boldsymbol{q})\frac{\partial \boldsymbol{\exp}_{\boldsymbol{q}}^j}{\partial y^r}(0)
		\end{align*}
		and
		\begin{align*} %\label{final}
		&\frac{\partial^2 (\boldsymbol{F}\circ\boldsymbol{\exp}_{\boldsymbol{q}})}{\partial y^{r'}\partial y^r}(0) \nonumber\\
		&=\sum_{j=1}^{Kn}\bigg[\bigg(\sum_{j'=1}^{Kn}\frac{\partial^2 \boldsymbol{F}}{\partial y^{j'}\partial y^j}(\boldsymbol{q})\frac{\partial \boldsymbol{\exp}_{\boldsymbol{q}}^{j'}}{\partial y^r}(0)\bigg)\frac{\partial \boldsymbol{\exp}_{\boldsymbol{q}}^j}{\partial y^{r'}}(0)+\frac{\partial \boldsymbol{F}}{\partial y^j}(\boldsymbol{q})\frac{\partial^2 \boldsymbol{\exp}_{\boldsymbol{q}}^j}{\partial y^{r'}\partial y^r}(0)\bigg].
		\end{align*}
		By the previous paragraph, each $(\partial \boldsymbol{F}/\partial y^j)(\boldsymbol{q})=0$; the Euclidean Hessian matrix of $\boldsymbol{F}\circ\boldsymbol{\exp}_{\boldsymbol{q}}$ at 0 is, by Lemma \ref{dbdiff}(b), the symmetric $Kn\times Kn$ matrix $\boldsymbol{D}^T \Lambda \boldsymbol{D}$, where $\Lambda$ is as defined in the statement of the theorem and $\boldsymbol{D}$ is the $Kn\times Kn$ Jacobian matrix at 0 of $\boldsymbol{\exp}_{\boldsymbol{q}}$.

Thus there exists a special orthogonal $Kn\times Kn$ matrix $\boldsymbol{R}$ so that 
\begin{equation}\label{dbdiag}
\boldsymbol{R}(\boldsymbol{D}^T \Lambda \boldsymbol{D})\boldsymbol{R}^T=\text{diag}(L_1,\ldots,L_{Kn}),
\end{equation}
a diagonal matrix of the eigenvalues of $\boldsymbol{D}^T \Lambda \boldsymbol{D}$. Therefore the multivariate version of Taylor's theorem implies that 
\begin{align*}
    \boldsymbol{F}\circ\boldsymbol{\exp}_{\boldsymbol{q}}(\boldsymbol{v})&=\boldsymbol{F}\circ\boldsymbol{\exp}_{\boldsymbol{q}}(0)+0+\frac{1}{2}\boldsymbol{v}^T\boldsymbol{D}^T\Lambda \boldsymbol{D}\boldsymbol{v}+o(\lVert\boldsymbol{v}\rVert_2^2) \\
    &=\boldsymbol{F}\circ\boldsymbol{\exp}_{\boldsymbol{q}}(0)+\frac{1}{2}\boldsymbol{v}^T\boldsymbol{R}^T\boldsymbol{R}\boldsymbol{D}^T\Lambda \boldsymbol{D}\boldsymbol{R}^T\boldsymbol{R}\boldsymbol{v}+o(\lVert\boldsymbol{v}\rVert_2^2) \\
    &=\boldsymbol{F}\circ\boldsymbol{\exp}_{\boldsymbol{q}}(0)+\sum_{j=1}^{Kn}\frac{L_j}{2}(\boldsymbol{R}\boldsymbol{v})_j^2+o(\lVert\boldsymbol{v}\rVert_2^2),
\end{align*}
where $(\boldsymbol{R}\boldsymbol{v})_j$ is the $j$th entry of the $Kn$-vector $\boldsymbol{R}\boldsymbol{v}$, $j=1,\ldots,Kn$. Thus letting $r=2$ in Assumption 2.6 of \cite{Eltzner2019}, which requires sufficient smoothness at $0$ of $\boldsymbol{F}\circ\boldsymbol{\exp}_{\boldsymbol{q}}$, all of the assumptions necessary for the smeary central limit theorem of that paper are satisfied.

Thus defining 
\begin{align*}
    \hat{\boldsymbol{v}}_N':=\begin{cases} \boldsymbol{\log}_{\boldsymbol{q}}((\hat q_{1,N}^T,\ldots,\hat q_{K,N}^T)^T)&\text{ if $\hat q_{1,N}\in V_1,\ldots,\hat q_{K,N}\in V_K$,} \\ 0&\text{ otherwise}\end{cases}
\end{align*}
in $\boldsymbol{\log}_{\boldsymbol{q}}(\boldsymbol{V})$, we can apply Theorem 2.11 of \cite{Eltzner2019}, keeping in mind our correction in Remark \ref{dbmistake}, to say that
\begin{align*}
    \sqrt{N}\boldsymbol{R}\hat{\boldsymbol{v}}_N'\rightsquigarrow N\bigg(0,\frac{1}{4}\text{diag}\bigg(\frac{L_1}{2},\ldots,\frac{L_{Kn}}{2}\bigg)^{-1}\boldsymbol{R}\boldsymbol{C}\boldsymbol{R}^T\text{diag}\bigg(\frac{L_1}{2},\ldots,\frac{L_{Kn}}{2}\bigg)^{-1}\bigg)
\end{align*}
\sloppy as $N\rightarrow\infty$, where $\boldsymbol{C}$ is the covariance matrix of the Euclidean gradient of $\boldsymbol{v}\mapsto\boldsymbol{\rho}(\boldsymbol{X},\boldsymbol{\exp}_{\boldsymbol{q}}(\boldsymbol{v}))$ at $\boldsymbol{v}=0$. This gradient is $\boldsymbol{D}^T(\Psi_1(X,q_1)^T,\ldots,\Psi_K(X,q_K)^T)^T$ by the chain rule since the gradient of $\boldsymbol{p}\mapsto\boldsymbol{\rho}(\boldsymbol{x},\boldsymbol{p})$ is $(\Psi_1(X,q_1)^T,\ldots,\Psi_K(X,q_K)^T)^T$; therefore $\boldsymbol{C}=\boldsymbol{D}^T\Sigma\boldsymbol{D}$. This and (\ref{dbdiag}) mean that the covariance matrix of the above limiting distribution is $\boldsymbol{R}\boldsymbol{D}^{-1}\Lambda^{-1}(\boldsymbol{D}^T)^{-1}\boldsymbol{R}^T\boldsymbol{R}\boldsymbol{D}^T\Sigma\boldsymbol{D}\boldsymbol{R}^T\boldsymbol{R}\boldsymbol{D}^{-1}\Lambda^{-1}(\boldsymbol{D}^T)^{-1}\boldsymbol{R}^T=\boldsymbol{R}\boldsymbol{D}^{-1}\Lambda^{-1}\Sigma\Lambda^{-1}(\boldsymbol{D}^T)^{-1}\boldsymbol{R}^T$, recalling that charts are diffeomorphism and hence $D$ is invertible. Premultiplication by $\boldsymbol{R}^T$ gives $\sqrt{N}\hat{\boldsymbol{v}}_N'\rightsquigarrow N(0,\boldsymbol{D}^{-1}\Lambda^{-1}\Sigma\Lambda^{-1}(\boldsymbol{D}^T)^{-1})$ as $N\rightarrow\infty$, and Lemma 3.2 of \cite{Eltzner2019} shows that the 0-smeariness of $\{\hat{\boldsymbol{v}}_N'\}$ with limiting distribution $N(0,\boldsymbol{D}^{-1}\Lambda^{-1}\Sigma\Lambda^{-1}(\boldsymbol{D}^T)^{-1})$ implies the 0-smeariness of $\{\boldsymbol{\exp}_{\boldsymbol{q}}(\boldsymbol{\hat{\boldsymbol{v}}_N'})-(q_1^T,\ldots,q_K^T)^T\}$ with limiting distribution $N(0,\boldsymbol{D}\boldsymbol{D}^{-1}\Lambda^{-1}\Sigma\Lambda^{-1}(\boldsymbol{D}^T)^{-1}\boldsymbol{D}^T)$ because the Jacobian matrix of $\boldsymbol{\exp}_{\boldsymbol{q}}$ at 0 is $\boldsymbol{D}$; that is, $\sqrt{N}[\boldsymbol{\exp}_{\boldsymbol{q}}(\hat{\boldsymbol{v}}_N')-(q_1^T,\ldots,q_K^T)^T]\rightsquigarrow N\big(0,\Lambda^{-1}\Sigma\Lambda^{-1})$ as $N\rightarrow\infty$.

Finally, since $P(\boldsymbol{\exp}_{\boldsymbol{q}}(\hat{\boldsymbol{v}}_N')=(\hat q_{1,N}^T,\ldots,\hat q_{K,N}^T)^T)\geq P(\hat q_{1,N}\in V_1,\ldots,\hat q_{K,N}\in V_K)\rightarrow 1$ as $N\rightarrow\infty$ by the consistency of sample quantiles shown in Theorem \ref{dbslln}, $(\hat q_{1,N}^T,\ldots,\hat q_{K,N}^T)^T=\boldsymbol{\exp}_{\boldsymbol{q}}(\hat{\boldsymbol{q}}_N')+o_p(1)$ and so $\sqrt{N}[(\hat q_{1,N}^T,\ldots,\hat q_{K,N}^T)^T-(q_1^T,\ldots,q_K^T)^T]\rightsquigarrow N(0,\Lambda^{-1}\Sigma\Lambda^{-1})$ as $N\rightarrow\infty$, completing the proof.
\end{proof}

 \subsection{Proofs for the results in Section~\ref{dbrobustness}}\label{dbproof_robustness}

We require the following lemma.

 \begin{lemma}\label{dbinfbdp}
    The breakdown point of $\hat q_N(\beta,\xi)$ is at least $(\lceil(1-\beta)N/2\rceil)/N$.
\end{lemma}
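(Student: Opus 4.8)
The plan is to show that contaminating at most $\lceil(1-\beta)N/2\rceil-1$ of the $N$ sample points cannot push any resulting sample $(\beta,\xi)$-quantile outside a fixed bounded ball determined by the original sample; the space is a Hadamard manifold, as in Theorem~\ref{dbbdp}. Fix the original sample $\mathbf{X}=(X_1,\dots,X_N)$, set $o:=X_1$ and $R:=\max_{1\le i\le N}d(o,X_i)$, and write $m:=\lceil(1-\beta)N/2\rceil$. Let $\mathbf{Y}=(Y_1,\dots,Y_N)$ be any sample with $j:=\#\{i:Y_i\neq X_i\}\le m-1$, and let $I:=\{i:Y_i=X_i\}$, so $|I|=N-j$. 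The goal is to bound $d(o,q')$, uniformly over all such $\mathbf{Y}$, for every $q'$ in the sample $(\beta,\xi)$-quantile set of $\mathbf{Y}$ by a constant $R^{*}$ depending only on $\mathbf{X},\beta,\xi,N$. Since the contamination-free case ($j=0$) places $\hat T(\mathbf{X})$ in the same ball for any sample $(\beta,\xi)$-quantile function $\hat T$, this yields $d(\hat T(\mathbf{X}),\hat T(\mathbf{Y}))\le 2R^{*}$, so the supremum in the definition of the breakdown point is finite for every $j\le m-1$, giving the claimed lower bound $m/N$.

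The key step uses two estimates. For an \emph{uncontaminated} index $i\in I$ (so $Y_i=X_i$), the crude bounds $(1-\beta)d(q',X_i)\le\rho(X_i,q';\beta,\xi)$ and $\rho(X_i,o;\beta,\xi)\le(1+\beta)d(o,X_i)$ (both from $-1\le\cos\angle_x(p,\xi)\le1$) together with $d(q',X_i)\ge d(o,q')-R$ give $\rho(X_i,q';\beta,\xi)-\rho(X_i,o;\beta,\xi)\ge(1-\beta)d(o,q')-2R$. For a \emph{contaminated} index $i\notin I$, I invoke the uniform Lipschitz bound $\lvert\rho(x,p_1;\beta,\xi)-\rho(x,p_2;\beta,\xi)\rvert\le(1+\beta)d(p_1,p_2)$ — precisely the inequality extracted in the proof of Theorem~\ref{dbqgrad} from the triangle inequality, Cauchy--Schwarz and the Cartan--Hadamard theorem — so that $\rho(Y_i,o;\beta,\xi)-\rho(Y_i,q';\beta,\xi)\le(1+\beta)d(o,q')$ regardless of where $Y_i$ is placed. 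This second bound is the heart of the matter: it is what prevents a far-flung contaminated observation from dragging the quantile off to infinity, and it reflects a genuine feature of the data-based loss (the parameter-based loss has unbounded gradient, cf.\ Remark~\ref{dbunbdd}).

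Combining via minimality of $q'$, namely $\sum_i\rho(Y_i,q';\beta,\xi)\le\sum_i\rho(Y_i,o;\beta,\xi)$, hence $\sum_{i\in I}[\rho(X_i,q';\beta,\xi)-\rho(X_i,o;\beta,\xi)]\le\sum_{i\notin I}[\rho(Y_i,o;\beta,\xi)-\rho(Y_i,q';\beta,\xi)]$, and substituting the two estimates gives $(N-j)\bigl[(1-\beta)d(o,q')-2R\bigr]\le j(1+\beta)d(o,q')$, i.e.
\begin{equation*}
\bigl[N(1-\beta)-2j\bigr]\,d(o,q')\le 2(N-j)R\le 2NR.
\end{equation*}
Because $j\le m-1<(1-\beta)N/2$ — which holds whether or not $(1-\beta)N/2\in\mathbb{Z}$, since by definition of the ceiling $m-1<(1-\beta)N/2\le m$ — the bracket $N(1-\beta)-2j$ is strictly positive, so $d(o,q')\le 2NR/\bigl(N(1-\beta)-2j\bigr)\le 2NR/\bigl(N(1-\beta)-2(m-1)\bigr)=:R^{*}<\infty$, independent of $\mathbf{Y}$. (That the sample $(\beta,\xi)$-quantile set is nonempty for every sample, so $\hat T$ is everywhere defined, follows from Proposition~\ref{dbbasic}(b), the empirical measure trivially satisfying Assumption~\ref{dbassump}.) This completes the argument sketched above.

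I expect the only real obstacle to be formulating and justifying the uniform $(1+\beta)$-Lipschitz bound for $\rho$ in its second argument, and in particular making sure the constant genuinely does not depend on the (arbitrary) location of the contaminated observation; once that bound is in hand, the rest is the short counting computation above and a parity check at $j=m-1$.
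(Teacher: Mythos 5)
Your proposal is correct, and it takes a genuinely different — and cleaner — route than the paper's own proof. Both proofs hinge on exactly the same two observations: (i) the uniform Lipschitz bound $\lvert\rho(x,p_1;\beta,\xi)-\rho(x,p_2;\beta,\xi)\rvert\le(1+\beta)d(p_1,p_2)$ in the second argument, uniformly over the location of $x$, is what controls the contaminated observations (this is precisely display (\ref{dbcorrupt}) in the paper); and (ii) the crude bounds $(1-\beta)d(p,x)\le\rho(x,p;\beta,\xi)\le(1+\beta)d(p,x)$ are what make the uncontaminated observations accumulate at rate $(1-\beta)d(\cdot,\cdot)$. Beyond that, you diverge. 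The paper's proof takes the original sample quantile $\hat q_\xi$ as reference point, defines $R_\xi$ from the original loss values, introduces a case split on whether $d(\hat q_\xi,\hat q'_\xi)>2R_\xi$, and in the nontrivial case argues via the intersection point $V_i$ of the geodesic from $X_i$ to $\hat q'_\xi$ with a ball around $\hat q_\xi$ to derive $\rho(X_i,\hat q'_\xi)\ge\rho(X_i,\hat q_\xi)+(1-\beta)(d(\hat q_\xi,\hat q'_\xi)-2R_\xi)$. You instead take a data point $o=X_1$ as reference and use the crude $(1\pm\beta)$ bounds plus a single triangle inequality to get $\rho(X_i,q')-\rho(X_i,o)\ge(1-\beta)d(o,q')-2R$, which needs no case split and no auxiliary geodesic construction. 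Your counting then gives $\bigl[N(1-\beta)-2j\bigr]d(o,q')\le2(N-j)R\le 2NR$, and the positivity of the bracket for $j\le\lceil(1-\beta)N/2\rceil-1$ closes the argument. The paper's version produces a bound on $d(\hat q_\xi,\hat q'_\xi)$ directly (display (\ref{dbupper})), which is reused verbatim in the proof of Theorem~\ref{dbbdp2}, whereas your bound is on $d(o,q')$; this difference is immaterial for the lemma itself, but if you wanted to feed your bound into the dispersion result you would first convert it by a triangle inequality through $o$. For the lemma as stated, your proof is correct, shorter and more transparent.
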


\begin{proof}
    Letting $N':=\lceil(1-\beta)N/2\rceil-1$, then 
    \begin{equation}\label{dbint}
    \alpha:=(1-\beta)N/2-N'\in(0,1].
    \end{equation}
    We will corrupt $N'$ of the data points $(X_1,\ldots,X_N)$ and let $\hat q_\xi$ be a sample $(\beta,\xi)$-quantile of this data set. Without loss of generality, let the first $N'$ elements be corrupted and denote a sample $(\beta,\xi)$-quantile of the corrupted set $(Y_1,\ldots,Y_{N'},X_{N'+1},\ldots,X_N)$ by $\hat q'_\xi$. We can do this without loss of generality because sample quantile sets do not depend on the order of the data points and we will treat all sample $(\beta,\xi)$-quantile functions in this proof. Since
    \begin{equation*}
        \langle \xi_{Y_i},\log_{Y_i}(\hat q'_\xi)\rangle-\langle \xi_{Y_i},\log_{Y_i}(\hat q_\xi)\rangle\leq d(\hat q_\xi,\hat q'_\xi)
    \end{equation*}
    by the Cauchy-Schwarz inequality and the Cartan-Hadamard theorem,
    \begin{equation}\begin{aligned}\label{dbcorrupt}
        \rho(Y_i,\hat q'_\xi;\beta,\xi)&=d(\hat q'_\xi,Y_i)-\beta\langle\xi_{Y_i},\log_{Y_i}(\hat q'_\xi) \rangle \\
        &\geq d(\hat q_\xi,Y_i)-d(\hat q_\xi,\hat q'_\xi)-\beta d(\hat q_\xi,\hat q'_\xi)-\beta\langle \xi_{Y_i},\log_{Y_i}(\hat q_\xi)\rangle\\
        &=\rho(Y_i,\hat q_\xi;\beta,\xi)-(1+\beta)d(\hat q_\xi,\hat q'_\xi)
    \end{aligned}\end{equation}
    by the triangle inequality. Let $l_\xi:=\max_{i=1,\ldots,N}\rho(X_i,\hat q_\xi;\beta,\xi)$, $R_\xi:=l_\xi/(1-\beta)$ and $B(\hat q_\xi,r)$ be the closed geodesic ball of radius $r$ centered at $\hat q_\xi$ for any $r>0$. Now suppose 
    \begin{equation} \label{dbsuppose1}
    d(\hat q_\xi,\hat q'_\xi)>2R_\xi.
    \end{equation}
    Then $X_i\in B(\hat q_\xi,R_\xi)\subset B(\hat q_\xi,2R_\xi)$ and $\hat q'_\xi\not\in B(\hat q_\xi,2R_\xi)$, so denoting the intersection between $B(\hat q_\xi,2R_\xi)$ and the geodesic segment connecting $X_i$ and $\hat q'_\xi$ by $V_i$, $d(V_i,X_i)\geq R$ and therefore
    \begin{equation}\begin{aligned}\label{dbnoncorrupt}
        \rho(X_i,\hat q_\xi;\beta,\xi)&=\frac{d(\hat q'_\xi,X_i)}{d(V_i,X_i)}(d(V_i,X_i)-\beta\langle\xi_{X_i},\log_{X_i}(V_i)\rangle) \\
        &\geq\frac{d(\hat q'_\xi,X_i)}{d(V_i,X_i)}(1-\beta)d(V_i,X_i) \\
        &=(1-\beta)d(V_i,X_i)+\bigg(\frac{d(\hat q'_\xi,X_i)}{d(V_i,X_i)}-1\bigg)(1-\beta)d(V_i,X_i) \\
        &\geq(1-\beta)R_\xi+(1-\beta)d(\hat q'_\xi,V_i) \\
        &\geq l_\xi+(1-\beta)\inf_{Z\in B(\hat q_\xi,2R_\xi)}d(\hat q'_\xi,Z) \\
        &=\rho(X_i,\hat q_\xi;\beta,\xi)+(1-\beta)(d(\hat q_\xi,\hat q'_\xi)-2R_\xi)
    \end{aligned}\end{equation}
    Therefore by (\ref{dbint}), (\ref{dbcorrupt}) and (\ref{dbnoncorrupt}),
    \begin{equation*}\begin{aligned}
        &\sum_{i=1}^{N'}\rho(Y_i,\hat q'_\xi;\beta,\xi)+\sum_{i=N'+1}^N\rho(X_i,\hat q'_\xi;\beta,\xi) \\
        &\geq \sum_{i=1}^{N'}\rho(Y_i,\hat q_\xi;\beta,\xi)+\sum_{i=N'+1}^N\rho(X_i,\hat q_\xi;\beta,\xi)-N'(1+\beta)d(\hat q_\xi,\hat q'_\xi)+(N-N')(1-\beta)(d(\hat q_\xi,\hat q'_\xi)-2R)\\
        &=\sum_{i=1}^{N'}\rho(Y_i,\hat q_\xi;\beta,\xi)+\sum_{i=N'+1}^N\rho(X_i,\hat q_\xi;\beta,\xi)-\bigg[\frac{(1-\beta)N}{2}-\alpha\bigg](1+\beta)d(\hat q_\xi,\hat q'_\xi) \\
        &\qquad+\bigg[\frac{(1+\beta)N}{2}+\alpha\bigg](1-\beta)(d(\hat q_\xi,\hat q'_\xi)-2R_\xi)\\
        &=\sum_{i=1}^{N'}\rho(Y_i,\hat q_\xi;\beta,\xi)+\sum_{i=N'+1}^N\rho(X_i,q;\beta,\xi)+2\alpha d(\hat q_\xi,\hat q'_\xi)-[(1-\beta^2)N+2\alpha(1-\beta)]R_\xi. \\
    \end{aligned}\end{equation*}
    By the definition of $\hat q'_\xi$, this implies $d(\hat q_\xi,\hat q'_\xi)\leq[(1-\beta^2)N/(2\alpha)+(1-\beta)]R_\xi$. Alternatively, if (\ref{dbsuppose1}) is not true then $d(\hat q_\xi,\hat q'_\xi)\leq 2R_\xi$, so regardless of the precise values of $y_1,\ldots,y_N$, 
    \begin{equation}\label{dbupper}
        d(\hat q_\xi,\hat q'_\xi) \leq R_\xi\max\{2, (1-\beta^2)N/(2\alpha)+(1-\beta)\}.
    \end{equation}
    Therefore the breakdown point is at least $(N'+1)/N$.
    \end{proof}
    
    \begin{proof}[Proof of Theorem~\ref{dbbdp}]
   (a) and (b) As in the proof of Lemma \ref{dbinfbdp}, $\hat q_\xi$ is a sample $(\beta,\xi)$-quantile of the data set $(X_1,\ldots,X_N)$. Suppose $N''\leq N$ points are corrupted; specifically, we will let $Z_1(t)=\ldots=Z_{N''}(t)=\exp_{\hat q_\xi}(t\beta\xi_{\hat q_\xi})$ for $t>0$. Then denote an element of the $(\beta,\xi)$-quantile set of $\{Z_1(t),\ldots,Z_{N''}(t),X_{N''+1},\ldots,X_N\}$ by $\hat q''_\xi(t)$. Then the triangle inequality implies 
    \begin{equation*}
    t-d(\hat q_\xi,X_i)=d(\hat q_\xi,Z_1(t))-d(\hat q_\xi,X_i)\leq d(X_i,Z_1(t))\leq d(\hat q_\xi,Z_1(t))+d(\hat q_\xi,X_i)=t+d(\hat q_\xi,X_i),
    \end{equation*}
    implying $\lim_{t\rightarrow\infty}d(X_i,Z_1(t))/t=1$ and $\lim_{t\rightarrow\infty}\log_{X_i}(Z_1(t))/t=\xi_{X_i}$ by Proposition 3.3(a) of \cite{Shin2023}. Hence
    \begin{equation}\begin{aligned}\label{dbbdpfirst}
        &\frac{1}{t}\bigg(\sum_{i=1}^{N''}\rho(Z_i(t),Z_1(t);\beta,\xi)+\sum_{i=N''+1}^{N}\rho(X_i,Z_1(t);\beta,\xi)\bigg) \\
        &=\sum_{i=N''+1}^{N}\bigg(\frac{d(X_i,Z_1(t))}{t}-\bigg\langle\beta\xi_{X_i},\frac{\log_{X_i}(Z_1(t))}{t}\bigg\rangle\bigg) \\
        &\rightarrow \sum_{i=N''+1}^{N} (1-\langle\beta\xi_{X_i},\xi_{X_i}\rangle) \\
        &=(N-N'')(1-\beta) %\\
        %&\leq \frac{(1-\beta^2)\beta N}{2}
    \end{aligned}\end{equation}
    as $t\rightarrow\infty$.
    
    Now assuming that the breakdown point is greater than $N''/N$, $C:=\sup_{t>0}d(\hat q_\xi,\hat q''_\xi(t))<\infty$. Then
    \begin{equation*}
        t-C\leq d(\hat q_\xi,Z_1(t))-d(\hat q_\xi,\hat q''_\xi(t))\leq d(Z_1(t),\hat q''_\xi(t))\leq d(\hat q_\xi,Z_1(t))+d(\hat q_\xi,\hat q''_\xi(t))\leq t+C
    \end{equation*}
    implies $\lim_{t\rightarrow\infty}d(Z_1(t),\hat q''_\xi(t))/t=1$, 
    \begin{equation*}
        0\leq\lvert\langle\beta\xi_{Z_1(t)},\log_{Z_1(t)}(\hat q_\xi)-\log_{Z_1(t)}(\hat q''_\xi(t))\rangle\rvert\leq \beta d(\hat q_\xi,\hat q''_\xi(t))\leq \beta C,
    \end{equation*}
    which follows from the Cauchy-Schwarz inequality and the Cartan-Hadamard theorem, implies $\lim_{t\rightarrow\infty}\langle\beta\xi_{Z_1(t)},\log_{Z_1(t)}(\hat q_\xi)-\log_{Z_1(t)}(\hat q''_\xi(t))\rangle/t=0$, and for each $i$,
    \begin{equation*}
        0\leq \rho(X_i,\hat q''_\xi(t);\beta,\xi)\leq (1+\beta)d(X_i,\hat q''_\xi(t))\leq (1+\beta)(d(X_i,\hat q_\xi)+C)
    \end{equation*}    
    implies $\lim_{t\rightarrow\infty}\rho(X_i,\hat q''_\xi(t);\beta,\xi)/t=0$. Thus
    \begin{equation}\begin{aligned}\label{dbbdpsecond}
        &\frac{1}{t}\bigg(\sum_{i=1}^{N''}\rho(Z_i(t),\hat q''_\xi(t);\beta,\xi)+\sum_{i=N''+1}^{N}\rho(X_i,\hat q''_\xi(t);\beta,\xi)\bigg) \\
        &=N''\bigg(\frac{d(Z_1(t),\hat q''_\xi(t))}{t}-\frac{\langle\beta\xi_{Z_1(t)},\log_{Z_1(t)}(\hat q_\xi)\rangle}{t} \\
        &\qquad+\frac{\langle\beta\xi_{Z_1(t)},\log_{Z_1(t)}(\hat q_\xi)-\log_{Z_1(t)}(\hat q''_\xi(t))}{t}\bigg)+\sum_{i=N''+1}^{N}\frac{\rho(X_i,\hat q''_\xi(t);\beta,\xi)}{t} \\
        &=N''\bigg(\frac{d(Z_1(t),\hat q''_\xi(t))}{t}-\frac{\langle\beta\xi_{Z_1(t)},-t\xi_{Z_1(t)}\rangle}{t} \\
        &\qquad+\frac{\langle\beta\xi_{Z_1(t)},\log_{Z_1(t)}(\hat q_\xi)-\log_{Z_1(t)}(\hat q''_\xi(t))}{t}\bigg)+\sum_{i=N''+1}^{N}\frac{\rho(X_i,\hat q''_\xi(t);\beta,\xi)}{t} \\
        &\rightarrow N''(1-\langle\beta\xi_{Z_1(t)},-\xi_{Z_1(t)}\rangle) \\
        &=N''(1+\beta) %\\
        %&\geq\frac{(1-\beta^2)\beta N}{2}.
    \end{aligned}\end{equation}
as $t\rightarrow\infty$. 

If $(1-\beta)N/2\not\in\mathbb{Z}$, let $N''=\lceil(1-\beta)N/2\rceil$. On the other hand, if $(1-\beta)N/2\in\mathbb{Z}$, let $N''=(1-\beta)N/2+1$. In either case $N''>(1-\beta)N/2$, so $(N-N'')(1-\beta)<(1-\beta^2)N/2$ and $N''(1+\beta)>(1-\beta^2)N/2$, and thus (\ref{dbbdpfirst}) and (\ref{dbbdpsecond}) imply that for all sufficiently large $t$, 
\begin{align*}
    &\sum_{i=1}^{N''}\rho(Z_i(t),Z_1(t);\beta,\xi)+\sum_{i=N''+1}^{N}\rho(X_i,Z_1(t);\beta,\xi) \\
    &< \sum_{i=1}^{N''}\rho(Z_i(t),\hat q''_\xi(t);\beta,\xi)+\sum_{i=N''+1}^{N}\rho(X_i,\hat q''_\xi(t);\beta,\xi),
\end{align*}
contradicting the definition of $\hat q''_\xi(t)$. Thus the assumption that the breakdown point is greater than $N''/N$ is wrong, proving the desired result in light of Lemma \ref{dbinfbdp}.

(c) If $\beta\neq 0$ and $N$ is such that $(1-\beta)N/2\in\mathbb{Z}$, let $M=\mathbb{R}$. Traditionally, quantiles in $\mathbb{R}$ are indexed by some $\tau\in(0,1)$. There are two elements in $\partial \mathbb{R}$, one, which we will call $\xi_1$, corresponding to indices greater than $1/2$ and the other, which we will call $\xi_2$, to indices less than $1/2$.

    A classic quantile function maps a sample to the smallest element of the sample quantile set; such an element exists because quantile sets are compact by Proposition \ref{dbbasic}(b). Noting that $N-(1-\beta)N/2=(1+\beta)N/2$ is an integer, if $\xi=\xi_1$, the chosen sample quantile is also the $((1+\beta)N/2)$th smallest element of the sample since the sample quantile set in this case would be the closed interval between the $((1+\beta)N/2)$th and $((1+\beta)N/2+1)$th smallest elements. Thus if $(1-\beta)N/2$ or fewer elements are corrupted, the chosen sample quantile cannot be greater than the largest of the uncorrupted elements or smaller than the $(\beta N)$th smallest, and thus it is contained in $[\min\{X_1,\ldots,X_N\},\max\{X_1,\ldots,X_N\}]$ for all $(X_1,\dots,X_N)\in M^N$. Therefore the breakdown point at any sample must be $(\lceil(1-\beta)N/2\rceil+1)/N$ in light of Theorem \ref{dbbdp}(b). On the other hand, if $\xi=\xi_2$, the chosen sample quantile is the $((1-\beta)N/2)$th smallest element of the sample so sending any $(1-\beta)N/2$ points in the sample to $-\infty$ also sends the chosen sample quantile to $-\infty$. Thus the breakdown point at any sample must be $(1-\beta)N/2\in\mathbb{Z}$. It can similarly be argued that if we instead choose the largest element of the sample quantile set, the breakdown point is $(1-\beta)N/2\in\mathbb{Z}$ if $\xi=\xi_1$ and $(\lceil(1-\beta)N/2\rceil+1)/N$ if $\xi=\xi_2$.

    When $\beta=0$, use the sample median function that chooses the sample median $m$ of smallest magnitude $\lvert m\rvert$, and in the event of two such sample median, chooses the negative one. Suppose we corrupt $N/2$ or fewer of the elements. There are 3 cases to consider: (i) all of the uncorrupted elements are non-negative, (ii) all of the uncorrupted elements are non-positive, and (iii) some of the uncorrupted elements are negative and some are positive. The median set of a sample is the closed interval between the $(N/2)$th and $(N/2+1)$th smallest elements, so in case (i), the chosen sample median cannot be greater than the largest of the uncorrupted elements or less than 0, in case (ii), it cannot be greater than 0 or less than the smallest of the uncorrupted elements, and in case (iii), it cannot be greater than the largest of the uncorrupted elements or less than the smallest. Therefore in every case, it is contained in $[\min\{0,X_1,\ldots,X_N\},\max\{0,X_1,\ldots,X_N\}]$ and the breakdown point is $(\lceil N/2\rceil+1)/N$ for all $(X_1,\ldots,X_N)\in M^N$. On the other hand, the sample median function that chooses the smallest sample median as in the previous paragraph is translation equivariant, and hence the breakdown point at any sample must be $\lfloor (N+1)/2\rfloor/N=(\lceil N/2\rceil)/N$ by Theorem 2.1 in \cite{Lopuhaa1991}.
\end{proof}

\begin{proof}[Proof of Theorem \ref{dbbdp2}]
Fix $\beta\in(0,1]$ and some $y\in M$. Consider the data set $(X_1,\ldots,X_N)$. Letting $r_1=(1/N)\sum_{i=1}^N(1+\beta)d(X_i,y)$ and $r_2=\max_{i=1,\ldots,N}d(X_i,y)$, for any point $y^*$ satisfying $d(y,y*)>r:=r_1/(1-\beta)+r_2$,
\begin{align*}
    \frac{1}{N}\sum_{i=1}^N\rho(X_i,y^*;\beta,\xi)&\geq \frac{1}{N}\sum_{i=1}^N(1-\beta)d(y^*,X_i) \\
    &\geq \frac{1}{N}\sum_{i=1}^N(1-\beta)(d(y^*,y)-d(X_i,y)) \\
    &>r_1;
\end{align*}
therefore $y^*$ cannot be a sample $(\beta,\xi)$-quantile for any $\xi\in\partial M$; that is, all sample $(\beta,\xi)$-quantiles are contained in the closed geodesic ball of radius $r$ centered at $y$. 

Let $N'$, $\hat q_\xi$, $R_\xi$, $\hat q'_\xi$ and $\alpha$ be as in the proof of Lemma \ref{dbinfbdp} and corrupt the first $N'$ elements of the data set. Then for any $\xi\in\partial M$, 
    \begin{equation}\begin{aligned} \label{dbr}
    R:=\frac{(1+\beta)(r+r_2)}{1-\beta}&\geq\max_{i=1,\ldots,N}\frac{(1+\beta)(d(\hat q_\xi,X_1)+d(X_i,X_1))}{1-\beta} \\
    &\geq \max_{i=1,\ldots,N}\frac{(1+\beta)d(\hat q_\xi,X_i)}{1-\beta} \\
    &\geq R_\xi.
    \end{aligned}\end{equation}
    Thanks to (\ref{dbupper}) in the proof of Lemma \ref{dbinfbdp} and (\ref{dbr}),
    \begin{align*}
    &\lvert d(\hat q_\xi,\hat q'_\xi)\rvert\leq 2R\max\bigg\{2, \frac{(1-\beta^2)N}{2\alpha}+(1-\beta)\bigg\}.
    \end{align*}
    Crucially, $R,r_1,r_2$ and $r$ are independent of $\xi$ and the specific corruption of the $N'$ data points. Thus by the triangle inequality and the Cartan-Hadamard theorem, and denoting the original sample $(X_1,\ldots,X_N)$ by $\boldsymbol{X}_N$ and the corrupted sample by $\boldsymbol{X}_N'$,
    \begin{align*}
        &\lVert\log_{\hat m_N(\boldsymbol{X}_N)}(\hat q_{k,N}(\boldsymbol{X}_N'))-\log_{\hat m_N(\boldsymbol{X}_N)}(\hat p_{k,N}(\boldsymbol{X}_N'))\rVert \\
        &\leq d(\hat q_{k,N}(\boldsymbol{X}_N'),\hat p_{k,N}(\boldsymbol{X}_N')) \\
        &\leq d(\hat q_{k,N}(\boldsymbol{X}_N'),\hat q_{k,N}(\boldsymbol{X}_N))+d(\hat q_{k,N}(\boldsymbol{X}_N),y)+d(y,\hat p_{k,N}(\boldsymbol{X}_N))+d(\hat p_{k,N}(\boldsymbol{X}_N),\hat p_{k,N}(\boldsymbol{X}_N')) \\
        &\leq 4R\max\bigg\{2, \frac{(1-\beta^2)N}{2\alpha}+(1-\beta)\bigg\}+2r.
    \end{align*}
     for each $k=1,\ldots,K$, and this upper bound is finite. Therefore since $\hat \delta_{1,N}^\beta$ and $\hat \delta_{2,N}^\beta$ are finite calculated at the original sample, the breakdown points are everywhere at least $(\lceil(1-\beta)N/2\rceil)/N$.
\end{proof}

\subsection{Proof of Theorem~\ref{dbext}}\label{dbproof_ext}

\begin{proof}
(a) Fixing $p,x\in M$, $\alpha_{x,p}(t):=\langle\xi_x,\log_x(\gamma_p^\xi(t))\rangle$ idenfities the projection of $\log_x(\gamma_p^\xi(t))$ onto the image $\log_x(\gamma_x^\xi(\mathbb{R}))$; that is, $\log_x(\gamma_x^\xi(\alpha(t)))$ is the point in this image which minimizes the distance to $\log_x(\gamma_p^\xi(t))$. Therefore for $t\geq 0$,
\begin{equation}\begin{aligned} \label{dbmany}
0\leq d(\gamma_p^\xi(t),x)^2-\alpha(t)^2\leq\lVert\log_x(\gamma_p^\xi(t))-\log_x(\gamma_x^\xi(t))\rVert^2\leq d(\gamma_p^\xi(t),\gamma_x^\xi(t))^2\leq B^2,
\end{aligned}\end{equation}
for some constant $B$; the third inequality results from the Cartan-Hadamard theorem and the fourth from the fact that $\gamma_p^\xi$ and $\gamma_x^\xi$ are asymptotic. Because $d(\gamma_p^\xi(t),x)\geq \lvert d(p,x)-d(p,\gamma_p^\xi(t))\rvert\geq t-d(p,x)\rightarrow\infty$ as $t\rightarrow\infty$, (\ref{dbmany}) implies that 
\begin{equation}\begin{aligned} \label{dbalpha}
\langle\xi_x,\log_x(\gamma_p^\xi(t))\rangle\rightarrow\infty
\end{aligned}\end{equation}
too; that is, $\alpha_{X,p}(t)$ converges to $\infty$ (almost) surely. Assumption 2.1 and Proposition \ref{dbbasic}(a) imply that $\alpha_{X,p}(t)=\rho(X,\gamma_p^\xi(t);1,\xi)-\rho(X,\gamma_p^\xi(t);0,\xi)$ is integrable, so fixing $\epsilon>0$, 
\begin{align*}
    &\lim\inf_{t\rightarrow\infty}E[\alpha_{X,p}(t)] \\
&=\lim\inf_{t\rightarrow\infty}(E[\alpha_{X,p}(t);\alpha_{X,p}(t)\leq\epsilon]+E[\alpha_{X,p}(t);\alpha_{X,p}(t)>\epsilon]) \\
&\geq \epsilon\lim_{t\rightarrow\infty}P(\alpha_{X,p}(t)>\epsilon) \\
&=\epsilon;
\end{align*}
in the third line, the limit is 1 since $\alpha_{X,p}(t)\rightarrow \infty$ in probability. Since the choice of $\epsilon$ was arbitrary,
\begin{equation}\begin{aligned} \label{dbfirstlim}
    \lim_{t\rightarrow\infty} E[\alpha_{X,p}(t)]=\infty.
\end{aligned}\end{equation}

By (\ref{dbmany}), binomial expansion, and (\ref{dbalpha}), for $t\geq 0$.
\begin{align*}
    \rho(x,p;1,\xi)\rangle\leq& \alpha_{x,p}(t)\bigg(1+\frac{B^2}{\alpha_{x,p}(t)^2}\bigg)^{1/2}-\alpha_{x,p}(t) \\
    =&-\alpha_{x,p}a(t)+\alpha_{x,p}(t)\bigg(1+\frac{B^2}{2\alpha_{x,p}(t)^2}-\frac{B^4}{8\alpha_{x,p}(t)^4}+\ldots\bigg) \\
    =&\frac{B^2}{2\alpha_{x,p}(t)}-\frac{B^4}{8\alpha_{x,p}(t)^3}+\ldots \\
    \rightarrow& 0
\end{align*}
as $t\rightarrow\infty$; that is, $\rho(X,p;1,\xi)$ converges to 0 (almost) surely. Then fixing $\epsilon>0$, 
\begin{align*}
&\lim\sup_{t\rightarrow\infty}G^{1,\xi}(\gamma_p^\xi(t)) \\
&=\lim\sup_{t\rightarrow\infty} (E[\rho(X,\gamma_p^\xi(t);1,\xi);\rho(X,\gamma_p^\xi(t);1,\xi)\leq\epsilon]+E[\rho(X,\gamma_p^\xi(t);1,\xi);\rho(X,\gamma_p^\xi(t);1,\xi)>\epsilon]) \\
&\leq \epsilon\lim_{t\rightarrow\infty}P(\rho(X,\gamma_p^\xi(t);1,\xi)\leq\epsilon) +\lim_{t\rightarrow\infty}E[\rho(X,\gamma_p^\xi(t);1,\xi);\rho(X,\gamma_p^\xi(t);1,\xi)>\epsilon]\\
&=\epsilon;
\end{align*}
in the third line, the first limit is 1 since $\rho(X,p;1,\xi)\rightarrow 0$ in probability, and the second is 0 by the dominated convergence theorem since $\rho(X,p;1,\xi)\rightarrow 0$ almost surely. Since the choice of $\epsilon$ was arbitrary,
\begin{equation}\begin{aligned} \label{dbsecondlim}
    \lim_{t\rightarrow\infty}G^{1,\xi}(\gamma_p^\xi(t))=0.
\end{aligned}\end{equation}

If $\beta=1$, the assumption about the support guarantees that $G^{\beta,\xi}(p)>0$ for all $p\in M$, and so (\ref{dbsecondlim}) implies that $G^{\beta,\xi}$ does not achieve a minimum. On the other hand, if $\beta>1$, (\ref{dbfirstlim}) and (\ref{dbsecondlim}) imply $G^{\beta,\xi}(\gamma_p^\xi(t))=G^{1,\xi}(\gamma_p^\xi(t))-(\beta-1)E[\alpha_{X,p}(t)]\rightarrow -\infty$ as $t\rightarrow\infty$, again impying that $G^{\beta,\xi}$ does not achieve a minimum. This proves one direction; the other follows immediately from Proposition \ref{dbbasic}(b). 

(b) Suppose the contrary. Then there is some subsequence $q_{m'}$ of $\{q_m\}$ which is bounded, and therefore there exists a further subsequence $\{q_{m''}\}$ of $\{q_{m'}\}$ which converges in $M$. So we can assume without loss of generality that $q_m$ converges to some $q_\infty\in M$. The map on $[0,\infty)\times M$ defined by $(\beta,p)\mapsto G^{\beta,\xi}(p)$ is continuous because 
\begin{align*}
    &\lvert G^{\beta,\xi}(p)-G^{\beta',\xi}(p')\rvert \\
    \leq&\lvert E[d(p,X)-d(p',X)]\rvert+\lvert E[\langle \xi_X,\beta\log_X(p)-\beta'\log_X(p)\rangle]\rvert \\
    \leq& E[d(p,p')]+\beta' \lvert E[\langle \xi_X,\log_X(p)-\log_X(p')\rangle]\rvert+\lvert\beta-\beta'\rvert E[\langle \xi_X,\log_X(p)\rangle] \\
    \leq& d(p,p')+\beta'd(p,p')+\lvert\beta-\beta'\rvert E[d(p,X)] \\
    \rightarrow& 0
\end{align*}
as $(\beta',p')\rightarrow (\beta,p)$; the third inequality is a consequence of the Cauchy-Schwarz inequality and the Cartan-Hadamard theorem. Therefore since $G^{\beta_{m},\xi}(q_{m})\leq G^{\beta_{m},\xi}(p)$ for all $p\in M$, letting $m\rightarrow \infty$ shows that $G^{1,\xi}(q_\infty)\leq G^{1,\xi}(p)$ for all $p\in M$, contradicting (a).

(c) Fix $\epsilon>0$. By (\ref{dbsecondlim}) in the proof of (a), there exists some $p\in M$ for which $0<G^{1,\xi}(p)<\epsilon/2$. There also exists some integer $m_0$ for which $m\geq m_0$ implies $\lvert G^{\beta_m,\xi}(p)-G^{1,\xi}(p)\rvert=(1-\beta_m)E[\langle \xi_X,\log_X(p)\rangle]<\epsilon/2$ since $\beta_m\rightarrow 1$. Therefore $0\leq G^{\beta_m,\xi}(q_m)\leq G^{\beta_m,\xi}(p)<\epsilon$ for all $m\geq m_0$, and $E[d(q_m,X)-\beta_m\langle \xi_X,\log_X(q_m)\rangle]\rightarrow 0$. Since the integrand is non-negative, this is equivalent to $d(q_m,X)-\beta_m\langle \xi_X,\log_X(q_m)\rangle\rightarrow 0$ in mean, and therefore in probability. By (b), $1/d(q_m,X)\rightarrow 0$ (almost) surely, and therefore in probability. Multiplying these two together gives $1-\beta_m\langle\xi_X,\log_X(q_m)/d(q_m,X)\rangle\xrightarrow{p} 0$. Rearranging and multiplying by $1/\beta_m\rightarrow 1$ gives 
\begin{align*}
\langle\xi_X,\log_X(q_m)/d(q_m,X)\rangle\xrightarrow{p} 1.
\end{align*}
Since the left-hand side is bounded above by 1 by the Cauchy-Schwarz inequality, this convergence also holds in mean by Vitali's convergence theorem. Finally, $\lVert \log_X(q_m)/d(q_m,X)-\xi_X\rVert^2=2-2\langle\xi_X,\log_X(q_m)/d(q_m,X)\rangle\xrightarrow{L^1} 0$, from which the desired conclusion follows.
        \end{proof}

 \subsection{Proofs for the results in Section~\ref{dbhyp}}\label{dbproof_hyp}

  \begin{proof}[Proof of Proposition~\ref{dbgrad}]
     (a) Fix a geodesic $\gamma$ that satisfies $\gamma(0)=p$. By the smoothness of the map $(p,x)\mapsto (x,\log_x(p)$ on $M\times M$, the map $p\mapsto\langle\xi_x,\log_x(p)\rangle$ is smooth on all of $M$, so the inner product of its gradient at $p$ and $\gamma'(0)$ is $\lim_{t\rightarrow 0}-\beta\langle \xi_x,\log_x(\gamma(t))-\log_x(\gamma(0))\rangle/t$, the absolute value of which is less than or equal to $\lim\sup_{t\rightarrow 0}\beta d(\gamma(0),\gamma(t))/t=\beta\lVert\gamma'(0)\rVert$
     by the Cauchy-Schwarz inequality and Cartan-Hadamard theorem. Since this is true for all geodesics which pass through $p$ at $t=0$, the norm of the gradient is less than or equal to $\beta$. Then since the map $p\mapsto d(p,x)$ is also smooth on $M\backslash\{x\}$, and its gradient at $p\neq x$ is known to be $\log_p(x)/d(p,x)$, whose norm is 1, the conclusion follows.

     (b) For some interval $I$ containing , take any smooth curve $\alpha:I\rightarrow M$ satisfying $\alpha(0)=p$. Then
     \begin{align*}
         \frac{d}{dt}\langle \xi_x,\log_x(\alpha(t))\rangle\bigg|_{t=0}=\langle\xi_x,d(\log_x)_p\alpha'(0)\rangle=\langle d(\log_x)_p^\dagger\xi_x,\alpha'(0)\rangle.
     \end{align*}
     The gradient of with respect to $p$ of $d(x,p)$ is known to be $\log_p(x)/d(p,x)$ when $p\neq x$, from which the first equality follows. The gradient of $d(x,p)$ can also be expressed as $d(\log_x)_p^\dagger\log_x(p)/d(p,x)$ since
     \begin{align*}
         \frac{d}{dt}d(\alpha(t),x)\bigg|_{t=0}&=\langle \log_x(p),\log_x(p)\rangle^{1/2} \\
         &=\frac{1}{2}\frac{\langle \log_x(p),d(\log_x)_p\alpha'(0)\rangle+\langle d(\log_x)_p\alpha'(0),\log_x(p)\rangle}{d(p,x)} \\
         &=\bigg\langle d(\log_x)_p^\dagger \frac{\log_x(p)}{d(p,x)},\alpha'(0)\rangle\bigg\rangle,
     \end{align*}
     from which the second equality follows.
 \end{proof}

  \begin{proof}[Proof of Theorem \ref{dbgradient}]

			Define a smooth path $c:I\rightarrow M$, where $I\subset \mathbb{R}$ is an open interval containing 0, such that $c(0)=p$, and a family of geodesic $\{\gamma_s\}_{s\in I}$ by $\gamma_s(t)=\exp_x(t\cdot\log_x(c(s)))$. Note that $J(t):=(\partial/\partial s) \gamma_s(t)|_{s=0}$ is a Jacobi field along $\gamma_0$, and $\log_x(c(s))=\gamma_s'(0)=(\partial/\partial t)\gamma_s(t)|_{t=0}$. Then by the symmetry of the covariant derivative,
   \begin{equation}\begin{aligned} \label{dbsymmet}
       \frac{D}{ds}\log_x(c(s))\bigg|_{s=0}=\frac{D}{\partial s}\bigg(\frac{\partial}{\partial t}\gamma_s(t)\bigg|_{t=0}\bigg)\bigg|_{s=0}=\frac{D}{\partial t}\bigg(\frac{\partial}{\partial s}\gamma_s(t)\bigg|_{s=0}\bigg)\bigg|_{t=0}=\frac{D}{dt}J(t)\bigg|_{t=0}.
   \end{aligned}\end{equation}
   As mentioned, Jacobi fields can be explicitly calculated on locally symmetric spaces. Take $e_i$ and $g_i$ as defined prior to the theorem. Each $e_i$ can be extended by parallel transport to a vector field $W_i$ along $\gamma_0$ taking value $W_i(t)$ at $\gamma_0(t)$ such that $W_i(1)=e_i$. Because $e_i$ and $\kappa_i$ are eigenvectors and eigenvalues, respectively, of $CO_{e_1}$, $CO_{e_1}(e_i)=\kappa_ie_i$, and then $R(W
_1(t),W_i(t))W_1(t)=\kappa_iW_i(t)$ since $R$ is parallel under parallel transport on locally symmetric spaces. Then $J(t)=\sum_{i=1}^nJ_i(t)W_i(t)$ for some smooth real-valued functions $J_i$, and it must satisfy the Jacobi equation:
   \begin{align*}
\frac{D^2}{dt^2}J(t)+R(\gamma_0'(t),J(t))\gamma_0'(t)&=\sum_{i=1}^n\{J_i''(t)W_i(t)+J_i(t)d(p,x)^2R(W_1(t),W_i(t))W_1(t)\} \\
&=\sum_{i=1}^n\{J_i''(t)+J_i(t)d(p,x)^2\kappa_i\}W_i(t) \\
&=0,
   \end{align*}
implying for each $i$ that $J_i(t)=A_if_i(t)+B_ig_i$(t), where
\begin{align*}
    f_i(t)=\begin{cases}
        \frac{1}{d(p,x)\sqrt{-\kappa_i}}\cosh(d(p,x)\sqrt{-\kappa_i}t) &\text{if $\kappa_i<0$,} \\
        1 &\text{if $\kappa_i=0$,}
    \end{cases}
\end{align*}
and $A_i,B_i$ are constants. Because $J(0)=(d/ds)\gamma_s(0)|_{s=0}=(d/dt)x=0$, we know that each $A_i=0$. On the other hand, $J(1)=(d/ds)\gamma_s(1)|_{s=0}=(d/ds)c(s)|_{s=0}=c'(0)$, so each $B_i=\langle W_i(1),c'(0)\rangle/g_i(1)$, which is $\langle \log_p(x)/d(p,x),c'(0)\rangle$ when $i=1$. Then
\begin{align*}
    \frac{D}{Dt}J(t)=\sum_{i=1}^n\frac{(d/dt)g_i(t)}{g_i(1)}\langle e_i,c'(0)\rangle W_i(t).
\end{align*}
By (\ref{dbsymmet}), the derivative of $\langle \xi_x,\log_x(c(s))\rangle$ at $s=0$ is therefore
\begin{align*}
    \bigg\langle \xi_x,\frac{D}{ds}\log_x(c(s))\bigg|_{s=0}\bigg\rangle=\bigg\langle\sum_{i=1}^n \frac{(d/dt)g_i(t)|_{t=0}}{g_i(1)}\langle \xi_x,W_i(0)\rangle e_i,c'(0)\bigg\rangle,
\end{align*}
completing the proof.

		\end{proof}

      \begin{proof}[Proof of Corollary \ref{dbhypgrad}]
         In (\ref{dbsymgrad}), each $\kappa_i=\kappa$ except for $\kappa_1$ which is 0, so 
\begin{align*}
		&\nabla \rho(x,p;\beta,\xi)   \\
  &=-\frac{\log_p(x)}{d(p,x)}-\beta\bigg(\bigg\langle\Gamma_{x\rightarrow p}(\xi_x),\frac{\log_p(x)}{d(p,x)}\bigg\rangle\frac{\log_p(x)}{d(p,x)} \\
  &\qquad+\frac{d(p,x)\sqrt{-\kappa}}{\sinh(d(p,x)\sqrt{-\kappa})}\bigg(\Gamma_{x\rightarrow p}(\xi_x)-\bigg\langle\Gamma_{x\rightarrow p}(\xi_x),\frac{\log_p(x)}{d(p,x)}\bigg\rangle\frac{\log_p(x)}{d(p,x)}\bigg) \bigg) \\
  &=-\frac{\log_p(x)}{d(p,x)}-\beta\bigg(\frac{d(p,x)\sqrt{-\kappa}}{\sinh(d(p,x)\sqrt{-\kappa})}\Gamma_{x\rightarrow p}(\xi_x) \\
  &\qquad+\bigg(\frac{d(p,x)\sqrt{-\kappa}}{\sinh(d(p,x)\sqrt{-\kappa})}-1\bigg)\bigg\langle\xi_x,\frac{\log_x(p)}{d(p,x)}\bigg\rangle\frac{\log_p(x)}{d(p,x)}\bigg).
		\end{align*}
     \end{proof}

  \end{appendix}

 \bibliographystyle{apalike}
%    Insert the bibliography data here.
\bibliography{references}

\end{document}